\documentclass[11pt,oneside]{book}

\usepackage[a4paper,margin=3cm,headheight=14pt]{geometry}
\usepackage{microtype}
\usepackage{setspace}
\usepackage[T1]{fontenc}
\usepackage[utf8]{inputenc}
\usepackage{csquotes}

\usepackage{xcolor}
\newcommand\RVBoxTitle{}

\usepackage{hyperref}
\hypersetup{
  colorlinks=true,
  linkcolor=blue!50!black,
  citecolor=blue!50!black,
  urlcolor=blue!50!black
}

\usepackage{amsmath,amssymb,mathtools}

\usepackage{cite}
\usepackage{booktabs}
\usepackage{xspace}
\usepackage{tikz}
\usetikzlibrary{calc}
\usetikzlibrary{arrows.meta, positioning, automata, calc}

\usepackage{titlesec}

\definecolor{ChapterAccent}{HTML}{1F4E79}
\definecolor{ChapterLight}{HTML}{D9E7F5}

\titleformat{\chapter}[display]
  {\normalfont\bfseries\Huge}
  {%
    \begin{tikzpicture}[baseline=(n.base)]
      \node[draw=ChapterAccent, line width=1.2pt,
            fill=ChapterLight, inner xsep=14pt, inner ysep=10pt] (n)
            {\color{ChapterAccent}\fontsize{34}{34}\selectfont \thechapter};
    \end{tikzpicture}%
  }
  {12pt}
  {\color{ChapterAccent}\Huge}
\titlespacing*{\chapter}{0pt}{-10pt}{18pt}

\usepackage{fancyhdr}
\usepackage[most]{tcolorbox}
\tcbuselibrary{breakable}
\usepackage{xparse}

\ExplSyntaxOn
\NewDocumentCommand{\RVTitleSuffix}{m}
  {
    \tl_if_blank:nF { #1 } { \space (#1) }
  }
\ExplSyntaxOff

\definecolor{DefBack}{HTML}{E8F2FF}
\definecolor{ExBack}{HTML}{EAF7EC}
\definecolor{ThmBack}{HTML}{FFF2E6}
\definecolor{NoteBack}{HTML}{F3F0FF}
\definecolor{PropBack}{HTML}{FFF2E6}
\definecolor{LemmaBack}{HTML}{F6F6F6}
\definecolor{ExerBack}{HTML}{F2F2F2}

\tcbset{%
  rvtitle/.store in=\RVBoxTitle,%
  rvtitle={},%
  rvbox/.style={%
    enhanced,%
    breakable,%
    colframe=ChapterAccent,%
    boxrule=0.6pt,%
    arc=0pt,%
    left=10pt,right=10pt,top=4pt,bottom=8pt,%
    before skip=15pt, after skip=15pt,%
    before upper={%
      \setlength{\parskip}{6pt}
      \ifx\RVBoxTitle\empty\else%
        {\sffamily\bfseries\normalsize\color{ChapterAccent}\RVBoxTitle}\par\smallskip%
        \hrule height 0.4pt\relax\vspace{6pt}%
      \fi%
    },%
  },%
  rvboxnote/.style={%
  rvbox,%
  before upper={%
    \setlength{\parskip}{6pt}%
    \ifx\RVBoxTitle\empty\else%
      {\sffamily\bfseries\normalsize\color{ChapterAccent}\RVBoxTitle}\par\smallskip%
      \hrule height 0.4pt\relax\vspace{6pt}
    \fi%
  },%
},%
}%

\newcounter{rvbox}[chapter]
\renewcommand{\thervbox}{\thechapter.\arabic{rvbox}}

\newenvironment{proof}[1][Proof]{\par\noindent\textbf{#1.}\hspace{0.5em}}{\hfill$\square$\par}

\newenvironment{solution}[1][Solution]{\par\noindent\textbf{#1.}\hspace{0.5em}}{\hfill$\square$\par}

\NewDocumentEnvironment{definition}{O{}}
{
  \refstepcounter{rvbox}%
  \begin{tcolorbox}[rvbox, colback=DefBack,
    rvtitle={Definition~\thervbox\RVTitleSuffix{#1}}]%
}
{\end{tcolorbox}}

\NewDocumentEnvironment{example}{O{}}
{
  \refstepcounter{rvbox}%
  \begin{tcolorbox}[rvbox, colback=ExBack,
    rvtitle={Example~\thervbox\RVTitleSuffix{#1}}]%
}
{\end{tcolorbox}}

\NewDocumentEnvironment{theorem}{O{}}
{
  \refstepcounter{rvbox}%
  \begin{tcolorbox}[rvbox, colback=ThmBack,
    rvtitle={Theorem~\thervbox\RVTitleSuffix{#1}}]%
}
{\end{tcolorbox}}

\NewDocumentEnvironment{proposition}{O{}}
{
  \refstepcounter{rvbox}%
  \begin{tcolorbox}[rvbox, colback=PropBack,
    rvtitle={Proposition~\thervbox\RVTitleSuffix{#1}}]%
}
{\end{tcolorbox}}

\NewDocumentEnvironment{lemma}{O{}}
{
  \refstepcounter{rvbox}%
  \begin{tcolorbox}[rvbox, colback=LemmaBack,
    rvtitle={Lemma~\thervbox\RVTitleSuffix{#1}}]%
}
{\end{tcolorbox}}

\NewDocumentEnvironment{exercise}{O{}}
{
  \refstepcounter{rvbox}%
  \begin{tcolorbox}[rvbox, colback=ExerBack,
    rvtitle={Exercise~\thervbox\RVTitleSuffix{#1}}]%
}
{\end{tcolorbox}}

\NewDocumentEnvironment{notebox}{O{}}
{
  \begin{tcolorbox}[rvboxnote, colback=NoteBack,
    rvtitle={Note\RVTitleSuffix{#1}}]%
}
{\end{tcolorbox}}

\usepackage{enumitem}
\setlist{itemsep=2pt,topsep=8pt}

\usepackage{listings}
\usepackage{stmaryrd} 

\newcommand{\sem}[1]{\llbracket #1 \rrbracket}

\newcommand{\yes}{\checkmark}
\newcommand{\no}{\ensuremath{\times}}

\newcommand{\obs}{\mathit{obs}}
\newcommand{\trace}{\mathit{trace}}

\newcommand{\Knows}{\mathsf{K}}
\newcommand{\KnowsWhether}{\mathsf{W}}
\newcommand{\Ag}{\mathit{Ag}}
\newcommand{\AP}{\mathit{AP}}
\renewcommand{\phi}{\varphi}

\newcommand{\Pref}{\mathit{Pref}^+}

\newcommand{\last}{\mathit{last}}

\newcommand{\LTL}{LTL$^\sim$\xspace}
\newcommand{\TLTL}{TLTL$^\sim$\xspace}
\newcommand{\TLTLFP}{{TLTL}\xspace}

\newcommand{\pureLTL}{LTL\xspace}

\newcommand{\formset}[1]{\bigl\{\;#1 \;\bigr\}}
\newcommand{\tmp}[1]{\mathsf{#1}}
\newcommand{\Trans}{\Delta}
\newcommand{\Npos}{\mathbb{N}_{\ge 1}}

\newcommand{\sUntil}{{\tmp{U}}^+}
\newcommand{\sSince}{{\tmp{S}}^+}

\newcommand{\sFuture}{{\tmp{F}}^+}
\newcommand{\sPast}{{\tmp{P}}^+}

\newcommand{\tverdict}{\mathsf{true}}
\newcommand{\fverdict}{\mathsf{false}}
\newcommand{\unknown}{{?}}

\newcommand{\dbound}{D}

\newcommand{\outmap}{\lambda}

\title{\textbf{Runtime Verification}\\
\Large Lecture Notes}
\author{Benedikt Bollig}
\date{\today}

\begin{document}

\frontmatter

\begin{titlepage}
\begin{tikzpicture}[remember picture,overlay]
  \fill[white] (current page.south west) rectangle (current page.north east);
  
  \node[anchor=center] at (current page.center) {
    \begin{minipage}{\textwidth}
      \centering
      
{\fontsize{36}{50}\selectfont\bfseries\color{ChapterAccent}
Runtime Verification}

\vspace{0.8cm}

{\fontsize{22}{26}\selectfont\color{ChapterAccent}
Monitoring, Knowledge, and Uncertainty}

\vspace{2cm}

{\fontsize{16}{20}\selectfont\color{black!60}
Lecture Notes}\\

\vspace{2cm}

{\fontsize{18}{22}\selectfont\color{black!75}
Benedikt Bollig}

\vspace{0.5cm}

Université Paris-Saclay, CNRS, ENS Paris-Saclay, LMF\\Gif-sur-Yvette, France

      
    \end{minipage}
  };
    
\end{tikzpicture}
\end{titlepage}

\clearpage

\begin{center}
{\Large\bfseries Abstract}
\end{center}

Runtime verification is a lightweight verification technique that complements
model checking by analyzing system executions at runtime rather than exploring a
complete system model in advance. It is particularly useful for partially
observable or black-box systems, where uncertainty can only be resolved through
observation. These lecture notes present automata-theoretic, temporal-logical,
and epistemic foundations of runtime verification. They cover specification
formalisms, diagnosis, opacity, and monitorability, and explain how offline
analysis can be used to construct monitors that operate online on observed
executions. The notes also discuss timed extensions and the additional
algorithmic and semantic challenges that arise in the real-time setting.

\bigskip
\bigskip
\bigskip

\begingroup
\let\clearpage\relax
\chapter*{Preface}
\endgroup

These notes accompany an MPRI M2 course given in the lecture series
\emph{Advanced Techniques of Verification}.
They present runtime verification from the perspective of partial information,
with a unifying treatment of diagnosis, opacity, and monitoring.
Comments and corrections are welcome at
\texttt{bollig@lmf.cnrs.fr}.

\bigskip

I am grateful to Martin Leucker for sharing his runtime verification
lecture material and for helpful feedback that significantly
improved the runtime monitoring part.
I would also like to thank the MPRI M2 students of 2025/2026 in the
lecture series \emph{Advanced Techniques of Verification} for their
participation and for their thoughtful questions and remarks, which
helped refine the presentation.

%

\tableofcontents

\mainmatter

\chapter{What Is Runtime Verification?}
\label{ch:introduction}

\section{Context}

\emph{Runtime verification} studies how to observe a running system and raise an alarm
when something goes wrong, or confirm that a behavior conforms with a specification.
It can be applied even when a system model is not available.
In this sense, it complements \emph{model checking}:
instead of exploring the full state space of a given model in advance, runtime
verification analyzes executions (also called traces) as they are generated.

A monitor observes the execution step by step and produces a verdict based on the
information available so far.
If the observed behavior violates the specification, the monitor raises an alarm;
if the behavior is guaranteed to satisfy the specification, it may confirm
correctness.

A central theme of runtime verification is \emph{uncertainty}.
The system may be only partially known, or even unknown, and the monitor typically
observes only a subset of the system’s actions or propositions.
Moreover, in reactive systems, executions are typically infinite, but at runtime
only a finite prefix is available.
As a result, verdicts may be inconclusive, reflecting the inherent limits of
runtime observation.

Runtime verification shares this focus on uncertainty with closely related
techniques such as diagnosis and opacity checking.
The goal of these notes is to present a unifying treatment of these topics.
To that end, we use epistemic logic as a common language to express what an
observer knows (and does not know) from its observations.

In these notes, we actually use the term \emph{runtime verification} in a broad sense, to
refer to observation-based verification problems such as monitoring, diagnosis,
and opacity, which all reason about what can be inferred from executions at
runtime.

\section{Model Checking, Diagnosis, Opacity, and Monitoring}

We consider reactive systems that produce infinite executions.
Throughout these notes, we assume that a (possibly abstract) model
\(\mathcal S\) of the system is available.
Depending on the amount of information encoded in this model, we
distinguish three idealized cases.

\begin{itemize}
\item \emph{White-box systems}, where the model accurately reflects the
implementation.

\item \emph{Gray-box systems}, where the model intentionally abstracts or
hides information, for instance by introducing nondeterminism to represent
unknown environment behavior, partial observability, or underspecification.

\item \emph{Black-box systems}, where no reliable model is available and the
system can only be observed through its executions.
\end{itemize}

\paragraph{Model checking.}
When a sufficiently precise model of the system is available (white box, or
controlled gray box), correctness questions can be addressed using
\emph{model checking}.
Given a temporal specification \(\varphi\), the model-checking problem asks
whether
\[
\mathcal S \models \varphi,
\]
that is, whether \(\varphi\) holds on \emph{all} infinite executions of
\(\mathcal S\).

In gray-box models, however, nondeterminism often represents uncertainty rather
than real implementation choices.
In this case, universal quantification over all executions may be too strong or
unfounded. A model may include executions that are possible in the abstraction but
impossible in reality, or behaviors that will only be resolved at runtime.
As a result, a purely offline verdict may be misleading.

\paragraph{From offline verification to runtime observation.}
An alternative is to observe executions as they occur.
At runtime, the system produces a concrete execution, but an observer
typically sees only a projection.
Based on the observed prefix, the observer reasons about the set of executions
of \(\mathcal S\) that are compatible with what has been seen so far.

This naturally leads to epistemic reasoning.
At any time, the observer may or may not \emph{know} that a given property
holds, depending on whether it holds on all executions consistent with the
current observation.

\newcommand{\formeq}{=}

\paragraph{Diagnosis.}
In diagnosis, the goal is not to decide an arbitrary specification, but to
detect the occurrence of a particular event, such as a fault.
Let \(\phi\) be a property representing such an event.

A diagnoser should eventually know that the event occurred, whenever it did.
The existence of a diagnoser is naturally expressed by the property
\[
\Phi_{\mathsf{diagnosable}}(\phi) \formeq 
\tmp{G}\bigl(\phi \to \tmp{F}\Knows_a \tmp{P}\phi\bigr),
\]
which can be read as
\begin{center}
\parbox{10cm}{
``It is globally true (\(\tmp{G}\)) that if $\phi$ holds, then eventually agent~$a$ knows that $\phi$ held in the past.''}
\end{center}
Thus, the existence of a diagnoser is a model-checking question, while the
diagnoser itself operates online on the observation stream.

In diagnosis, fault propositions are typically \emph{unobservable} to the
observer.
This does not mean that faults are absent from the system model.
Rather, faults represent internal events or mode changes that may influence the
future behavior of the system, even if they are not directly visible.
Modeling faults explicitly allows us to define correctness of diagnosis and to
reason about whether their occurrence can be inferred indirectly from
observations.

\begin{example}[Diagnosable vs. non-diagnosable systems]
\label{ex:intro-diagnosis}
Let $\mathit{AP}=\{p,r,e\}$ be a set of propositions and
$\Sigma = 2^{\mathit{AP}}$.
A letter $\sigma\in\Sigma$ specifies which propositions hold
at a discrete time step. It is also referred to as an \emph{event}.
Think of $e$ as ``an error occurred''.
An execution (or trace) is an infinite sequence in $\Sigma^\omega$
(i.e., it is an infinite word over $\Sigma$),
and the behavior of a system $\mathcal S$ is given by its language
$L(\mathcal S)\subseteq\Sigma^\omega$.

Given an event with letter $\sigma \in \Sigma$,
agent $a$ observes only $\sigma\cap\AP_a$, where
$\AP_a=\{p,r\}$, i.e., the error proposition $e$ is unobservable.

Consider the four systems $\mathcal S_1$, $\mathcal S_2$, $\mathcal S_3$, and $\mathcal S_4$
with the following languages:
\begin{align*}
L(\mathcal S_1) &= \{p\}^\omega + \{p\}^\ast \{p,e\}\{p\}\{r\}^\omega\\
L(\mathcal S_2) &= \{p\}^\omega + \{p\}^\ast \{p,e\}\{p\}^\omega\\
L(\mathcal S_3) &= \{p\}^\omega + \{p\}^\ast \{e\}\{p\}^\omega\\
L(\mathcal S_4) &= \{p\}^\omega + \{p\}^\ast \{p,e\}\{p\}^\ast\{r\}^\omega
\end{align*}
where we use $\alpha\beta^\omega$ as shorthand for the corresponding
$\omega$-language.

For example, the infinite word
$\{p\}\{p\}\{p,e\}\{p\}\{r\}\{r\}\{r\}\ldots$
is a possible trace of $\mathcal S_1$.
Agent $a$ observes the projection
$\{p\}\{p\}\{p\}\{p\}\{r\}\{r\}\{r\}\ldots$.
Since observing $r$ eventually allows $a$ to infer that an error has
occurred, $\mathcal S_1$ is diagnosable for the property $\phi=e$.

In contrast, $\mathcal S_2$ is not diagnosable: faulty and non-faulty
executions are observationally indistinguishable, as $a$ observes
only sequences of $\{p\}$ in both cases.

The system $\mathcal S_3$ is again diagnosable.
Although the error proposition $e$ is unobservable and all observations after the
error are $\{p\}$, the error occurs at a position where the observer sees
$\emptyset$, whereas non-faulty executions produce $\{p\}$ from the first step.
Under the synchronous perfect-recall semantics, the observer remembers the number
of steps, so these prefixes are distinguishable.
Hence agent $a$ can infer that an error has occurred, and $\mathcal S_3$ is
diagnosable.

The system $\mathcal S_4$ is also diagnosable.
Indeed, on every faulty execution, the symbol $\{r\}$ eventually occurs (and then
persists forever), whereas on non-faulty executions only $\{p\}$ is observed.
Thus, once $\{r\}$ is observed, agent $a$ can conclude that an error has occurred.
However, $\mathcal S_4$ illustrates that \emph{unbounded} diagnosis does not imply
a \emph{uniform} delay bound: after an error, the system may produce $\{p\}$ for an
arbitrarily long finite time before switching to $\{r\}^\omega$.
Therefore, for every $D\in\mathbb N$ there exists a faulty execution in which
agent $a$ cannot yet infer the error within $D$ further steps.

We conclude that
\[
\mathcal S_1 \models \Phi_{\mathsf{diagnosable}}(e)
\quad\text{and}\quad
\mathcal S_3 \models \Phi_{\mathsf{diagnosable}}(e)
\quad\text{and}\quad
\mathcal S_4 \models \Phi_{\mathsf{diagnosable}}(e),
\]
whereas
\[
\mathcal S_2 \not\models \Phi_{\mathsf{diagnosable}}(e).
\]
Note that only $\mathcal S_1$ and $\mathcal S_3$ satisfy a bounded form
of diagnosability: $\tmp{G}(e \to \tmp{X}^{D}\Knows_a\tmp{P}e)$ for suitable fixed $D$,
where $\tmp{X}^{D}$ means \emph{after} $D$ \emph{steps}.
\end{example}

\paragraph{Opacity.}
Opacity addresses the dual problem of diagnosis.
Here, the secret is not a current-state fact, but the \emph{occurrence of an event}
at some point in the execution.
The goal of opacity is to ensure that the observer can never infer that this
event has occurred.

Accordingly, we define opacity with respect to a past property.
Given a secret event $\phi$, an epistemic opacity requirement is
\[
\Phi_{\mathsf{opaque}}(\phi)
=
\tmp{G}\,\neg \Knows_a \tmp{P}\phi .
\]
Thus, opacity requires that at all times the observer does not know that the
secret event has occurred in the past.
Note that this does not prevent the observer from sometimes knowing that the
secret event has not occurred.

As before, this is a model-checking problem on the system~$\mathcal S$.

\begin{example}[Past opacity]
Let $\mathit{AP}=\{p,s\}$ and $\Sigma=2^{\mathit{AP}}$.
Agent $a$ observes only $p$, i.e., $\AP_a=\{p\}$. The proposition $s$ represents
a secret internal event and is unobservable.

Consider the system $\mathcal S$ with language
\[
L(\mathcal S)
= \{p\}^\omega + \{p\}^\ast\{p,s\}\{p\}^\omega .
\]

Let the secret event be $\phi = s$.
Then $\mathcal S$ is opaque with respect to agent $a$ and $\phi$, since for
every observation prefix consisting only of $\{p\}$, there exists an
indistinguishable execution in which $s$ has not occurred.
Hence,
\[
\mathcal{S} \models \Phi_{\mathsf{opaque}}(\phi).
\]
\end{example}

%
%
%
%

\paragraph{Monitoring.}
A monitor observes a finite prefix of an execution and produces a verdict based on
the information available so far.
In the presence of partial observability, such verdicts naturally correspond to
what the observer can infer from the observations.
Accordingly, we model monitoring using three-valued verdicts derived from
epistemic reasoning.

Let \(\varphi\) be a temporal property interpreted at the initial position of the
execution, and let
\(\hat\varphi = \tmp{P}(\mathit{first} \wedge \varphi)\).
Given an observation prefix, the monitor produces:
\begin{itemize}
\item the verdict \(\tverdict\) if \(\Knows_a \hat\varphi\) holds, i.e., the
observer knows that \(\varphi\) holds at the initial position;
\item the verdict \(\fverdict\) if \(\Knows_a \neg\hat\varphi\) holds, i.e.,
the observer knows that \(\varphi\) does not hold;
\item the verdict \(\unknown\) otherwise, reflecting remaining uncertainty.
\end{itemize}
Monitoring thus tracks whether the observer has obtained sufficient
knowledge to decide the truth of \(\varphi\).

\begin{example}[A simple epistemic monitor]
Let $\AP=\{p,r,e\}$ and $\Sigma=2^{\mathit{AP}}$.
As before, think of $e$ as ``an error occurred''.
We consider a single agent $a$ that can observe all propositions, i.e.,
$\AP_a=\mathit{AP}$.

Consider the system $\mathcal S$ with
\[
L(\mathcal S)
= \{p\}^\ast\{r\}\{p\}^\omega + \{p\}^\ast\{e\}\{p\}^\omega.
\]
That is, after some (possibly empty) prefix of $\{p\}$, exactly one of $\{r\}$ or
$\{e\}$ occurs, and then only $\{p\}$ forever.

Let the property be $\varphi = \neg \tmp{F}e$, expressing that no error ever occurs.
As long as the observer sees only $\{p\}$, both executions in which an error will
occur and executions in which no error occurs remain compatible with the
observations.
Hence, neither $\Knows_a \hat\varphi$ nor $\Knows_a \neg\hat\varphi$ holds, and the
monitor outputs $\unknown$.

Upon observing the first non-$\{p\}$ letter, epistemic uncertainty is resolved.
If the letter is $\{e\}$, then the property is violated, $\Knows_a \neg\hat\varphi$
holds, and the monitor outputs $\fverdict$.
If the letter is $\{r\}$, then it is guaranteed that no error will ever occur,
$\Knows_a \hat\varphi$ holds, and the monitor outputs $\tverdict$.
\end{example}

\paragraph{A unifying perspective.}
Model checking, runtime verification, diagnosis, and opacity can all be
analyzed within a common epistemic temporal framework over the same
system model.
The key difference lies in \emph{when} and \emph{how} the result is used.

Model checking provides an offline verdict about all executions.
Diagnosis and opacity are typically formulated as epistemic temporal properties
whose satisfaction can be decided offline.
Monitoring, in contrast, produces verdicts online from finite observations, but
relies on similar epistemic reasoning about what an observer can infer.

In this sense, runtime verification does not replace offline analysis.
Instead, offline analysis is used to construct monitors, diagnosers, or observers,
which are then applied at runtime to the observed execution.

\section{Take-Away}

\begin{itemize}
\item Runtime verification reasons about system correctness from \emph{observed
executions}, rather than from complete system models.

\item Partial observation and infinite executions introduce uncertainty, which
can be captured naturally using epistemic notions of knowledge.

\item Model checking, diagnosis, opacity, and monitoring can all be analyzed using
epistemic temporal reasoning over the same abstract system model.

\item Diagnosis and opacity express availability or impossibility of knowledge,
while monitoring general properties raises questions about when
definitive verdicts can be reached.

\item Although monitors operate online, their existence and correctness are
typically established by an \emph{offline analysis} of the system model.
\end{itemize}

\section{Bibliographic Notes}

Runtime verification has been extensively studied as a lightweight alternative
to exhaustive model checking, in particular for linear-time temporal logics
\cite{BartocciFFR18,LeuckerS09,BauerLS11}.
For background on model checking and temporal logics, we refer to the standard
textbook~\cite{BK-model-checking-2008}.
Diagnosis and opacity originate from the discrete-event systems community, where
they are traditionally formulated in terms of observation-based
distinguishability of behaviors
\cite{SampathSLST95,LafortuneLH18}.
Opacity has also been studied on transition systems and from an
information-flow perspective \cite{BryansKMR08}.
Epistemic logic provides a unifying formalism for reasoning about such
distinguishability, by making explicit what an observer can infer from partial
observations.
The interaction between temporal and epistemic operators has been studied in the context of knowledge and time under perfect-recall semantics (e.g.,~\cite{FHMV1995}). In these notes, we make use of this connection and adopt epistemic temporal logic as a common language to relate monitoring, diagnosis, and opacity in a runtime setting.

\chapter{Observation, Knowledge, and Uncertainty}

\section{Words, Languages, and Moore Machines}

We let $\mathbb{N} = \{0, 1, 2, 3, \ldots\}$ and $\Npos = \{1, 2, 3, \ldots\}$.

A possible behavior of a reactive system is naturally viewed as a sequence of observations
over time.
Such behaviors are represented as \emph{words}, also called \emph{traces}, which
may be finite or infinite sequences over a given alphabet.

\begin{definition}[Words]
Let $\Sigma$ be a finite alphabet.
A \emph{finite word} over
$\Sigma$ is a sequence $u=\sigma_1 \ldots \sigma_n$ with $n \in \mathbb{N}$ and $\sigma_i \in \Sigma$ for $i \in \{1,\ldots,n\}$.
We let $|u|$ refer to the \emph{length} $n$ of $u$.
If $n=0$, then $u$ is the empty word, denoted by $\varepsilon$.
By $\Sigma^\ast$, we denote the set of all finite words
over $\Sigma$. The set of nonempty finite words is denoted by
$\Sigma^+ = \Sigma^\ast \setminus \{\varepsilon\}$.

\medskip

An \emph{infinite word} over $\Sigma$ is a sequence $w=\sigma_1 \sigma_2 \sigma_3 \ldots$ with
$\sigma_i \in \Sigma$ for $i \in \Npos$.
The set of infinite words is denoted by $\Sigma^\omega$.
Moreover, we let $\Sigma^\infty = \Sigma^\ast \cup \Sigma^\omega$.
\end{definition}

Given $w=\sigma_1 \sigma_2 \sigma_3 \ldots \in \Sigma^\omega$ and
$k \in \mathbb{N}$, we let $w_{\le k}$ refer to the finite prefix
$\sigma_1 \ldots \sigma_k \in \Sigma^\ast$
of length $k$. Note that $w_{\le 0} = \varepsilon$.
For a finite word
$u = \sigma_1 \ldots \sigma_n \in \Sigma^\ast$ and $k \in \{0,\ldots,n\}$,
the prefix of length $k$ of $u$ is
defined accordingly, as $u_{\le k} = \sigma_1 \ldots \sigma_k$.

A \emph{language} is a set of words. We mainly consider languages $L \subseteq \Sigma^\ast$
over finite words, and languages $L \subseteq \Sigma^\omega$ over infinite words.
For $L \subseteq \Sigma^\infty$, we let
$\Pref(L)$ refer to $\{w_{\le k} \mid w \in L,\ k \in \Npos,\ \text{and if } w \in \Sigma^\ast,\ \text{then } k \le |w|\}$, i.e., the set of \emph{nonempty} finite prefixes of words in $L$.

\begin{example}[A simple word]\label{ex:simple-trace}
Let \(\AP = \{p,r\}\), i.e., $\Sigma = \{\emptyset, \{p\}, \{r\}, \{p, r\}\}$.
Then, $\{p\}\emptyset\{p\}\emptyset\{p\}\emptyset \ldots \in \Sigma^\omega$ is an infinite word.
Moreover, \(\{p\}\{p,r\}\emptyset \in \Sigma^\ast\) is a finite word of length \(3\).
\end{example}

While we will later define Büchi automata as a central automata model running over
infinite words, we already introduce deterministic automata with output, also
known as Moore machines.
These will play a crucial role as runtime monitors and as building blocks in
constructions over Büchi automata.

\begin{definition}[Moore machine]
A \emph{Moore machine} over an input alphabet $\Sigma$ and an output alphabet
$\Gamma$ is a tuple
\[
\mathcal{A}=(Q, \iota, \delta, \outmap)
\]
where
\begin{itemize}
  \item $Q$ is a finite set of \emph{states},
  \item $\iota\in Q$ is the \emph{initial state},
  \item $\delta\colon Q \times \Sigma \to Q$ is a \emph{transition function},
  \item $\outmap\colon Q \to \Gamma$ is an \emph{output mapping}.
\end{itemize}
\end{definition}

We extend $\delta$ inductively to $\delta^\ast\colon Q \times \Sigma^\ast \to Q$
by $\delta^\ast(q,\varepsilon)=q$ and
$\delta^\ast(q,w\sigma)=\delta(\delta^\ast(q,w),\sigma)$ for all
$q\in Q$, $w\in\Sigma^\ast$, and $\sigma\in\Sigma$.
As its semantics, the Moore machine $\mathcal{A}$ defines the following mapping:
\[
\sem{\mathcal{A}}\colon
\begin{cases}
\Sigma^\ast \to \Gamma\\
u \mapsto \outmap(\delta^\ast(\iota,u))
\end{cases}
\]
In other words, it scans the input word and outputs the symbol associated with the
resulting state.

\medskip
\noindent
\textbf{Special case: deterministic finite automata.}
For $\Gamma=\{0,1\}$, a Moore machine
$\mathcal A=(Q,\iota,\delta,\outmap)$ can be viewed as a deterministic finite
automaton over $\Sigma$ by identifying the set of accepting states as
\[
F=\{\,q\in Q \mid \outmap(q)=1\,\}.
\]
In this case, we will equivalently write $\mathcal A=(Q,\iota,\delta,F)$ and
understand it as a recognizer for the language
\[
L(\mathcal A)=\{\,u\in\Sigma^\ast \mid \delta^\ast(\iota,u)\in F\,\}.
\]

\section{Propositions, Agents, and Partial Observability}

In our setting, the behavior of a reactive system is modeled as an infinite word
over an alphabet whose letters indicate which atomic propositions hold at each
time step.
A single position of such a word, together with its associated set of
propositions, is called an \emph{event}.

In many systems, an observer cannot see all aspects of an event.
Instead, each agent has access only to a subset of the atomic propositions.
As a consequence, different system executions may be indistinguishable from the
point of view of a given agent.
This form of partial observability is formalized by projecting global executions
onto the propositions visible to the agent and will later form the basis of our
epistemic semantics.

Throughout these notes, unless stated otherwise, we fix a finite set of atomic
propositions $\AP$ and let $\Sigma = 2^{\AP}$ be the corresponding global alphabet.
We also fix a finite set $\Ag$ of \emph{agents}.
For each agent $a\in\Ag$, let $\AP_a\subseteq\AP$ denote the set of propositions
observable by $a$, and let $\Sigma_a=2^{\AP_a}$ be the corresponding observation
alphabet.

\begin{definition}[Observable propositions]\label{def:observable-props}
The observation function for agent $a$ is
\[
  \obs_a \colon
  \begin{cases}
  \Sigma \to \Sigma_a\\
  \sigma \mapsto \sigma \cap \AP_a
  \end{cases}
\]
It is extended pointwise to words in $\Sigma^\ast$ and $\Sigma^\omega$.
Note that $\obs_a$ is then length-preserving and that $\obs_a(\varepsilon) = \varepsilon$.
\end{definition}

The observation function formalizes the local view of an agent at a single step. Given a global event, the agent retains only the information carried by the propositions it can observe and ignores the rest. Extending observations from single events to words allows us to compare entire executions from the agent's perspective. 

\begin{example}[Observation of words]
Let $\mathit{AP}=\{p,r,e\}$ and let agent $a$ observe only
$\AP_a=\{p,r\}$.

Consider the finite word
\[
u = \{p\}\{p,e\}\{r\} \in \Sigma^\ast.
\]
Then
\[
\obs_a(u) = \{p\}\{p\}\{r\}.
\]

If an event contains only unobservable propositions, the observation is empty.
For example, for
\[
u' = \{p\}\{e\}\{r\} \in \Sigma^\ast,
\]
we have
\[
\obs_a(u') = \{p\}\emptyset\{r\}.
\]
\end{example}

Using observations, we can now formalize when an agent is unable to distinguish between two executions. Intuitively, two executions are indistinguishable for an agent if they give rise to exactly the same sequence of observations.

\begin{definition}[Indistinguishability for finite words]
\label{def:indist}
Two finite words $u, u' \in \Sigma^\ast$ are 
\emph{indistinguishable to agent $a$}, written $u \sim_a u'$, if they have 
the same observation for agent $a$:
\[
  u \sim_a u'
  \quad\text{if}\quad
  \obs_a(u) = \obs_a(u')
\]
In particular, indistinguishable words are of the same length.
\end{definition}

Note that indistinguishability is an equivalence relation on finite words.

\begin{example}[Indistinguishable finite words]
\label{ex:indist-finite}
Let $\AP=\{p,r,e\}$ and let agent $a$ observe
$\AP_a=\{p,r\}$, as in Example~\ref{ex:intro-diagnosis}.
Consider the finite words
\[
u = \{p\}\{p,e\}\{p\}
\qquad\text{and}\qquad
u' = \{p\}\{p\}\{p\}.
\]
Since the proposition $e$ is unobservable for agent~$a$, we have
\[
\obs_a(u) = \obs_a(u') = \{p\}\{p\}\{p\}.
\]
Hence, $u \sim_a u'$.
\end{example}

When reasoning about properties at a specific time point, it is not sufficient to compare complete executions. We therefore lift indistinguishability to pairs consisting of an execution and a position. Two such pairs are indistinguishable if the agent has seen the same observations up to that position.

\begin{definition}[Indistinguishability for pointed infinite words]
\label{def:pointed-word}
A \emph{pointed word} is a pair $(w, k)$ where $w \in \Sigma^\omega$
and $k$ is a position in $w$, i.e., $k \in \Npos$.

Two pointed words $(w,k)$ and $(w',k)$ are 
\emph{indistinguishable to agent $a$} (perfect recall), written 
$(w,k)\sim_a(w',k)$, if their prefixes up to position $k$ are indistinguishable:
\[
  (w,k)\sim_a(w',k)
  \quad\text{if}\quad
  w_{\le k} \sim_a w'_{\le k}
\]
\end{definition}


\begin{notebox}[Synchronous vs. asynchronous observation]
In these notes, we consider a \emph{synchronous} observation semantics: system progress
is observable by all agents. Formally, at each step, every agent receives an
observation, possibly empty. Equivalently, one may assume a distinguished atomic
proposition $\mathit{tick}$ that holds at every step and is observable by all agents.
Thus, while a transition may be uninformative, it is never (epistemically) silent.

Under this assumption, indistinguishability is naturally defined between prefixes
of equal length, as in Definition~\ref{def:indist}.

\medskip
An alternative \emph{asynchronous} observation semantics can be obtained by letting
agents observe only steps carrying nonempty observations. Formally, one replaces
\[\obs_a \colon \Sigma^\ast \to \Sigma_a^\ast\] by the projection that erases events
$\sigma$ with \[\obs_a(\sigma)=\emptyset\] and accordingly extend this to indistinguishability by
\[
(w,k)\sim_a(w',k')
\quad\Longleftrightarrow\quad
\obs_a(w_{\le k})=\obs_a(w'_{\le k'}) \, .
\]
In this case, agents may be uncertain about the number of system steps or the
elapsed time.

All epistemic definitions and constructions presented in these notes extend
to this asynchronous variant by replacing the observation
function accordingly. For simplicity, we focus on the synchronous case.
\end{notebox}

\section{A Base System Model}

Systems are characterized by their set of infinite traces, i.e.,
the infinite words they possibly generate.
These can, for example, be described in terms of $\omega$-regular expressions (cf.~Chapter~\ref{ch:introduction})
or in terms of automata.
For an algorithmic treatment and also from a modeling point of view,
it is convenient to use Büchi automata.
They provide a robust finite representation of infinite-state behavior at the
level of traces.
Moreover, they are particularly well suited to the constructions developed in
the following sections.

\begin{definition}[Büchi automaton]\label{def:base-model}
A \emph{B\"uchi automaton} over $\Sigma$ is a tuple
\[\mathcal A=(Q,\iota,\Trans,F)\] where
\begin{itemize}
  \item \(Q\) is a finite set of \emph{states} and \(\iota \in Q\) is an \emph{initial state},
  \item \(\Trans \subseteq Q \times \Sigma \times Q\) is a \emph{transition relation},
  \item \(F\subseteq Q\) is a set of \emph{accepting states}.
\end{itemize}
\end{definition}

Let $q, q' \in Q$ be states.
For $\sigma \in \Sigma$, we write $q \xrightarrow{\sigma} q'$ if $(q, \sigma, q') \in \Trans$.
A \emph{run} of $\mathcal{A}$ is an infinite sequence of the form 
$q_0 \xrightarrow{\sigma_1} q_1 \xrightarrow{\sigma_2} q_2 \xrightarrow{\sigma_3} \ldots$
such that $q_0 = \iota$.

A \emph{run from $q$} (or $q$ run) is defined analogously, except that $q_0=q$.
However, unless stated otherwise, the term \emph{run} always
refers to a run starting in the initial state.

We let $\pi(i)$ refer to $q_i$ for all $i \in \mathbb{N}$.
Run $\pi$  is called \emph{accepting} if $q_i \in F$ for infinitely many $i \in \mathbb{N}$.
The \emph{induced trace} is $\sigma_1 \sigma_2 \sigma_3 \ldots \in \Sigma^\omega$, denoted by $\trace(\pi)$.
We also say that $\pi$ is a run \emph{over} $w$.

The \emph{language} of $\mathcal{A}$ is defined as
\[
L(\mathcal A) = \formset{\trace(\pi) \mid \pi \text{ is an accepting run of } \mathcal A }.
\]

It is convenient to also introduce finite runs. A \emph{finite run} of $\mathcal{A}$
is a finite sequence of the form 
$\pi = q_0 \xrightarrow{\sigma_1} q_1 \xrightarrow{\sigma_2} \ldots \xrightarrow{\sigma_n} q_n$,
where $n \in \mathbb{N}$, such that $q_0 = \iota$.
We also set $\pi(i) = q_i$ for all $i \in \{0,\ldots, n\}$.
Moreover, we let $\trace(\pi) = \sigma_1 \ldots \sigma_n$, and we say that
$\pi$ is a run over $\trace(\pi)$.
We do not define acceptance for finite runs.

For an automaton $\mathcal A$, we write $|\mathcal A|$ for the number
$|Q|$ of its states, which we take as a measure of its size.

\begin{notebox}
For technical convenience in some constructions, we also use \emph{generalized Büchi 
automata} (GBAs), where the acceptance condition is given by a set of accepting sets 
$\mathcal{F} = \{F_1, \ldots, F_m\}$ with $F_i \subseteq Q$. A run is accepting if, for each $i \in \{1,\ldots,m\}$, 
it visits some state from $F_i$ infinitely often. Every GBA can be converted to a 
standard Büchi automaton with at most $m \cdot |Q|$ states (cf.~\cite{BK-model-checking-2008}).
\end{notebox}

\newcommand{\Obs}{\mathit{Obs}}

We define
\[
\Obs_a(\mathcal A) = \bigl\{\,\obs_a(u) \mid u \in \Pref(L(\mathcal A))\,\bigr\}
\]
for the set of all (nonempty) observation sequences of agent~$a$ that are compatible
with some finite execution of $\mathcal A$.

\begin{notebox}
System states are not observed directly.
Instead, the observer sees only the projected execution trace over the observation
alphabet~\(\Sigma_a\).
Classical runtime monitoring corresponds to a maximally permissive special case,
where the system model consists of a single (accepting) state with a self-loop for every
event~\(\sigma \in \Sigma\).
\end{notebox}

\newcommand{\wprop}{\mathit{wr}}
\newcommand{\rprop}{\mathit{rd}}

\begin{example}[Four trace languages under partial observation]\label{ex:four-languages}
Let us revisit Example~\ref{ex:intro-diagnosis} and provide
Büchi automata for the languages considered there.
Recall that $\AP=\{p,r,e\}$ and $\Sigma=2^{\AP}$.
A letter $\sigma\in\Sigma$ specifies which propositions hold at a discrete time
step.
Fix an agent $a$ with $\AP_a=\{p,r\}$, so the proposition $e$ is unobservable.

We consider four systems $\mathcal S_1,\mathcal S_2,\mathcal S_3,\mathcal S_4$
given by the following languages:
\begin{align*}
L(\mathcal S_1) &= \{p\}^\omega + \{p\}^\ast \{p,e\}\{p\}\{r\}^\omega
\\
L(\mathcal S_2) &= \{p\}^\omega + \{p\}^\ast \{p,e\}\{p\}^\omega
\\
L(\mathcal S_3) &= \{p\}^\omega + \{p\}^\ast \{e\}\{p\}^\omega
\\
L(\mathcal S_4) &= \{p\}^\omega + \{p\}^\ast \{p,e\}\{p\}^\ast\{r\}^\omega
\end{align*}

Each language is generated by the following (small) B\"uchi automaton
(the initial state has an incoming arrow and
final states are depicted as a double circle):

\begin{center}
\begin{tikzpicture}[
  ->, >=stealth, node distance=2.8cm,
  state/.style={circle, draw=ChapterAccent, thick, minimum size=0.95cm, inner sep=1pt},
  initial/.style={state, fill=ChapterLight},
  accepting/.style={state, double, double distance=1.4pt},
  every edge/.style={draw=ChapterAccent, thick},
  every node/.style={font=\small}
]
\node[initial,accepting] (i1) {$q_0$};
\node[accepting] (m1) [right of=i1] {$q_1$};
\node[accepting] (r1) [right of=m1] {$q_2$};

\draw[thick, ChapterAccent, <-] (i1) -- ++(-0.8,0);

\path
(i1) edge[loop above] node {$\{p\}$} (i1)
     edge node[above] {$\{p,e\}$} (m1)
(m1) edge node[above] {$\{p\}$} (r1)
(r1) edge[loop above] node {$\{r\}$} (r1);
\end{tikzpicture}

\smallskip
(a) Automaton for $\mathcal S_1$
\end{center}

\begin{center}
\begin{tikzpicture}[
  ->, >=stealth, node distance=2.8cm,
  state/.style={circle, draw=ChapterAccent, thick, minimum size=0.95cm, inner sep=1pt},
  initial/.style={state, fill=ChapterLight},
  accepting/.style={state, double, double distance=1.4pt},
  every edge/.style={draw=ChapterAccent, thick},
  every node/.style={font=\small}
]
\node[initial,accepting] (i2) {$q_0$};
\node[accepting] (m2) [right of=i2] {$q_1$};

\draw[thick, ChapterAccent, <-] (i2) -- ++(-0.8,0);

\path
(i2) edge[loop above] node {$\{p\}$} (i2)
     edge node[above] {$\{p,e\}$} (m2)
(m2) edge[loop above] node {$\{p\}$} (m2);
\end{tikzpicture}

\smallskip
(b) Automaton for $\mathcal S_2$
\end{center}

\begin{center}
\begin{tikzpicture}[
  ->, >=stealth, node distance=2.8cm,
  state/.style={circle, draw=ChapterAccent, thick, minimum size=0.95cm, inner sep=1pt},
  initial/.style={state, fill=ChapterLight},
  accepting/.style={state, double, double distance=1.4pt},
  every edge/.style={draw=ChapterAccent, thick},
  every node/.style={font=\small}
]
\node[initial,accepting] (i3) {$q_0$};
\node[accepting] (m3) [right of=i3] {$q_1$};

\draw[thick, ChapterAccent, <-] (i3) -- ++(-0.8,0);

\path
(i3) edge[loop above] node {$\{p\}$} (i3)
     edge node[above] {$\{e\}$} (m3)
(m3) edge[loop above] node {$\{p\}$} (m3);
\end{tikzpicture}

\smallskip
(c) Automaton for $\mathcal S_3$
\end{center}

\begin{center}
\begin{tikzpicture}[
  ->, >=stealth, node distance=2.8cm,
  state/.style={circle, draw=ChapterAccent, thick, minimum size=0.95cm, inner sep=1pt},
  initial/.style={state, fill=ChapterLight},
  accepting/.style={state, double, double distance=1.4pt},
  every edge/.style={draw=ChapterAccent, thick},
  every node/.style={font=\small}
]
\node[initial,accepting] (q0) {$q_0$};
\node[state]             (q1) [right of=q0] {$q_1$};
\node[accepting]         (q2) [right of=q1] {$q_2$};

\draw[thick, ChapterAccent, <-] (q0) -- ++(-0.8,0);

\path
(q0) edge[loop above] node {$\{p\}$} (q0)
     edge node[above] {$\{p,e\}$} (q1)
(q1) edge[loop above] node {$\{p\}$} (q1)
     edge node[above] {$\{r\}$} (q2)
(q2) edge[loop above] node {$\{r\}$} (q2);
\end{tikzpicture}

\smallskip
(d) Automaton for $\mathcal S_4$
\end{center}

\end{example}

\section{Epistemic Linear-Time Temporal Logic}

We now formalize the epistemic temporal logic outlined in the introduction.
The logic allows us to specify temporal properties of systems under
partial observability, and to reason explicitly about what agents know
based on their observations.
In particular, it provides a logical framework for formalizing
diagnosability or opacity properties.

\begin{definition}[Epistemic LTL]
Formulas from \emph{Epistemic Linear-time Temporal Logic}, denoted by \LTL, are given by
the following grammar:
\[
\begin{array}{rcl}
\varphi &::=& p \;\mid\; \neg \varphi \;\mid\; \varphi \land \varphi
			\;\mid\; \phi\sUntil\phi \;\mid\; \phi \sSince \phi \;\mid\; \Knows_a \varphi 
\end{array}
\]
where $p \in \AP$ is a proposition and $a \in \Ag$ is an agent.

The fragment without epistemic operators is denoted by \pureLTL.
It still contains both future and past modalities.
\end{definition}

Formulas are interpreted over pointed words.
Recall that a pointed word consists of an execution together with a distinguished
reference position (cf.\ Definition~\ref{def:pointed-word}).
This position separates the finite prefix that has already been
observed from the infinite future that is yet to unfold.
Both temporal and epistemic operators are evaluated relative to this
reference position.

\begin{definition}[Satisfaction]
Let $\mathcal S$ be a Büchi automaton over $\Sigma$ (representing a system).
Let $w = \sigma_1 \sigma_2 \sigma_3 \ldots \in L(\mathcal S)$ and $k \in \Npos$.
Satisfaction $\mathcal{S}, w, k \models \phi$ for a formula $\phi$ is
inductively defined as follows:

\begin{itemize}
  \item $\mathcal{S}, w, k \models p$ if $p \in \sigma_k$
  
  \item $\mathcal{S}, w, k \models \neg \phi$ if $\mathcal{S}, w, k \not\models \phi$

  \item $\mathcal{S}, w, k \models \phi \wedge \psi$ if $\mathcal{S}, w, k \models \phi$ and $\mathcal{S}, w, k \models \psi$
    
  \item $\mathcal{S}, w, k \models \phi\sUntil\psi$ if
        there exists $j$ with $k < j$ such that
        \begin{itemize}
        \item $\mathcal{S}, w, j \models \psi$ and
        \item for all $i$ with $k < i < j$, we have $\mathcal{S}, w, i \models \phi$
        \end{itemize}
  
  \item $\mathcal{S}, w, k \models \phi \sSince \psi$ if
        there exists $j$ with $1 \le j < k$ such that
        \begin{itemize}
        \item $\mathcal{S}, w, j \models \psi$ and
        \item for all $i$ with $j < i < k$, we have $\mathcal{S}, w, i \models \phi$
        \end{itemize}
        
  \item $\mathcal{S}, w, k \models \Knows_a\varphi$ if for all $w' \in L(\mathcal{S})$ with
        $(w,k) \sim_a (w',k)$, we have $\mathcal{S}, w', k \models \varphi$
\end{itemize}

Moreover, we write $\mathcal S \models \varphi$ if, for every $w \in L(\cal S)$,
we have $\mathcal S,w,1 \models \varphi$.
\end{definition}

\begin{notebox}[Strict temporal operators]
We take the \emph{strict} temporal operators $\sUntil$ and $\sSince$ as primitives.
Thus, $\phi \sUntil \psi$ requires $\psi$ to hold at a strictly future
position, and $\phi \sSince \psi$ at a strictly past one.
The left-hand side $\phi$ is required to hold only at positions strictly between
the current position and the witnessing position. Its truth at the current position
is irrelevant. This avoids separate next and previous operators as primitives, since they can be
defined using strict until and since.
It also anticipates the timed setting.
In timed logics, strictness is useful to constrain the elapsed time to the closest
future or past occurrence of a subformula.
Using strict operators already in the untimed logic yields a uniform semantics and
simplifies the timed extension.
\end{notebox}

\paragraph{Derived temporal operators.}
Using the strict operators $\sUntil$ and $\sSince$, we introduce the following
abbreviations, which include the standard non-strict operators:
\[
\begin{array}{rcl@{\qquad}rcl}
\top &:=& p \lor \neg p & 
\bot &:=& \neg\top
\\[0.8ex]
\varphi \lor \psi &:=& \neg(\neg\varphi \land \neg\psi) &
\varphi \to \psi &:=& \neg \varphi \lor \psi
\\[0.8ex]
\tmp{X}\varphi &:=& \bot \sUntil \varphi &
\tmp{Y}\varphi &:=& \bot \sSince \varphi
\\[0.8ex]
\varphi \tmp{U} \psi &:=& \psi \lor (\varphi \land (\varphi \sUntil \psi)) &
\varphi \tmp{S} \psi &:=& \psi \lor (\varphi \land (\varphi \sSince \psi))
\\[0.8ex]
\tmp{F}\varphi &:=& \top \tmp{U} \varphi &
\tmp{P}\varphi &:=& \top \tmp{S} \varphi
\\[0.8ex]
\tmp{G}\varphi &:=& \neg\tmp{F}\neg\varphi &
\tmp{H}\varphi &:=& \neg\tmp{P}\neg\varphi
\\[0.8ex]
\mathit{first} &:=&
\neg\tmp{Y}\top
&
\tmp{W}_a \varphi &:=&
\Knows_a \varphi \lor \Knows_a \neg \varphi
\end{array}
\]
Here, $a$ is an agent and $p \in \AP$ any arbitrary proposition.
In particular, the formula $\tmp{W}_a\varphi$ denotes that $a$ knows \emph{whether} $\phi$ holds.

In particular, $\tmp{X}\phi$ requires that $\phi$ holds at the immediately
next position, while $\tmp{Y}\phi$ requires that an immediately preceding
position exists and that $\phi$ holds at that position.

\paragraph{Bounded temporal operators.}
For runtime verification and diagnosability, we also introduce bounded eventually and 
past operators. For $\dbound \in \mathbb{N}$:
\begin{align*}
\tmp{F}^{\le \dbound}\phi &\;:=\; \phi \lor \tmp{X}\phi \lor \tmp{X}^2\phi \lor \cdots \lor \tmp{X}^\dbound\phi 
  && \text{(eventually within $\dbound$ steps)} \\
\tmp{P}^{\le \dbound}\phi &\;:=\; \phi \lor \tmp{Y}\phi \lor \tmp{Y}^2\phi \lor \cdots \lor \tmp{Y}^\dbound\phi 
  && \text{(sometime in last $\dbound$ steps)}
\end{align*}
where $\tmp{X}^0\phi = \phi$ and $\tmp{X}^{i+1}\phi = \tmp{X}(\tmp{X}^i\phi)$, 
and similarly for $\tmp{Y}^i$.

Semantically, $\mathcal{S}, w, k \models \tmp{F}^{\le \dbound}\phi$ holds iff there 
exists $j$ with $k \le j \le k+\dbound$ such that $\mathcal{S}, w, j \models \phi$.

Bounded always operators $\tmp{G}^{\le \dbound}\phi := \neg\tmp{F}^{\le \dbound}\neg\phi$ 
and $\tmp{H}^{\le \dbound}\phi := \neg\tmp{P}^{\le \dbound}\neg\phi$ can be defined analogously.

\begin{definition}[Logical equivalence]
\LTL formulas $\varphi$ and $\psi$ are \emph{logically equivalent}, written
\[
\varphi \equiv \psi ,
\]
if for every Büchi automaton $\mathcal S$, every word
$w = \sigma_1\sigma_2\sigma_3\ldots \in L(\mathcal S)$,
and every position $k \in \Npos$,
\[
\mathcal S, w, k \models \varphi
\quad\Longleftrightarrow\quad
\mathcal S, w, k \models \psi.
\]
\end{definition}

\begin{notebox}
Throughout these notes, we fix a system $\mathcal S$ and restrict attention to
executions of $\mathcal S$.
In particular, whenever satisfaction $\mathcal S,w,k \models \varphi$ is evaluated,
we assume $w \in L(\mathcal S)$.
Thus, $L(\mathcal S)$ plays the role of the universe of possible executions.
We will restate this convention explicitly whenever it is relevant.
\end{notebox}

\begin{exercise}[Basic axioms]\label{ex:knowledge-sanity}
Let $\mathcal S$ be a system, let $a\in Ag$, and let $\phi$ be a formula.

\begin{enumerate}[label=\textbf{(\arabic*)}]
\item
Show that
\[
\mathcal S \models \phi \quad\Longleftrightarrow\quad \mathcal S \models \Knows_a \phi.
\]

\item
Show that, for all $w \in L(\mathcal S)$ and all $k \in \Npos$, we have
\[
\begin{array}{lrcll}
\text{(a)} & \mathcal S,w,k \models \Knows_a \phi
&\;\Longrightarrow\;&
\mathcal S,w,k \models \phi
&\text{(Veridicality)}\\[0.4em]
\text{(b)} & \mathcal S,w,k \models \Knows_a \phi
&\;\Longrightarrow\;&
\mathcal S,w,k \models \Knows_a \Knows_a \phi
&\text{(Positive introspection)}\\[0.4em]
\text{(c)} & \mathcal S,w,k \models \neg \Knows_a \phi
&\;\Longrightarrow\;&
\mathcal S,w,k \models \Knows_a \neg \Knows_a \phi
&\text{(Negative introspection)}\\[0.4em]
\text{(d)} & \mathcal S,w,k \models \Knows_a \phi
&\;\Longrightarrow\;&
\mathcal S,w,k \models \tmp{G}\Knows_a \tmp{P} \phi
&\text{(Non-forgetting)}
\end{array}
\]

\item
Show that the converse of the implication in~(2a) does not hold in general.
\end{enumerate}
\end{exercise}

\begin{solution}
We provide here the solution to (2d):

Let $w\in L(\mathcal S)$ and $k\in\Npos$, and assume
\[
\mathcal S,w,k \models \Knows_a \phi.
\]
We show
\[
\mathcal S,w,k \models \tmp{G}\Knows_a\tmp{P}\phi.
\]
By the semantics of $\tmp{G}$, it is enough to prove that for every $\ell\ge k$,
\[
\mathcal S,w,\ell \models \Knows_a\tmp{P}\phi.
\]
So fix $\ell\ge k$, and let $w'\in L(\mathcal S)$ with
\[
(w,\ell)\sim_a(w',\ell).
\]
By definition of indistinguishability, this means
\[
\obs_a(w_{\le \ell})=\obs_a(w'_{\le \ell}).
\]
Taking prefixes of length $k$ (since $k\le \ell$), we obtain
\[
\obs_a(w_{\le k})=\obs_a(w'_{\le k}),
\]
hence $(w,k)\sim_a(w',k)$.
From $\mathcal S,w,k \models \Knows_a \phi$, it follows that
\[
\mathcal S,w',k \models \phi.
\]
Because $k\le \ell$, this implies
\[
\mathcal S,w',\ell \models \tmp{P}\phi.
\]

Since $w'$ was arbitrary among words indistinguishable from $w$ at time $\ell$, we get
\[
\mathcal S,w,\ell \models \Knows_a\tmp{P}\phi.
\]
As $\ell\ge k$ was arbitrary, we conclude
\[
\mathcal S,w,k \models \tmp{G}\Knows_a\tmp{P}\phi.
\]
This completes the proof.
\end{solution}

\begin{example}[Diagnosability properties]\label{ex:diagnosability}
We consider diagnosability properties for the systems
$\mathcal S_1,\mathcal S_2,\mathcal S_3,\mathcal S_4$
from Example~\ref{ex:four-languages}, where $e$ is unobservable to agent~$a$.

\paragraph{Immediate diagnosability.}
$\tmp{G}(e \to \Knows_a e)$ requires that errors are immediately detected.
This fails for all four systems, as the letter $\{p,e\}$ is observed as $\{p\}$.

\paragraph{Eventual diagnosability.}
$\tmp{G}(e \to \tmp{F}\Knows_a \tmp{P} e)$ requires that errors are eventually detected.
Systems $\mathcal S_1$, $\mathcal S_3$, and $\mathcal S_4$ satisfy this property:
in $\mathcal S_1$, the observation $\{r\}$ occurs only on faulty executions;
in $\mathcal S_3$, the error occurs when the observer sees $\emptyset$, distinguishing it from non-faulty executions;
in $\mathcal S_4$, faulty executions eventually produce $\{r\}^\omega$ while non-faulty ones remain in $\{p\}^\omega$.
System $\mathcal S_2$ fails this property, as faulty and non-faulty executions remain indistinguishable.

\paragraph{Bounded diagnosability.}
For fixed $D\in\mathbb N$, the formula $\tmp{G}(e \to \tmp{X}^{D}\Knows_a \tmp{P} e)$ requires detection within $D$ steps.
Systems $\mathcal S_1$ and $\mathcal S_3$ satisfy this for suitable $D$.
System $\mathcal S_4$ shows that eventual diagnosability does not imply bounded diagnosability: after an error, the system may produce $\{p\}$ for arbitrarily many steps before switching to $\{r\}^\omega$.
\end{example}

%
%
%

\begin{example}[Nested knowledge with two agents]\label{ex:nested-knowledge}
We present a system where agent~$a$ knows that agent~$b$ knows a fact, although agent~$a$ does not know the fact itself.

Let $AP = \{h,r,s\}$ and $Ag = \{a,b\}$ with $AP_a = \{r\}$ and $AP_b = \{s\}$.
Thus $h$ is unobservable to both agents.

The system $\mathcal{S}$ nondeterministically chooses $h\in\{0,1\}$, emits $r$, then emits $s$ iff $h=1$, and remains silent thereafter.
Formally, $\mathcal{S}$ is the Büchi automaton below:

\begin{center}
\begin{tikzpicture}[
    ->,
    >=stealth,
    node distance=2.5cm,
    state/.style={
      circle,
      draw=ChapterAccent,
      thick,
      minimum size=0.9cm,
      inner sep=1pt
    },
    accepting/.style={
      state,
      double,
      double distance=1.5pt
    },
    initial/.style={
      state,
      fill=ChapterLight
    },
    every edge/.style={
      draw=ChapterAccent,
      thick
    },
    every edge quotes/.style={
      font=\small,
      auto
    }
  ]
  
  \node[initial] (q0) {$q_0$};
  
  \node[state] (q10) [above right=0.5cm and 2.5cm of q0] {$q_1^0$};
  \node[state] (q11) [below right=0.5cm and 2.5cm of q0] {$q_1^1$};
  
  \node[state] (q20) [right=2.5cm of q10] {$q_2^0$};
  \node[state] (q21) [right=2.5cm of q11] {$q_2^1$};
  
  \node[accepting] (qbot) [below right=0.5cm and 2.5cm of q20] {$q_3$};
  
  \draw[thick, ChapterAccent, <-] (q0) -- ++(-0.8,0);
  
  \path[->]
    (q0) edge node[above left, pos=0.4] {$\emptyset$} (q10)
    (q0) edge node[below left, pos=0.4] {$\{h\}$} (q11)
    (q10) edge node[above] {$\{r\}$} (q20)
    (q11) edge node[below] {$\{h,r\}$} (q21)
    (q20) edge node[above, pos=0.4] {$\emptyset$} (qbot)
    (q21) edge node[below, pos=0.4, xshift=0.4cm] {$\{h,s\}$} (qbot)
    (qbot) edge[loop right] node[right] {$\emptyset$} (qbot);
    
\end{tikzpicture}
\end{center}

The language $L(\mathcal{S})$ contains two words, namely $\emptyset \{r\} \emptyset \emptyset \ldots$ (when $h=0$) and $\{h\} \{h,r\} \{h,s\} \emptyset \emptyset \ldots$ (when $h=1$).
Agent~$a$ observes only $\{r\}$ at position 2 in both cases, while agent~$b$ observes $\{s\}$ at position 3 iff $h=1$.

Recall that \[\KnowsWhether_b h := \Knows_b h \lor \Knows_b \neg h.\] Consider
\[
\varphi = \tmp{X}\tmp{X}\bigl(\Knows_a\KnowsWhether_b h \land \neg \KnowsWhether_a h\bigr).
\]
At position 3, agent~$b$'s observation determines $h$, so $\KnowsWhether_b h$ holds.
Agent~$a$ has seen only $\{r\}$ (identical in both runs), so $\neg \KnowsWhether_a h$ holds.
However, agent~$a$ knows the system structure and knows that agent~$b$'s observation suffices to determine $h$, hence $\Knows_a \KnowsWhether_b h$ holds.
Thus $\mathcal{S} \models \varphi$.
\end{example}

%
%
%
%

So far, formulas have been interpreted at a concrete execution and a
reference position.
From the perspective of an agent, however, the actual execution is not
directly accessible.
Instead, the agent only has access to the finite observation it has
made so far.

We therefore provide a satisfaction relation that is defined relative to an
observation rather than to a single execution.
Intuitively, a formula holds under an observation if it holds for all
executions of the system that are compatible with that observation.

\begin{definition}[Satisfaction under observation]
Let $\mathcal S$ be a system, $a\in Ag$, and $\phi$ a formula.
Let $u = \sigma_1 \ldots \sigma_k \in \Obs_a(\mathcal S)$.
We write
\[
\mathcal S, u \models_a \phi
\]
if, for all $w\in L(\mathcal S)$ such that $\obs_a(w_{\le k})=u$, we have
$
\mathcal{S},w,k\models \phi .
$
\end{definition}

\begin{proposition}\label{prop:models-a-iff-K}
Let $\mathcal S$ be a system, $a\in Ag$, and $\phi$ be an \LTL formula.
Let $w\in L(\mathcal S)$ and $k\ge 1$, and set $u = \obs_a(w_{\le k})$.
Then,
\[
\mathcal S, u \models_a \phi
\quad\Longleftrightarrow\quad
\mathcal S, w, k \models \Knows_a \phi\, .
\]
\end{proposition}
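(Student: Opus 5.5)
The plan is to unfold both sides to their definitions and observe that they quantify over the same set of executions. First, I will check that $\mathcal S, u \models_a \phi$ is well defined. Since $w\in L(\mathcal S)$ and $k\ge 1$, the prefix $w_{\le k}$ lies in $\Pref(L(\mathcal S))$, so $u=\obs_a(w_{\le k})\in\Obs_a(\mathcal S)$. Moreover, $\obs_a$ is length-preserving, so $|u|=k$. Hence the position index in the definition of satisfaction under observation is exactly this $k$.

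Next, I will unfold the right-hand side. By the semantics of $\Knows_a$, the statement $\mathcal S,w,k\models\Knows_a\phi$ means that $\mathcal S,w',k\models\phi$ for all $w'\in L(\mathcal S)$ with $(w,k)\sim_a(w',k)$. By Definitions~\ref{def:pointed-word} and~\ref{def:indist}, this indistinguishability condition is equivalent to $\obs_a(w'_{\le k})=\obs_a(w_{\le k})=u$.

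Finally, I will unfold the left-hand side. The statement $\mathcal S,u\models_a\phi$ means that $\mathcal S,w',k\models\phi$ for all $w'\in L(\mathcal S)$ with $\obs_a(w'_{\le k})=u$. This is literally the same universally quantified condition over the same set
\[
\{\,w'\in L(\mathcal S)\mid \obs_a(w'_{\le k})=u\,\},
\]
so the two sides are equivalent.

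There is no real obstacle here. The only point requiring care is the bookkeeping in the first step: confirming that $u$ belongs to $\Obs_a(\mathcal S)$ and that the length $k$ in the definition of $\models_a$ coincides with the given position. Both follow from the assumption $w\in L(\mathcal S)$ and from $\obs_a$ being length-preserving.
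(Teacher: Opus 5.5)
Your proposal is correct and follows the paper's proof: both unfold the definitions of $\models_a$ and $\Knows_a$ and use that $(w,k)\sim_a(w',k)$ is equivalent to $\obs_a(w'_{\le k})=u$. The paper writes this out as two separate implications, while you note that both sides quantify over the same set; your check that $u\in\Obs_a(\mathcal S)$ and $|u|=k$ is bookkeeping the paper leaves implicit.
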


\begin{proof}
Let $u=\obs_a(w_{\le k})$.

($\Longrightarrow$)
Assume $\mathcal S,u\models_a\phi$.
Let $w'\in L(\mathcal S)$ be such that $(w,k)\sim_a (w',k)$.
By definition of indistinguishability (perfect recall),
\[
(w,k)\sim_a (w',k)\quad\Longleftrightarrow\quad
\obs_a(w'_{\le k})=\obs_a(w_{\le k})=u.
\]
Hence $\obs_a(w'_{\le k})=u$, and by the definition of $\models_a$, we obtain
$\mathcal S,w',k\models \phi$.
Since this holds for all such $w'$, we conclude $\mathcal S,w,k\models \Knows_a\phi$.

($\Leftarrow$)
Assume $\mathcal S,w,k\models \Knows_a\phi$.
Let $w'\in L(\mathcal S)$ with $\obs_a(w'_{\le k})=u$.
Then $(w,k)\sim_a (w',k)$, again by the definition of indistinguishability.
By the semantics of knowledge, this implies $\mathcal S,w',k\models \phi$.
Since this holds for every $w'$ with $\obs_a(w'_{\le k})=u$, we conclude
$\mathcal S,u\models_a\phi$.
\end{proof}

\section{The Model-Checking Problem}

We now turn to the algorithmic problem of deciding whether a system
satisfies a given epistemic temporal specification.
It serves as a basis for the construction
of monitors and verification procedures.

\begin{definition}[Model-checking problem]\label{def:model-checking}
Given a Büchi automaton $\mathcal{S} = (Q, \iota, \Trans, F)$ 
and an \LTL formula $\varphi$, the \emph{model-checking problem} asks 
whether
\[
  \mathcal{S} \models \varphi,
\]
that is, whether for every $w \in L(\cal S)$,
we have $\mathcal S,w,1 \models \varphi$.
\end{definition}

Model-checking problems are usually solved by translating the negation of $\phi$ into
another Büchi automaton and checking the product for emptiness.
Due to the knowledge operator, unlike the classical construction, 
this Büchi automaton $\mathcal{A}_{\neg \phi}^{\mathcal{S}}$ also depends on the given system $\mathcal{S}$.

To make this explicit, we introduce a transducer that
combines the behavior of the system with the evaluation of the formula
along its executions.
That is, it maps each position along an execution to the truth value of the formula at that position.

\begin{definition}[System-formula transducer]\label{def:automaton-construction}
Let
$\mathcal{S}$ be a Büchi automaton over $\Sigma$ (representing a system) and $\varphi$ be an \LTL formula.
An $(\mathcal{S}, \varphi)$-\emph{transducer} consists of
\begin{itemize}
\item a Büchi automaton $\mathcal{A}^{\mathcal{S}}_{\varphi} = (Q_\phi, \iota_\phi, \Trans_\phi, F_\phi)$ over $\Sigma$, and
\item a mapping $\gamma_\phi\colon Q_\phi \to \{0, 1\}$
\end{itemize}
such that
$L(\mathcal{A}^{\mathcal{S}}_{\varphi}) = L(\cal{S})$ and, for all $w \in L(\cal{S})$,
all accepting runs
$\pi = q_0 \xrightarrow{\sigma_1} q_1 \xrightarrow{\sigma_2} q_2 \xrightarrow{\sigma_3} \ldots$ 
such that $\trace(\pi) = w$, and all $k \in \Npos$, we have
\[
\gamma_\phi(q_k) = 1
\quad\Longleftrightarrow\quad \mathcal{S}, w, k \models \varphi.
\]
Intuitively, $\gamma_\varphi$ marks the positions that satisfy \(\varphi\).
\end{definition}

\begin{theorem}[Transducer construction]\label{thm:automaton-construction}
For a Büchi automaton $\mathcal{S}$ and an \LTL formula $\varphi$, we can 
effectively construct an $(\mathcal{S}, \varphi)$-transducer.
\end{theorem}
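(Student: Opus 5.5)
We construct the transducer by structural induction on $\varphi$. Each inductive step refines the automaton obtained for the immediate subformulas, so that the states of $\mathcal A^{\mathcal S}_\varphi$ carry the truth values of all subformulas of $\varphi$. The invariant we maintain is the one in Definition~\ref{def:automaton-construction}: the language equals $L(\mathcal S)$, and along \emph{every} accepting run the labelling is correct. This uniformity over all accepting runs, and not just some run, is what makes the knowledge step work.

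\textbf{Base case and Boolean connectives.} For $p\in\AP$ we take a copy of $\mathcal S$ whose states remember the last letter read, i.e.\ states $Q\times(\Sigma\cup\{\bot\})$, and set $\gamma_p(q,\sigma)=1$ iff $p\in\sigma$. For $\neg\varphi$ we keep $\mathcal A^{\mathcal S}_\varphi$ and replace $\gamma$ by $1-\gamma_\varphi$. For $\varphi\wedge\psi$ we take the synchronous product of $\mathcal A^{\mathcal S}_\varphi$ and $\mathcal A^{\mathcal S}_\psi$ with generalized Büchi acceptance (one set per component, then degeneralize), and set $\gamma=\gamma_\varphi\cdot\gamma_\psi$. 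Accepting runs of the product project to accepting runs of both factors over the same word, so correctness transfers, and the language is $L(\mathcal S)\cap L(\mathcal S)$.

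\textbf{Temporal operators.} We treat these in the style of the tableau construction for LTL. Start from the product $\mathcal B$ of the transducers for $\varphi$ and $\psi$, and add a guessed bit $b$ meant to equal the truth value of $\varphi\sUntil\psi$ at the current position.

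For until, a transition from a state with bit $b$ into a state with labels $(x_\varphi,x_\psi)$ and bit $b'$ must satisfy $b\leftrightarrow\bigl(x_\psi\lor(x_\varphi\wedge b')\bigr)$. This is the strict expansion law $\varphi\sUntil\psi\equiv\tmp X\psi\lor\tmp X(\varphi\wedge\varphi\sUntil\psi)$. We then add a generalized Büchi set demanding infinitely often that $b=0$ or $x_\psi=1$. This rules out deferring the until forever. Because the local constraint together with this fairness condition has exactly one solution for $b$ along any labelled word, every accepting run labels correctly, and no word of $L(\mathcal S)$ is lost.

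For since, the bit is determined deterministically from the past: $b$ is set to $x_\psi\lor(x_\varphi\wedge b)$ computed from the previous state, with $b=0$ at the initial position. No acceptance condition is needed.

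\textbf{Knowledge (main obstacle).} Assume $(\mathcal A,\gamma)=\mathcal A^{\mathcal S}_\varphi$ is given. By Proposition~\ref{prop:models-a-iff-K}, $\Knows_a\varphi$ holds at $(w,k)$ iff every $w'\in L(\mathcal S)$ with $\obs_a(w'_{\le k})=\obs_a(w_{\le k})$ satisfies $\varphi$ at $k$. By the invariant, this is the same as saying that no accepting run of $\mathcal A$ over any such $w'$ is in a state $q$ with $\gamma(q)=0$ at step $k$.

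To decide this, we first compute the set $\mathit{Live}$ of states of $\mathcal A$ from which some accepting run exists; this is a graph/Büchi nonemptiness check. Then we form the subset construction of $\mathcal A$ restricted to $\mathit{Live}$, over the observation alphabet $\Sigma_a$: reading $\sigma$ moves a set $X$ to the set of $\mathit{Live}$ successors of $X$ under all letters $\sigma'$ with $\obs_a(\sigma')=\obs_a(\sigma)$. The set reached after $w_{\le k}$ is exactly the set of states occupied at step $k$ by prefixes of accepting runs over words $a$-indistinguishable from $w$ up to $k$.

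The new automaton is the product of $\mathcal A$ (to keep $L(\mathcal S)$ and the acceptance condition) with this deterministic subset component. We set $\gamma_{\Knows_a\varphi}(q,X)=1$ iff $\gamma(x)=1$ for all $x\in X$.

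The delicate point is exactly the restriction to $\mathit{Live}$. Without it, finite prefixes that cannot be extended to accepting runs would falsely refute knowledge. Its correctness relies on the invariant holding for all accepting runs, which is why the temporal cases must enforce uniqueness of the guessed bits. The construction causes an exponential blow-up per nesting of $\Knows$, but it remains effective.
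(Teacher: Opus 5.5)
Your proposal is correct and is essentially the paper's proof. Both use structural induction with a deterministic since-bit, a guessed until-bit constrained by the strict expansion law plus the fairness set ``$b=0$ or $\psi$ holds'', and, for $\Knows_a$, a belief-set construction over $\Sigma_a$ on the productive (your $\mathit{Live}$) part of the subformula's transducer, whose correctness hinges, as you note, on the labelling being correct along \emph{every} accepting run. The paper packages this subset automaton as the knowledge monitor and takes the product with $\mathcal S$ rather than with $\mathcal A$, which makes no difference.

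One detail is left implicit in the until case: the initial state corresponds to no position. Transitions leaving it must therefore be exempt from the expansion constraint, which the paper handles with the marker $\mathit{init}$; otherwise a fixed initial bit would wrongly constrain position~1 and lose words of $L(\mathcal S)$.
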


In the construction, we will actually make use of a \emph{monitor}:

\begin{definition}[Knowledge monitor]\label{def:monitor-knowledge}
Let $\mathcal S$ be a Büchi automaton over $\Sigma$, an agent $a\in Ag$, and
$\phi$ be an \LTL formula. An $(\mathcal{S}, \phi)$-\emph{knowledge monitor for} $a$
is a DFA $\mathcal K_{\phi}^{\mathcal S,a}$ over $\Sigma_a$
such that, for all finite words $u \in \Obs_a(\mathcal S) \subseteq \Sigma_a^+$,
\[
u \in L(\mathcal K_{\phi}^{\mathcal S,a})
\quad\Longleftrightarrow\quad 
\mathcal{S}, u \models_a \phi.
\]
\end{definition}

\begin{theorem}[Monitor construction]\label{thm:monitor-knowledge}
Given a Büchi automaton over $\Sigma$, an agent $a\in Ag$, and
an \LTL formula $\phi$, we can effectively construct an
$(\mathcal{S}, \phi)$-knowledge monitor for $a$.
\end{theorem}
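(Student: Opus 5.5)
The monitor is built from an $(\mathcal S,\phi)$-transducer, with a subset construction over the observation alphabet. Theorem~\ref{thm:automaton-construction} and Theorem~\ref{thm:monitor-knowledge} will be proved by mutual induction on the structure of $\phi$. The monitor for $\phi$ only needs a transducer for $\phi$ itself, and the transducer for $\Knows_a\phi$ only needs a monitor for $\phi$. For the present statement I assume an $(\mathcal S,\phi)$-transducer $(\mathcal A^{\mathcal S}_\phi,\gamma_\phi)$ is available, with $\mathcal A^{\mathcal S}_\phi=(Q_\phi,\iota_\phi,\Trans_\phi,F_\phi)$.

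\textbf{Step 1: trimming.} Remove from $\mathcal A^{\mathcal S}_\phi$ every state from which no accepting run exists. This is decidable by standard nonemptiness checks, namely reachability of an accepting state lying on a cycle. The language is unchanged, since accepting runs never visit removed states. After trimming, every finite run $\iota_\phi\xrightarrow{\sigma_1}\cdots\xrightarrow{\sigma_k}q$ extends to an accepting run. Hence its trace is in $\Pref(L(\mathcal S))$, and by the transducer property $\gamma_\phi(q)=1$ iff $\mathcal S,w,k\models\phi$ for the word $w$ obtained from that accepting run. For $k\ge1$ the value $\gamma_\phi(q_k)$ is fixed by the run prefix, so it is the same for every accepting extension. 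The definition of transducer quantifies over all accepting runs of $w$, which gives exactly this consistency.

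\textbf{Step 2: subset construction.} Define a DFA over $\Sigma_a$ with state set $2^{Q_\phi}$ and initial state $\{\iota_\phi\}$. The transition is
\[
\delta(M,\tau)=\{q'\mid \exists q\in M,\ \sigma\in\Sigma:\ \obs_a(\sigma)=\tau,\ (q,\sigma,q')\in\Trans_\phi\}.
\]
A set $M$ is accepting iff $\gamma_\phi(q)=1$ for all $q\in M$. By induction on $|u|$, $\delta^\ast(\{\iota_\phi\},u)$ is the set of states reachable by finite runs over some $v$ with $\obs_a(v)=u$.

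\textbf{Step 3: correctness.} Let $u\in\Obs_a(\mathcal S)$ with $|u|=k\ge1$. Suppose $u$ is accepted, and let $w\in L(\mathcal S)$ with $\obs_a(w_{\le k})=u$. Take an accepting run of $\mathcal A^{\mathcal S}_\phi$ on $w$. Its state $q_k$ lies in the reached set, so $\gamma_\phi(q_k)=1$ and therefore $\mathcal S,w,k\models\phi$.

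Conversely, suppose $\mathcal S,u\models_a\phi$, and take any $q$ in the reached set. By Step~1, $q$ is reached by a run over some $v\sim_a$-compatible with $u$, and that run extends to an accepting run over some $w\in L(\mathcal S)$ with $\obs_a(w_{\le k})=u$. Then $\gamma_\phi(q)=1$ because $\mathcal S,w,k\models\phi$. So the reached set is accepting, and $u$ is accepted.

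The main obstacle is Step~1. Without trimming, dead states could be reached, and their $\gamma_\phi$ values are unconstrained, which would break the converse direction. The trimming has to be done carefully so that it exactly characterizes prefixes of accepting runs.
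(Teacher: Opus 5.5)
Your proposal is correct and follows essentially the same route as the paper: mutual structural induction with the transducer theorem, replacing $\mathcal A^{\mathcal S}_\phi$ by a productive (trimmed) automaton, the belief-set subset construction over $\Sigma_a$ that accepts exactly when all states have $\gamma_\phi=1$, the reachability invariant proved by induction on $|u|$, and the two-direction correctness argument that uses productivity for the converse. One small slip: in the converse direction, the fact that $q$ is reached by a run over some $v$ with $\obs_a(v)=u$ comes from the Step~2 invariant, not from Step~1, and the trivial case $L(\mathcal S)=\emptyset$ (where trimming removes $\iota_\phi$) should be set aside separately, as the paper does.
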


The rest of the section is devoted to the simultaneous proof of Theorems~\ref{thm:automaton-construction}
and \ref{thm:monitor-knowledge}.
We construct $\mathcal{A}^{\mathcal{S}}_{\varphi}$ and $\gamma_\varphi$ by structural induction on $\varphi$.
We assume that $L(\mathcal{S}) \neq \emptyset$ because, otherwise, the construction is trivial.

\subsection*{Base case: $\varphi = p$ \normalfont{where} $p \in \AP$}

We have $\mathcal{S}, w, k \models p$ iff $p \in \sigma_k$.

Assume $\mathcal{S} = (Q_{\mathcal{S}}, \iota_{\mathcal{S}}, \Trans_{\mathcal{S}}, F_{\mathcal{S}})$.

We construct the Büchi automaton:
\[
\mathcal{A}^{\mathcal{S}}_{p} = (Q, \iota, \Trans, F)
\]
where:
\begin{itemize}
\item $Q = Q_{\mathcal{S}} \times \{0, 1\}$
\item $\iota = (\iota_{\mathcal{S}}, 0)$ (second component arbitrary)
\item $\Trans = \bigl\{((q, \theta), \sigma, (q', \theta')) \mid (q, \sigma, q') \in \Trans_{\mathcal{S}} \text{ and } \theta' = 1 \Longleftrightarrow p \in \sigma\bigr\}$
\item $F = F_{\mathcal{S}} \times \{0, 1\}$
\end{itemize}
The output function is given by $\gamma_p((q,\theta)) = \theta$.

The automaton tracks $\mathcal{S}$ in the first component and stores in the second 
component whether $p$ holds.

\subsection*{Negation: $\varphi = \neg\psi$}

Assume we have constructed:
\[
\mathcal{A}^{\mathcal{S}}_{\psi} = (Q_{\psi}, \iota_{\psi}, \Trans_{\psi}, F_{\psi})
\]
with output function $\gamma_\psi\colon Q_\psi \to \{0, 1\}$.
We have $\mathcal{S}, w, k \models \neg\psi$ iff $\mathcal{S}, w, k \not\models \psi$.

We construct:
\[
\mathcal{A}^{\mathcal{S}}_{\neg\psi} = (Q_{\psi}, \iota_{\psi}, \Trans_{\psi}, F_{\psi})
\]
with output function $\gamma_{\neg\psi}(q) = 1 - \gamma_\psi(q)$.

The automaton is identical to $\mathcal{A}^{\mathcal{S}}_{\psi}$; only the output 
function is flipped.

\subsection*{Conjunction: $\varphi = \psi_1 \land \psi_2$}

Assume we have constructed:
\begin{align*}
\mathcal{A}^{\mathcal{S}}_{\psi_1} &= (Q_1, \iota_1, \Trans_1, F_1) \text{ with } \gamma_{\psi_1} : Q_1 \to \{0,1\} \\
\mathcal{A}^{\mathcal{S}}_{\psi_2} &= (Q_2, \iota_2, \Trans_2, F_2) \text{ with } \gamma_{\psi_2} : Q_2 \to \{0,1\}
\end{align*}

We have $\mathcal{S}, w, k \models \psi_1 \land \psi_2$ iff $\mathcal{S}, w, k \models \psi_1$ 
and $\mathcal{S}, w, k \models \psi_2$.

We first construct the generalized Büchi automaton
\[
\mathcal{A} = (Q, \iota, \Trans, \mathcal{F})
\]
where:
\begin{itemize}
\item $Q = Q_1 \times Q_2$
\item $\iota = (\iota_1, \iota_2)$
\item $\Trans = \bigl\{((q_1, q_2), \sigma, (q'_1, q'_2)) \mid (q_1, \sigma, q'_1) \in \Trans_1 \text{ and } (q_2, \sigma, q'_2) \in \Trans_2\bigr\}$
\item $\mathcal{F} = \{F_1 \times Q_2,\, Q_1 \times F_2\}$
\end{itemize}

The output function is given by $\gamma((q_1, q_2)) = \gamma_{\psi_1}(q_1) \land \gamma_{\psi_2}(q_2)$.

We then convert this GBA to a standard Büchi automaton to obtain 
$\mathcal{A}^{\mathcal{S}}_{\psi_1 \land \psi_2}$, extending the output function 
accordingly during the conversion to obtain $\gamma_{\psi_1 \land \psi_2}$.

\subsection*{Strict since: $\varphi=\psi_1 \sSince\psi_2$}

Assume we have constructed
\[
\mathcal A^{\mathcal S}_{\psi_i}=(Q_i,\iota_i,\Trans_i,F_i)
\]
with output maps $\gamma_{\psi_i}:Q_i\to\{0,1\}$ for $i\in\{1,2\}$.

We construct a generalized B\"uchi automaton
\[
\mathcal{A}=(Q,\iota,\Trans,\mathcal F)
\]
where:
\begin{itemize}
\item $Q = Q_1\times Q_2\times\{\mathit{init},0,1\}$
\item $\iota = (\iota_1,\iota_2,\mathit{init})$
\item $\Trans$ contains $((q_1,q_2,\theta),\sigma,(q_1',q_2',\theta'))$ if
      $(q_1,\sigma,q_1')\in\Trans_1$, $(q_2,\sigma,q_2')\in\Trans_2$, and
      \[
      \theta' =
      \begin{cases}
      0
      & \text{if } \theta=\mathit{init}\\[0.4em]
      1
      & \text{if } \theta\in\{0,1\}\text{ and }
        \bigl(\gamma_{\psi_2}(q_2)=1 \ \text{or}\ (\gamma_{\psi_1}(q_1)=1 \text{ and } \theta=1)\bigr)\\[0.4em]
      0
      & \text{otherwise}
      \end{cases}
      \]
\end{itemize}
\paragraph{Intuition.}
After reading $\sigma_k$, the component states $q_1$ and $q_2$ correspond to
position $k$ (as in the base-case construction), and $\theta\in\{0,1\}$ stores the
truth value of $\psi_1\sSince\psi_2$ at position~$k$.
At the very first step, strict since is always false (no previous position),
so we force $\theta'=0$ when $\theta=\mathit{init}$.

For subsequent steps, the update implements the following recurrence:
\[
\mathcal S,w,k{+}1\models \psi_1\sSince\psi_2
\quad\Longleftrightarrow\quad
\bigl(\mathcal S,w,k\models \psi_2\bigr)
\ \ \text{or}\ \
\bigl(\mathcal S,w,k\models \psi_1 \ \text{and}\ \mathcal S,w,k\models \psi_1\sSince\psi_2\bigr)
\]
\paragraph{Acceptance condition.}
The acceptance condition only has to ensure that the components for
$\psi_1$ and $\psi_2$ are accepting on their own:
\[
\mathcal F =
\bigl\{\;
F_1\times Q_2\times\{0,1\},\;
Q_1\times F_2\times\{0,1\}
\;\bigr\}
\]
\paragraph{Output map.}
Accordingly, we define the output map by:
\[
\gamma((q_1,q_2,\theta))=
\begin{cases}
0 & \text{if } \theta=\mathit{init}\\
\theta & \text{if } \theta\in\{0,1\}
\end{cases}
\]
Finally, we transform the GBA $\mathcal{A}$
into an ordinary Büchi automaton to obtain $\mathcal{A}^{\mathcal{S}}_{\psi_1\sSince\psi_2}$,
and we accordingly adjust $\gamma$ to obtain $\gamma_{\psi_1\sSince\psi_2}$
as required.

\subsection*{Strict until: $\varphi=\psi_1\sUntil\psi_2$}

Assume we have constructed
\[
\mathcal A^{\mathcal S}_{\psi_i}=(Q_i,\iota_i,\Trans_i,F_i)
\]
with output maps $\gamma_{\psi_i}:Q_i\to\{0,1\}$ for $i\in\{1,2\}$.

Similarly to strict since, we construct a generalized B\"uchi automaton
\[
\mathcal{A}=(Q,\iota,\Trans,\mathcal F)
\]
where:
\begin{itemize}
\item $Q = Q_1\times Q_2\times\{\mathit{init}, 0, 1\}$
\item $\iota = (\iota_1,\iota_2,\mathit{init})$
\item $\Trans$ contains $((q_1,q_2,\theta),\sigma,(q_1',q_2',\theta'))$ if
$(q_1,\sigma,q_1')\in\Trans_1$, $(q_2,\sigma,q_2')\in\Trans_2$, and
the following hold:
\begin{itemize}
\item
$\theta' \in \{0,1\}$
\item
$
\theta \in \{0,1\} \;\implies\;
\bigl(\theta = 1 \;\;\Longleftrightarrow\;\;
\gamma_{\psi_2}(q_2')=1
\ \text{or}\
(\gamma_{\psi_1}(q_1')=1 \ \text{and}\ \theta'=1)\bigr)$
\end{itemize}
\end{itemize}

\paragraph{Intuition.}
The transition constraints implement the one-step unfolding of strict until.
A positive guess $\theta=1$ claims that $\psi_2$ will occur at some
strictly future position and that $\psi_1$ holds until then (starting from the next position). As long as $\psi_2$
does not hold, the automaton must therefore preserve $\theta=1$ and require
$\psi_1$ to hold at the next position.
A negative guess $\theta=0$ claims that no such future position exists. Such a
guess may persist as long as $\psi_2$ does not occur, but it is refuted as soon as
$\psi_2$ holds at the next position.

Formally, the transition constraints enforce the following semantic recurrence:
\[
\mathcal S,w,k\models \psi_1\sUntil\psi_2
\;\Longleftrightarrow\;
\bigl(\mathcal S,w,k{+}1\models \psi_2\bigr)
\ \ \text{or}\ \
\bigl(\mathcal S,w,k{+}1\models \psi_1 \ \text{and}\ \mathcal S,w,k{+}1\models
\psi_1\sUntil\psi_2\bigr)
\]

\paragraph{Acceptance condition.}
The acceptance condition has to ensure that positive guesses cannot remain unresolved
forever.
In particular, an accepting run cannot stay in states with $\theta=1$ infinitely
often without eventually visiting a position where $\psi_2$ holds.
Therefore, every accepting run resolves each positive guess by a strictly future
$\psi_2$-position.
This is achieved using
\[
\mathcal F =
\bigl\{\;
F_1\times Q_2\times\{0, 1\},\;
Q_1\times F_2\times\{0, 1\},\;
\{(q_1, q_2, \theta) \mid \theta=0 \text{ or } \gamma_{\psi_2}(q_2)=1\}
\;\bigr\}.
\]

\paragraph{Output map.}
Accordingly, we define the output map by:
\[
\gamma((q_1,q_2,\theta))=
\begin{cases}
0 & \text{if } \theta=\mathit{init}\\
\theta & \text{if } \theta\in\{0,1\}
\end{cases}
\]
Finally, we obtain the required
$\mathcal{A}^{\mathcal{S}}_{\psi_1\sUntil\psi_2}$
and the adjusted $\gamma_{\psi_1\sUntil\psi_2}$ transforming
$\mathcal{A}$ into a Büchi automaton.

\subsection*{Knowledge monitor:}

A knowledge monitor does not observe the actual events of the system, but
only the partial observations available to a given agent.
As a consequence, after each observation, the monitor cannot determine a
single current system state.
Instead, it maintains a \emph{belief set} consisting of all system states
that are consistent with the observations seen so far.
As the observation sequence grows, this belief set is updated accordingly.
The monitor evaluates a property by checking whether it holds in all
states of the current belief set.
In this way, the monitor can determine when a property is known to hold
and when it cannot yet be concluded from the available observations.

Assume we have constructed
\[
\mathcal{A}^{\mathcal{S}}_{\phi} = (Q_{\phi}, \iota_{\phi}, \Trans_{\phi}, F_{\phi})
\]
with output function $\gamma_\phi : Q_\phi \to \{0, 1\}$.

Without loss of generality, we assume that $\mathcal A^{\mathcal S}_{\phi}$ is
\emph{productive}, that is, for every state $q$ reachable from $\iota_\phi$,
there is an accepting $q$-run. Otherwise, $\mathcal A^{\mathcal S}_{\phi}$ can be
replaced by an equivalent productive automaton obtained by removing all
states from which there is no accepting run.

Given $a \in \Ag$,
we  construct a $(\mathcal{S}, \phi)$-knowledge monitor for $a$,
a DFA $\mathcal{K}_{\phi}^{{\cal S}, a}$ over $\Sigma_a$, as
\[
\mathcal{K}_{\phi}^{{\cal S}, a} = (Q, \iota, \delta, F)
\]
where:
\begin{itemize}
\item ${Q} = 2^{Q_\phi}$ (belief states)
\item ${\iota} = \{\iota_{\phi}\}$
\item ${\delta}$ is defined by
      \[
      \delta(B, \sigma) =
      \bigl\{\,q' \in Q_\phi \bigm| \exists q \in B\colon \exists \tau \in \Sigma\colon 
      (q, \tau, q') \in \Trans_{\phi} \text{ and } \sigma = \obs_a(\tau)\,\bigr\}
      \]
      (illustrated in Figure~\ref{fig:belief-construction})
\item ${F} = \{B \in 2^{Q_\phi} \mid \forall q \in B\colon \gamma_\phi(q) = 1\}$
\end{itemize}

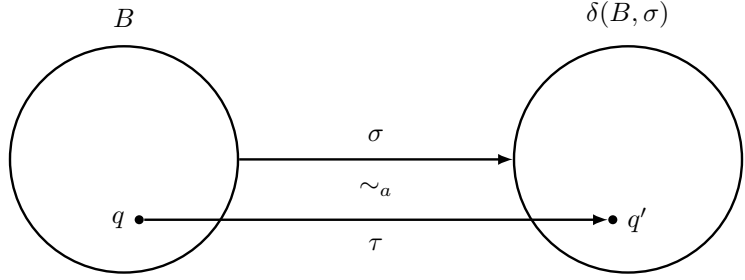
\begin{figure}[t]
\centering
\begin{tikzpicture}[
  >=latex,
  every node/.style={font=\small},
  belief/.style={
    draw,
    circle,
    minimum size=3cm,
    line width=0.9pt
  },
  statept/.style={
    circle,
    fill=black,
    inner sep=1.2pt
  }
]

\node[belief] (B) {};
\node[belief, right=3.6cm of B] (Bp) {};

\node[above=1mm of B] {$B$};
\node[above=1mm of Bp] {$\delta(B,\sigma)$};

\node[statept] (q)  at ($(B.center)+(0.2,-0.8)$) {};
\node[left=-0.1mm of q] {$q$};

\node[statept] (qp) at ($(Bp.center)+(-0.2,-0.8)$) {};
\node[right=-0.1mm of qp] {$q'$};

\draw[->, line width=0.9pt]
  (B.east) -- (Bp.west)
  node[midway, above=1mm] {$\sigma$};

\draw[->, line width=0.9pt]
  (q) -- (qp)
  node[midway, below=1mm] {$\tau$};

\node at ($(B.east)!0.5!(Bp.west)+(0,-0.4)$) {$\sim_a$};

\end{tikzpicture}
\caption{Belief update and representative state transition.}
\label{fig:belief-construction}
\end{figure}

\paragraph{Correctness of monitor.}

We have to show that, for all $u  \in \Obs_a(\mathcal S)$,
\begin{align}
\delta^\ast(\iota, u) \in F
\quad\Longleftrightarrow\quad
\mathcal S,u\models_a \phi .
\label{eq:knowledge}
\end{align}

\paragraph{Key invariant (belief semantics).}
For every $u \in \Sigma_a^\ast$,
\begin{align}
\begin{array}{rl}
\delta^\ast(\iota, u)
=
\Bigl\{\;
q\in Q_\phi \ \Bigm| & 
\!\!\exists v \in \Sigma^\ast\colon
u \sim_a v \text{ and}\\
& \!\!\exists \pi \text{ a finite run of }\mathcal A_\phi^{\mathcal S}\text{ over } v
\text{ with } \pi(|v|)=q
\;\Bigr\}.
\end{array}
\label{eq:belief}
\end{align}

\begin{proof}[Proof of (\ref{eq:belief})]
We proceed by induction on $k = |u|$.

\emph{\underline{Base $u = \varepsilon$.}}
($\subseteq$)
We have $\delta^\ast(\iota, u)=\{\iota_\phi\}$.
Taking $v = \varepsilon$ yields $u \sim_a v$.
Moreover, the trivial run $\pi=\iota_\phi$ is a finite run over $v$
with $\pi(0)=\iota_\phi$.

($\supseteq$)
Conversely, let $q$ belong to the right-hand side of (\ref{eq:belief}).
The unique $v$ such that $u \sim_a v$ is $v=\varepsilon$.
Over $\varepsilon$, the only finite run is the trivial run
$\pi=\iota_\phi$, hence $q=\pi(0)=\iota_\phi$.

\emph{\underline{Step $u \to u\sigma$.}}
Let $u \in \Sigma_a^\ast$ and $\sigma \in \Sigma_a$.
Let $k = |u|$.

($\subseteq$)
Let $q' \in \delta^\ast(\iota, u\sigma)$.
By definition of the belief update, there exist
$q\in \delta^\ast(\iota, u)$ and a letter $\tau\in\Sigma$ such that
\[
\sigma=\obs_a(\tau)
\quad\text{and}\quad
(q,\tau,q')\in\Trans_\phi .
\]
By the induction hypothesis for $u$,
there exist $v\in\Sigma^\ast$ with $u\sim_a v$ and a finite run
$\hat\pi$ over $v$ such that $\hat\pi(k)=q$.
Let $\pi$ be obtained from $\hat\pi$ by appending the transition
$q \xrightarrow{\tau} q'$.
Then, $\pi$ is a finite run over $v\tau$ and satisfies $\pi(k+1)=q'$.
Moreover, from $u\sim_a v$ and $\sigma = \obs_a(\tau)$, we get
$u\sigma\sim_a v\tau$.
Hence $q'$ belongs to the right-hand side.

($\supseteq$)
Conversely, assume $q'\in Q_\phi$ belongs to the right-hand side for $u\sigma$.
Then, there exist $v=\tau_1\ldots\tau_k\in\Sigma^\ast$ and $\tau \in \Sigma$ with
$u\sigma \sim_a v\tau$ and a finite run $\pi$ over $v\tau$ such that
$\pi(k+1)=q'$.
From $u\sigma \sim_a v\tau$, it follows that $u\sim_a v$, and the prefix
of $\pi$ up to position $k$ is a finite run over $v$ reaching
$\pi(k)$.
By the induction hypothesis, this implies $\pi(k)\in \delta^\ast(\iota, u)$.
Moreover, $\pi(k)\xrightarrow{\tau} q'$ is a transition of $\Trans_\phi$
and $\sigma=\obs_a(\tau)$.
Hence, by the definition of the belief update, we obtain $q'\in \delta^\ast(\iota, u\sigma)$.

This establishes (\ref{eq:belief}) for all $u \in \Sigma_a^\ast$.
\end{proof}


Let $u\in\Obs_a(\mathcal S)$ and let $k=|u|$. We show (\ref{eq:knowledge}).

Let $B_u = \delta^\ast(\iota, u)$.

($\Rightarrow$)
Assume $B_u \in F$.
By definition of $F$, this means
\[
\forall q\in B_u\colon \gamma_\phi(q)=1.
\]
Let $w\in L(\mathcal S)$ with $\obs_a(w_{\le k})=u$.
Fix any accepting run $\pi$ of $\mathcal A_\phi^{\mathcal S}$ over $w$ and set $q=\pi(k)$.
Then, the prefix of $\pi$ up to time $k$ is a finite run over $w_{\le k}$ reaching $q$.
Since $\obs_a(w_{\le k})=u$, by invariant~(\ref{eq:belief}) we have $q\in B_u$,
hence $\gamma_\phi(q)=1$.
By correctness of the inductive construction for $\phi$, this implies
$\mathcal S,w,k\models \phi$.
Since $w$ was arbitrary among executions compatible with $u$, we conclude
$\mathcal S,u\models_a \phi$.


($\Leftarrow$)
Assume $\mathcal S,u\models_a \phi$.
We show that $B_u \in F$, i.e.,
\[
\forall q \in B_u\colon\ \gamma_\phi(q)=1 .
\]
So fix an arbitrary $q \in B_u$.
By the invariant~(\ref{eq:belief}), there exist a word
$v \in \Sigma^\ast$ with $\obs_a(v)=u$ and a finite run $\pi$ of
$\mathcal A_\phi^{\mathcal S}$ over $v$ such that $\pi(k)=q$.
Since $\mathcal A_\phi^{\mathcal S}$ is productive, the finite run $\pi$
can be extended to an accepting infinite run $\pi'$ of
$\mathcal A_\phi^{\mathcal S}$.
Let $w = \trace(\pi')$.
Then $w \in L(\mathcal A_\phi^{\mathcal S}) = L(\mathcal S)$ and
$w_{\le k} = v$, hence $\obs_a(w_{\le k}) = u$.

From $\mathcal S,u\models_a \phi$ and $\obs_a(w_{\le k})=u$, it follows
by definition of $\models_a$ that $\mathcal S,w,k\models \phi$.
By correctness of the inductive construction for $\phi$, this implies
$\gamma_\phi(q)=1$.

Since $q \in B_u$ was arbitrary, we conclude that $B_u \in F$.

This proves (\ref{eq:knowledge}) and hence the correctness of the monitor.

\subsection*{Knowledge operator: $\varphi = \Knows_a\psi$}


%

Assume we have constructed the $(\mathcal{S}, \psi)$-knowledge monitor
$\mathcal{K}_{\psi}^{{\cal S}, a} = (Q_\mathcal{K}, \iota_\mathcal{K}, \delta_\mathcal{K}, F_\mathcal{K})$.
We construct the product of
$\mathcal{K}_{\psi}^{{\cal S}, a}$
with $\mathcal{S} = (Q_{\mathcal{S}}, \iota_{\mathcal{S}}, \Trans_{\mathcal{S}}, F_{\mathcal{S}})$:
\[
\mathcal{A}^{\mathcal{S}}_{\Knows_a\psi} = (Q, \iota, \Trans, F)
\]
where:
\begin{itemize}
\item $Q = Q_\mathcal{K} \times Q_{\mathcal{S}} = 2^{Q_\psi} \times Q_{\mathcal{S}}$
\item $\iota = (\iota_\mathcal{K}, \iota_{\mathcal{S}})$
\item $\Trans$ contains the triples $((B, q), \sigma, (B', q'))$ such that $\delta_\mathcal{K}(B, \obs_a(\sigma)) = B'$ 
      and $(q, \sigma, q') \in \Trans_{\mathcal{S}}$
\item $F = Q_\mathcal{K} \times F_{\mathcal{S}}$ (acceptance determined by $\mathcal{S}$ component)
\end{itemize}

The output function is defined as follows:
\[
\gamma_{\Knows_a\psi}((B,q)) =
\begin{cases}
1 & \text{if } B \in F_\mathcal{K}\\
0 & \text{otherwise}
\end{cases}
\]

Clearly, as acceptance is determined entirely by the second component, this ensures $L(\mathcal{A}^{\mathcal{S}}_{\Knows_a\psi}) = L(\mathcal{S})$.

\paragraph{Correctness of the output map.}
We show that for every word $w=\sigma_1\sigma_2\ldots\in L(\mathcal S)$, every
accepting run
\[
(B_0,q_0)\xrightarrow{\sigma_1}(B_1,q_1)\xrightarrow{\sigma_2}(B_2,q_2)\xrightarrow{\sigma_3}\cdots
\]
of $\mathcal A^{\mathcal S}_{\Knows_a\psi}$ over $w$, and every $k\in\Npos$, we have
\[
\gamma_{\Knows_a\psi}((B_k,q_k))=1
\quad\Longleftrightarrow\quad
\mathcal S,w,k\models \Knows_a\psi.
\]
By construction of $\Trans$, the first component is updated according to
\[
B_{k}=\delta_\mathcal{K}^\ast(\iota_\mathcal{K},\,\obs_a(w_{\le k})),
\]
since at step $i$ the monitor reads exactly the observation letter
$\obs_a(\sigma_i)$.

\smallskip\noindent
{($\Rightarrow$)}
Assume $\gamma_{\Knows_a\psi}((B_k,q_k))=1$.
Then $B_k\in F_\mathcal{K}$ by definition of $\gamma_{\Knows_a\psi}$.
Let $u=\obs_a(w_{\le k})$, which is contained in $\Obs_a(\mathcal S)$.
By the correctness of the knowledge-monitor construction, $u\in L(\mathcal K_\psi^{\mathcal S,a})$
implies
\[
\mathcal S,u\models_a \psi.
\]
By Proposition~\ref{prop:models-a-iff-K}, we obtain
$\mathcal S,w,k\models \Knows_a\psi$.


\smallskip\noindent
{($\Leftarrow$)}
Assume $\mathcal S,w,k\models \Knows_a\psi$.
Let $u=\obs_a(w_{\le k})$.
By Proposition~\ref{prop:models-a-iff-K},
we have 
$\mathcal S,u\models_a\psi$.
By correctness of the monitor construction, this implies
$u\in L(\mathcal K_\psi^{\mathcal S,a})$, hence
$\delta_\mathcal{K}^\ast(\iota_\mathcal{K},u)\in F_\mathcal{K}$.
Using $B_k=\delta_\mathcal{K}^\ast(\iota_\mathcal{K},u)$, we obtain $B_k\in F_\mathcal{K}$,
and therefore $\gamma_{\Knows_a\psi}((B_k,q_k))=1$.

\smallskip\noindent
This concludes the proof of correctness of the output map for all $k\in\Npos$.

\bigskip

Altogether, we have shown Theorems~\ref{thm:automaton-construction} and \ref{thm:monitor-knowledge}.

\subsection*{Solving the model-checking problem}

We now show how the model-checking problem can be solved using a
transducer.
Recall from Definition~\ref{def:model-checking} that, given a Büchi
automaton $\mathcal S = (Q,\iota,\Trans,F)$ and an \LTL formula $\varphi$,
the model-checking problem asks whether $\mathcal S \models \varphi$,
that is, whether $\mathcal S,w,1 \models \varphi$ holds for every
$w \in L(\mathcal S)$.

\begin{theorem}[Decidability of model checking]
\label{thm:decidability-mc}
The model-checking problem is decidable.
\end{theorem}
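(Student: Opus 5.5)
We reduce model checking to a Büchi emptiness check on the transducer from Theorem~\ref{thm:automaton-construction}. Given $\mathcal S$ and $\varphi$, we first build the $(\mathcal S,\varphi)$-transducer, i.e.\ the Büchi automaton $\mathcal A^{\mathcal S}_\varphi=(Q_\varphi,\iota_\varphi,\Trans_\varphi,F_\varphi)$ with output map $\gamma_\varphi$. Its correctness guarantees that $L(\mathcal A^{\mathcal S}_\varphi)=L(\mathcal S)$. It also guarantees that along every accepting run $\pi$ over $w$, the state $\pi(k)$ has $\gamma_\varphi(\pi(k))=1$ iff $\mathcal S,w,k\models\varphi$. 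We only need position $k=1$, i.e.\ the state $q_1$ reached after the first letter.

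Next we build a Büchi automaton $\mathcal B$ accepting exactly the words $w\in L(\mathcal S)$ with $\mathcal S,w,1\not\models\varphi$. We do this by restricting the first transition of $\mathcal A^{\mathcal S}_\varphi$. We take a fresh initial state $\iota'$ (or just $\iota_\varphi$ if it has no incoming transitions). Its outgoing transitions are exactly the transitions $(\iota_\varphi,\sigma,q')\in\Trans_\varphi$ with $\gamma_\varphi(q')=0$. All other states and transitions are kept, and $F$ stays $F_\varphi$. A fresh initial state is needed because $\iota_\varphi$ might be revisited later in a run. Then $\mathcal S\models\varphi$ iff $L(\mathcal B)=\emptyset$:
\begin{itemize}
\item If $w\in L(\mathcal B)$, its accepting run is also an accepting run of $\mathcal A^{\mathcal S}_\varphi$ over $w$ with $\gamma_\varphi(q_1)=0$. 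So $w\in L(\mathcal S)$ and $\mathcal S,w,1\not\models\varphi$.
\item Conversely, if some $w\in L(\mathcal S)$ violates $\varphi$ at position~$1$, then $w\in L(\mathcal A^{\mathcal S}_\varphi)$, so an accepting run exists. By correctness of the transducer, this run has $\gamma_\varphi(q_1)=0$ (this holds for every accepting run), so it is a run of $\mathcal B$.
\end{itemize}

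Finally, emptiness of a Büchi automaton is decidable. We check whether some accepting state reachable from the initial state lies on a cycle, using graph reachability or nested DFS. All constructions are effective, including the GBA-to-Büchi conversions and the subset construction inside the knowledge monitors, so the whole procedure terminates.

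The only delicate point is that the transducer's correctness is stated only for accepting runs. Hence the argument must go through a single accepting run rather than arbitrary prefixes, and this works precisely because every accepting run over $w$ labels position~$1$ correctly. The rest is routine. One could also remark on complexity: each nested $\Knows_a$ incurs an exponential blow-up through the belief construction, giving a tower of exponentials in the knowledge depth. Decidability needs only effectiveness, though.
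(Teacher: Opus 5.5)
Your proposal is correct and follows the paper's proof: build the $(\mathcal S,\varphi)$-transducer, use that every accepting run labels position~$1$ correctly, and reduce $\mathcal S\not\models\varphi$ to the existence of an accepting run with $\gamma_\varphi(\pi(1))=0$, which is a Büchi emptiness question. Your fresh-initial-state automaton $\mathcal B$ spells out the first-transition restriction that the paper leaves implicit, and it is handled correctly.
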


\begin{proof}
We reduce model checking to the emptiness problem for Büchi automata.
From $\mathcal S$ and $\phi$, we construct the $(\mathcal S,\phi)$-transducer
$\mathcal A^{\mathcal S}_\phi$ with output map $\gamma_\phi$.

Recall that a run of $\mathcal A^{\mathcal S}_\phi$ has the form
$q_0 \xrightarrow{\sigma_1} q_1 \xrightarrow{\sigma_2} q_2 \xrightarrow{\sigma_3} \cdots$.
The state $q_0$ is the initial state (before reading the word), and $q_1$
corresponds to the first letter $\sigma_1$.
By correctness of the transducer construction, for every accepting run $\pi$
over $w\in L(\mathcal S)$ we have
\[
\gamma_\phi(\pi(1)) = 1
\quad\Longleftrightarrow\quad
\mathcal S,w,1 \models \phi .
\]
Hence, we have the following equivalences:
\[
\begin{array}{rcl}
\mathcal S \models \phi
&\Longleftrightarrow&
\forall w\in L(\mathcal S)\colon \mathcal S,w,1\models\phi\\[1ex]
&\Longleftrightarrow&
\neg\exists w\in L(\mathcal S)\colon \mathcal S,w,1\not\models\phi\\[1ex]
&\Longleftrightarrow&
\neg\exists \text{ accepting run } \pi \text{ of }\mathcal A^{\mathcal S}_\phi
\text{ such that } \gamma_\phi(\pi(1))=0
\end{array}
\]

Thus, model checking reduces to checking whether
$\mathcal A^{\mathcal S}_\phi$ admits an accepting run whose state
$q$ at position 1 satisfies $\gamma_\phi(q)=0$.
This is an emptiness problem for Büchi automata, which can be solved using
standard graph-theoretic algorithms for finding reachable accepting
cycles.
\end{proof}

\begin{notebox}[Complexity]
Model checking for \pureLTL is PSPACE-complete.
In contrast, extending \pureLTL with epistemic operators as in \LTL
leads, in general, to a non-elementary complexity blow-up.
This increase is caused by the construction of the knowledge monitor,
which relies on a belief-state construction under perfect recall and
induces an iterated powerset blow-up.
\end{notebox}

\section{Bibliographic Notes}

Epistemic temporal logics under synchronous perfect recall have been
studied extensively.
Early decidability and complexity results were obtained by van der
Meyden and Shilov, who identified a non-elementary blow-up inherent to
reasoning about agents' information sets \cite{MeydenS99}.

In more recent work, Bozzelli, Maubert, and Murano gave a precise complexity analysis in terms
of the alternation depth of epistemic operators \cite{BozzelliMM19}.
Cohen and Lomuscio studied past-time epistemic temporal logics under
synchronous perfect recall and showed that, while the general problem is
non-elementary, substantially better bounds can be obtained for
restricted syntactic fragments \cite{CohenL10}.

Finkbeiner and Siber established
decidability of the logic considered in these lecture notes
via translations to first-order formalisms
for hyperproperties \cite{FinkbeinerS25}. They also considered additional operators,
such as counterfactuals, that go beyond the logic treated here.

For background on epistemic logic and the semantics of knowledge, we
refer to the textbook by Fagin, Halpern, Moses, and Vardi,
\emph{Reasoning about Knowledge} \cite{FHMV1995}.

\chapter{Diagnosability}

In this chapter, we illustrate how epistemic temporal logic can be used to
formalize and analyze classical runtime-verification properties.
In this chapter, we focus on diagnosability and related notions, which capture whether
agents can infer the occurrence or absence of faults from partial
observations.
Thanks to the epistemic perspective, these properties can be expressed
uniformly as logical formulas.
However, it should be noted that this uniform logical formulation does
not, in general, yield optimal complexity bounds for checking these
properties.

\section{Positive and Negative Diagnosability}

While classical definitions of (co-)diagnosability are phrased in terms of
explicit comparisons of observation projections (e.g., \cite{SampathSLST95,WangYL07}),
the epistemic formulation captures the same requirement at the semantic level,
expressing directly that an agent can infer the absence of an error after a
bounded delay.

In the following, the prefix P refers to \emph{positive} diagnosability, while
N refers to \emph{negative} diagnosability.

\begin{definition}[P- and N-diagnosability]
Let $\mathcal{S}$ be a Büchi automaton over $\Sigma$, $a \in \Ag$ be an agent, 
$e \in \AP \setminus \AP_a$ be an error proposition, and 
$\dbound \in \mathbb{N}$ be a maximal delay. We say that $\mathcal{S}$ is
\begin{itemize}
\item $\dbound$-\emph{P-diagnosable} (wrt.\ $a$ and $e$) if
\[
\mathcal{S} \models \tmp{G}\Bigl(e \to \tmp{X}^{\dbound} \Knows_a \tmp{P} e\Bigr),
\]
\item $\dbound$-\emph{N-diagnosable} (wrt.\ $a$ and $e$) if
\[
\mathcal{S} \models
\tmp{X}^{\dbound}\tmp{G}\Bigl(
  \tmp{H}\neg e \to
  \Knows_a\tmp{Y}^{\dbound}\tmp{H}\neg e
\Bigr).
\]
\end{itemize}
We say that $\mathcal{S}$ is \emph{P-diagnosable} (\emph{N-diagnosable})
if there is $\dbound \in \mathbb{N}$ such that it is
$\dbound$-P-diagnosable ($\dbound$-N-diagnosable, respectively).

Finally, we call $\mathcal{S}$ \emph{unbounded-P-diagnosable} (wrt.\ $a$ and $e$) if
\[
\mathcal{S} \models \tmp{G}\Bigl(e \to \tmp{F} \Knows_a \tmp{P} e\Bigr).
\]
\end{definition}

Intuitively, positive diagnosability means that whenever an error occurs,
agent~$a$ knows within $\dbound$ steps that an error has occurred in the
past.
In unbounded P-diagnosability, no bound is imposed on the delay before
such detection.

Dually, negative diagnosability means that after an initial delay of
$\dbound$ time steps, whenever the system has remained fault-free,
agent~$a$ can conclude that the system was fault-free up to $\dbound$
steps in the past.

\begin{example}[Positive- and negative-diagnosability]
We revisit Example~\ref{ex:intro-diagnosis}, where
$\mathit{AP}=\{p,r,e\}$, $a \in \Ag$ is the agent in question, and
$e$ is the unobservable error proposition.

Consider again four Büchi automata
$\mathcal S_1$--$\mathcal S_4$
such that:
\begin{align*}
L(\mathcal S_1) &= \{p\}^\omega + \{p\}^\ast \{p,e\}\{p\}\{r\}^\omega\\[1ex]
L(\mathcal S_2) &= \{p\}^\omega + \{p\}^\ast \{p,e\}\{p\}^\omega\\[1ex]
L(\mathcal S_3) &= \{p\}^\omega + \{p\}^\ast \{e\}\{p\}^\omega\\[1ex]
L(\mathcal S_4) &= \{p\}^\omega + \{p\}^\ast \{p,e\}\{p\}^\ast\{r\}^\omega
\end{align*}
Then, the following hold:
\begin{itemize}
\item $\mathcal S_1$ is both $2$-P-diagnosable and $2$-N-diagnosable.
\item $\mathcal S_2$ is neither P-diagnosable nor N-diagnosable.
\item $\mathcal S_3$ is both $0$-P-diagnosable and $0$-N-diagnosable.
\item $\mathcal S_4$ is neither P-diagnosable nor N-diagnosable; however, it is unbounded-P-diagnosable.
\end{itemize}
\end{example}

It is not a coincidence that positive- and negative-diagnosability coincide
on this example.
In fact, the two notions are equivalent, reflecting the duality between
detecting the occurrence and the absence of errors under partial
observation.
This symmetry is a general phenomenon in the one-agent setting.
The next proposition makes this precise in a uniform way.

\begin{proposition}
Fix $D\in\mathbb N$.
Let $\mathcal{S}$ be a system and fix an agent and an error proposition. Then,
\[
\mathcal{S} \text{ is $D$-P-diagnosable} \quad\Longleftrightarrow\quad \mathcal{S} \text{ is $D$-N-diagnosable}.
\]
\end{proposition}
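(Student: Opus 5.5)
The plan is to unfold both formulas into explicit statements about pairs of executions and observe that a counterexample to one is (after swapping the roles of the two words) a counterexample to the other. Since all words are infinite, $\tmp{X}^{D}$ at position $k$ simply shifts evaluation to position $k+D$, and $\mathcal S,w',\ell\models\tmp{P}e$ means that $e$ occurs at some position $\le \ell$ of $w'$.

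\emph{Step 1 (unfolding P).} $\mathcal S$ fails to be $D$-P-diagnosable iff there exist $w,w'\in L(\mathcal S)$ and $k\in\Npos$ such that
\begin{itemize}
\item $e\in w_k$ (the $k$-th letter of $w$),
\item $w_{\le k+D}\sim_a w'_{\le k+D}$, and
\item $w'$ contains no $e$ at positions $1,\ldots,k+D$.
\end{itemize}
This follows directly from the semantics of $\tmp{G}$, $\tmp{X}^D$ and $\Knows_a$.

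\emph{Step 2 (unfolding N).} $\tmp{X}^{D}\tmp{G}$ evaluated at position $1$ quantifies over all positions $m\ge D+1$. At such an $m$, $\tmp{H}\neg e$ says that $w$ is $e$-free at positions $1,\ldots,m$. Moreover, $\tmp{Y}^{D}\tmp{H}\neg e$ holds for $w'$ at $m$ iff $w'$ is $e$-free at positions $1,\ldots,m-D$; the position $m-D\ge 1$ exists precisely because $m\ge D+1$. This is the one point where the initial offset $\tmp{X}^D$ matters, and I expect it to be the main bookkeeping obstacle to get right. Hence $\mathcal S$ fails to be $D$-N-diagnosable iff there exist $w,w'\in L(\mathcal S)$ and $m\ge D+1$ such that
\begin{itemize}
\item $w$ is $e$-free up to $m$,
\item $w_{\le m}\sim_a w'_{\le m}$, and
\item $e$ occurs in $w'$ at some position $j\le m-D$.
\end{itemize}

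\emph{Step 3 (translating counterexamples).} Given a P-counterexample $(w,w',k)$, I will set $m=k+D\ge D+1$ and swap the two words. The $e$-free word is now $w'$, and $w$ carries $e$ at position $k=m-D$; by symmetry of $\sim_a$, this is an N-counterexample. Conversely, given an N-counterexample $(w,w',m,j)$, I will take $k=j$ and swap the words again. Since $j+D\le m$, the prefixes up to $j+D$ are still indistinguishable, because $\sim_a$ on prefixes is preserved under truncation (as in the solution to Exercise~\ref{ex:knowledge-sanity}(2d)). Likewise, $w$ is $e$-free up to $j+D$, so we obtain a P-counterexample. Both directions of the equivalence therefore follow by contraposition.
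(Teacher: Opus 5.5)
Your proposal is correct and is essentially the paper's argument. Both proofs rest on swapping the roles of the two indistinguishable words, using $m=k+D$ in one direction and, in the other, $k=j$ together with truncation of $\sim_a$ to length $j+D$. The difference is only presentational: the paper proves each implication directly by contradiction on the formulas, whereas you first unfold both properties into explicit counterexample witnesses (as in Proposition~\ref{prop:D-P-char}) and then map witnesses to witnesses.
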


\begin{proof}
Fix $D\in\mathbb N$. Let
\[
\text{(P$_D$)}\quad
\mathcal{S} \models \tmp{G}\Bigl(e \to \tmp{X}^{D} \Knows_a \tmp{P} e\Bigr)
\qquad
\text{(N$_D$)}\quad
\mathcal{S} \models
\tmp{X}^{D}\tmp{G}\Bigl(
  \tmp{H}\neg e \to
  \Knows_a\tmp{Y}^{D}\tmp{H}\neg e
\Bigr).
\]
We show (P$_D$)\,$\Rightarrow$\,(N$_D$) and (N$_D$)\,$\Rightarrow$\,(P$_D$).

\medskip\noindent
{(P$_D$)\,$\Rightarrow$\,(N$_D$).}
Assume (P$_D$). We prove (N$_D$), i.e.\ we show that for all $w\in L(\mathcal S)$ and all positions $k\ge 1$,
\[
\mathcal S,w,k{+}D\models \tmp{H}\neg e
\quad \Longrightarrow\quad
\mathcal S,w,k{+}D\models \Knows_a\tmp{Y}^{D}\tmp{H}\neg e.
\]
So fix $w\in L(\mathcal S)$ and $k\ge 1$ with $\mathcal S,w,k{+}D\models\tmp{H}\neg e$.
Let $w'\in L(\mathcal S)$ with $(w,k{+}D)\sim_a(w',k{+}D)$.
It suffices to prove $\mathcal S,w',k{+}D\models \tmp{Y}^{D}\tmp{H}\neg e$,
equivalently $\mathcal S,w',k\models\tmp{H}\neg e$.

Suppose towards a contradiction that $\mathcal S,w',k\not\models \tmp{H}\neg e$.
Then $\mathcal S,w',k\models\tmp{P}e$, so choose $j$ with $1\le j\le k$ such that
$\mathcal S,w',j\models e$.
By (P$_D$) applied to $w'$ at time $j$,
\[
\mathcal S,w',j{+}D \models \Knows_a\tmp{P}e.
\]
Since $(w,k{+}D)\sim_a(w',k{+}D)$ means
$\obs_a(w_{\le k{+}D})=\obs_a(w'_{\le k{+}D})$,
by truncation we obtain
$\obs_a(w_{\le j{+}D})=\obs_a(w'_{\le j{+}D})$, i.e.\ $(w,j{+}D)\sim_a(w',j{+}D)$.
By the semantics of knowledge, this yields
\[
\mathcal S,w,j{+}D \models \tmp{P}e.
\]
As $j{+}D\le k{+}D$, it follows that $\mathcal S,w,k{+}D \models \tmp{P}e$,
contradicting $\mathcal S,w,k{+}D\models \tmp{H}\neg e \equiv \neg\tmp{P}e$.
Hence $\mathcal S,w',k\models\tmp{H}\neg e$, as required.
Therefore $\mathcal S,w,k{+}D\models \Knows_a\tmp{Y}^{D}\tmp{H}\neg e$, proving (N$_D$).

\medskip\noindent
{(N$_D$)\,$\Rightarrow$\,(P$_D$).}
Assume (N$_D$). We prove (P$_D$), i.e.\ for all $w\in L(\mathcal S)$ and all $k\ge 1$,
\[
\mathcal S,w,k\models e
\quad \Longrightarrow\quad
\mathcal S,w,k{+}D\models \Knows_a\tmp{P}e.
\]
So fix $w\in L(\mathcal S)$ and $k\ge 1$ with $\mathcal S,w,k\models e$.
Let $w'\in L(\mathcal S)$ with $(w,k{+}D)\sim_a(w',k{+}D)$.
We show $\mathcal S,w',k{+}D\models\tmp{P}e$.

Suppose towards a contradiction that $\mathcal S,w',k{+}D\not\models\tmp{P}e$.
Then $\mathcal S,w',k{+}D\models \tmp{H}\neg e$.
Apply (N$_D$) to $w'$ at time $k$:
\[
\mathcal S,w',k{+}D\models \Knows_a\tmp{Y}^{D}\tmp{H}\neg e.
\]
Since $(w,k{+}D)\sim_a(w',k{+}D)$, the semantics of knowledge yields
\[
\mathcal S,w,k{+}D\models \tmp{Y}^{D}\tmp{H}\neg e,
\]
i.e.\ $\mathcal S,w,k\models \tmp{H}\neg e$, contradicting $\mathcal S,w,k\models e$.
Hence $\mathcal S,w',k{+}D\models\tmp{P}e$.

Since $w'$ was arbitrary, $\mathcal S,w,k{+}D\models \Knows_a\tmp{P}e$.
Thus (P$_D$) holds.
\end{proof}

\begin{exercise}[Four alternative definitions of diagnosability]
Let $\mathcal S$ be a system, $a \in \Ag$ be an agent, and
$e \in \AP \setminus \AP_a$ be an unobservable error proposition.
Fix a delay bound $D \in \mathbb{N}$.

Consider the following alternative candidate formulas for diagnosability:
\begin{align*}
\phi_1 &=
\tmp{G}\Bigl(
  e \to
  \tmp{X}^{\dbound}\Knows_a\tmp{Y}^{\dbound}\tmp{P} e
\Bigr)\\[0.5ex]
\phi_2 &=
\tmp{G}\Bigl(
  \tmp{H}\neg e \to
  \tmp{X}^{\dbound}\Knows_a\tmp{Y}^{\dbound}\tmp{H}\neg e
\Bigr)\\[0.5ex]
\phi_3 &=
\tmp{X}^{\dbound}\tmp{G}\Bigl(
  \tmp{H}\neg e \to
  \tmp{Y}^{\dbound}\Knows_a\tmp{H}\neg e
\Bigr)\\[0.5ex]
\phi_4 &=
\tmp{G}\Bigl(
  \tmp{H}\neg e \to
  \tmp{X}^{\dbound}\Knows_a\tmp{H}\neg e
\Bigr)
\end{align*}
Give an informal interpretation of each formula and
explain why neither formula captures
diagnosability as we had defined it before.
Which of them are ``reasonable'' definitions.
\emph{Hint:} You may use
$\{p, e\}\{p\}^\omega + 
\{p\}\{p,e\}\{p\}^\omega$ to separate certain formulas.
\end{exercise}

We conclude this section by considering the algorithmic complexity of
diagnosability problems.
Although the definitions are given in terms of epistemic temporal
formulas (and are therefore decidable by
Theorem~\ref{thm:decidability-mc}), the corresponding decision problems
can be solved efficiently on finite-state systems.
We illustrate this for $\dbound$-P-diagnosability.

Fix a (system) Büchi automaton $\mathcal{S}=(Q, \iota, \Delta, F)$, an agent $a \in \Ag$, 
an error proposition $e \in \AP \setminus \AP_a$, and a bound $\dbound \in \mathbb{N}$.
We say that a word $u = \sigma_1 \ldots \sigma_n \in \Sigma^\ast$ is
\emph{faulty} if there exists $i \in \{1,\ldots,n\}$ such that
$e \in \sigma_i$.
If $u$ is not faulty, it is called \emph{fault-free}.

The algorithm is based on the following characterization of $\dbound$-P-diagnosability:

\begin{proposition}[Characterization of bounded P-diagnosability]
\label{prop:D-P-char}
The following statements are equivalent:
\begin{enumerate}
\item[(1)] $\mathcal{S}$ is \emph{not} $\dbound$-P-diagnosable (wrt.\ $a$ and $e$), i.e.
$
\mathcal{S} \not\models \tmp{G}\bigl(e \to \tmp{X}^{\dbound}\Knows_a\tmp{P}e\bigr).
$
\item[(2)] There exist two words $w, w'\in L(\mathcal S)$ and a position
$k \ge 1$ such that we have
(i)~$\mathcal S, w,k\models e$,
(ii)~$w'_{\le k+\dbound}$ is fault-free, and
(iii)~$\obs_a(w_{\le k+\dbound})=\obs_a(w'_{\le k+\dbound})$.
\end{enumerate}
\end{proposition}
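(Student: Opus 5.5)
The plan is to unfold the semantics of the formula $\tmp{G}\bigl(e \to \tmp{X}^{\dbound}\Knows_a\tmp{P}e\bigr)$ directly and show that a violation of it is exactly a witness pair $(w,w')$ as in (2). First I will record the elementary semantic facts I need. $\mathcal S,w,1\models\tmp{G}\chi$ holds iff $\mathcal S,w,k\models\chi$ for all $k\ge 1$. $\mathcal S,w,k\models\tmp{X}^{\dbound}\chi$ holds iff $\mathcal S,w,k{+}\dbound\models\chi$, by induction on $\dbound$ using $\tmp{X}\chi=\bot\sUntil\chi$ (words are infinite, so position $k{+}\dbound$ always exists). Finally, $\mathcal S,w',m\models\tmp{P}e$ holds iff $w'_{\le m}$ is faulty, since $\tmp{P}e=\top\tmp{S}e$ unfolds to the existence of some $j$ with $1\le j\le m$ and $e\in\sigma'_j$.

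For (1)$\Rightarrow$(2), I take $w\in L(\mathcal S)$ with $\mathcal S,w,1\not\models\tmp{G}(e\to\tmp{X}^{\dbound}\Knows_a\tmp{P}e)$. The facts above give a position $k\ge1$ with $\mathcal S,w,k\models e$ (this is (i)) and $\mathcal S,w,k{+}\dbound\not\models\Knows_a\tmp{P}e$. By the semantics of $\Knows_a$, there is $w'\in L(\mathcal S)$ with $(w,k{+}\dbound)\sim_a(w',k{+}\dbound)$, which is (iii), and $\mathcal S,w',k{+}\dbound\not\models\tmp{P}e$. By the characterization of $\tmp{P}e$, the latter says that $w'_{\le k+\dbound}$ is fault-free, which is (ii).

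For (2)$\Rightarrow$(1), I run the same chain backwards. Condition (ii) gives $\mathcal S,w',k{+}\dbound\not\models\tmp{P}e$. Condition (iii) says exactly that $w'$ is $\sim_a$-equivalent to $w$ at $k{+}\dbound$, so $\mathcal S,w,k{+}\dbound\not\models\Knows_a\tmp{P}e$. Together with (i), the implication $e\to\tmp{X}^{\dbound}\Knows_a\tmp{P}e$ fails at position $k$ of $w$. Hence $\tmp{G}(\ldots)$ fails at position $1$ of $w$, and since $w\in L(\mathcal S)$ we get $\mathcal S\not\models\ldots$.

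There is no real obstacle here; the argument is a direct unfolding of definitions. The only points needing care are the derived operators. For the $\tmp{G}$ step I have to check that $\tmp{F}$ built from the non-strict $\tmp{U}$ covers the position $k=1$ itself. For $\tmp{P}e$ I have to check that the non-strict $\tmp{S}$ includes the current position, so that $j$ ranges over $1,\ldots,m$ inclusive; this matches the definition of ``faulty'', which includes the last letter. I would state these two unfoldings as small auxiliary observations before the main argument.
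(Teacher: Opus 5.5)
Your proposal is correct and follows essentially the same route as the paper. In both, each direction is a direct unfolding of the semantics of $\tmp{G}$, $\tmp{X}^{\dbound}$, $\Knows_a$ and $\tmp{P}e$: the indistinguishable witness $w'$ for $\neg\Knows_a\tmp{P}e$ at position $k+\dbound$ gives (iii), and the failure of $\tmp{P}e$ there gives (ii). Your explicit check that the derived non-strict operators include the current position is extra care that the paper leaves implicit.
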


\begin{proof}
We show $(1)\Leftrightarrow(2)$.

\medskip\noindent
$(1)\Rightarrow(2)$:
Assume
\[
\mathcal{S} \not\models \tmp{G}\bigl(e \to \tmp{X}^{\dbound}\Knows_a\tmp{P}e\bigr).
\]
Then there exist $w\in L(\mathcal S)$ and a position $k\ge 1$ such that
\[
\mathcal S,w,k\models e
\quad\text{and}\quad
\mathcal S,w,k+\dbound \not\models \Knows_a\tmp{P}e.
\]
According to the semantics of $\Knows_a$, there exists $w'\in L(\mathcal S)$ such that
\[
\obs_a(w_{\le k+\dbound})=\obs_a(w'_{\le k+\dbound})
\qquad\text{and}\qquad
\mathcal S,w',k+\dbound \not\models \tmp{P}e.
\]
The latter implies that $w'_{\le k+\dbound}$ is fault-free.
Thus $w$, $w'$, and $k$ satisfy the requirements in~(2).

\medskip\noindent
$(2)\Rightarrow(1)$:
Assume that there exist $w,w'\in L(\mathcal S)$ and $k\ge 1$ such that
$\mathcal S,w,k\models e$, the prefix $w'_{\le k+\dbound}$ is fault-free, and
$\obs_a(w_{\le k+\dbound})=\obs_a(w'_{\le k+\dbound})$.
Since $w'_{\le k+\dbound}$ is fault-free, we have
\[
\mathcal S,w',k+\dbound \not\models \tmp{P}e.
\]
Moreover, $\obs_a(w_{\le k+\dbound})=\obs_a(w'_{\le k+\dbound})$ means,
by the semantics of knowledge, that
\[
\mathcal S,w,k+\dbound \not\models \Knows_a\tmp{P}e.
\]
Together with $\mathcal S,w,k\models e$, this shows that
\[
\mathcal S \not\models \tmp{G}\bigl(e \to \tmp{X}^{\dbound}\Knows_a\tmp{P}e\bigr),
\]
i.e.\ $\mathcal S$ is not $\dbound$-P-diagnosable.
\end{proof}

\begin{theorem}[Deciding bounded P-diagnosability in polynomial time]
\label{thm:decide-D-P-diagnosability}
The following problem can be decided in polynomial time:

Given a Büchi automaton $\mathcal{S}$ over $\Sigma$, $\dbound \in \mathbb{N}$ (encoded in unary), $a \in \Ag$, and an unobservable error proposition $e$,
is $\mathcal{S}$ $\dbound$-P-diagnosable (wrt.\ $a$ and $e$)?
\end{theorem}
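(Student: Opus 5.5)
We use Proposition~\ref{prop:D-P-char} and decide its condition~(2), which witnesses \emph{non}-diagnosability. Since P is closed under complement, this suffices. Condition~(2) asks for two accepting executions of $\mathcal S$ that are observationally equal up to position $k+\dbound$: the first has $e$ at position $k$, the second is fault-free up to $k+\dbound$, and after $k+\dbound$ the two are unconstrained. So we search a product of $\mathcal S$ with itself, synchronised on observations, extended by a counter for the delay.

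\textbf{Construction.} Assume $\mathcal S=(Q,\iota,\Trans,F)$. Before building anything, restrict $\mathcal S$ to its productive states, i.e.\ those from which an accepting run exists. This is computable in polynomial time via SCC analysis: find the states that can reach an accepting state lying on a cycle. After this pruning, every finite run from $\iota$ extends to an accepting run, so the two infinite suffixes can be completed independently.

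We then build a finite graph $G$ with vertices $(q,q',c)$, where $q,q'\in Q$ and $c\in\{\bot,0,1,\ldots,\dbound\}$. There is an edge $(q,q',c)\to(p,p',c')$ whenever:
\begin{itemize}
\item $(q,\tau,p),(q',\tau',p')\in\Trans$ with $\obs_a(\tau)=\obs_a(\tau')$;
\item $e\notin\tau'$;
\item if $c=\bot$, then either $c'=\bot$, or $e\in\tau$ and $c'=0$;
\item if $c\neq\bot$, then $c<\dbound$ and $c'=c+1$.
\end{itemize}
Reading the tag: $\bot$ means the error position $k$ of $w$ has not yet been fixed, and $c$ counts the steps taken since position $k$.

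\textbf{Decision.} $\mathcal S$ is not $\dbound$-P-diagnosable iff some vertex $(q,q',\dbound)$ is reachable in $G$ from $(\iota,\iota,\bot)$. If the first transition already carries $e$, the tag jumps straight to $0$, which is consistent with $k=1$. Reachability is solved by BFS. $G$ has $|Q|^2(\dbound+2)$ vertices and at most $|\Trans|^2(\dbound+2)$ edges, which is polynomial because $\dbound$ is given in unary. When $\dbound=0$, the target is $c=0$, reached by the step that reads the error.

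\textbf{Correctness.} For $(2)\Rightarrow$ reachability: the prefixes $w_{\le k+\dbound}$ and $w'_{\le k+\dbound}$, taken along accepting runs (whose states are productive and hence survive pruning), trace a path in $G$ ending with tag $\dbound$. For the converse: a path to tag $\dbound$ yields finite runs over words $v$ and $v'$ of equal length $k+\dbound$ with equal observations, $e$ at position $k$ of $v$, and $v'$ fault-free. By productivity both runs extend to accepting runs, giving $w,w'\in L(\mathcal S)$. Requirements~(i)--(iii) of Proposition~\ref{prop:D-P-char} then depend only on the prefixes, so they hold.

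The only delicate point is this extension step. It is exactly why the pruning to productive states is needed: without it, a reachable pair might not extend to accepting runs, and the converse direction would fail.
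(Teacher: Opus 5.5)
Your proof is correct and is the intended argument: the paper leaves this theorem as an exercise to be solved via Proposition~\ref{prop:D-P-char}, and your observation-synchronised self-product with a delay counter is the same twin-plant idea the paper sketches for the timed analogue. Pruning to productive states and then running plain reachability is a valid, slightly simpler substitute for a B\"uchi non-emptiness check on the product with both acceptance conditions. If $\iota$ itself is unproductive, then $L(\mathcal S)=\emptyset$ and the answer is trivially yes.
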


\begin{exercise}
Prove Theorem~\ref{thm:decide-D-P-diagnosability}
using Proposition~\ref{prop:D-P-char}.
\end{exercise}

%
%
%

\begin{notebox}[On unbounded-P-diagnosability]
Unbounded-P-diagnosability is not a standard notion in the classical
diagnosis literature, but it arises naturally in the logical formulation.
Unlike P-diagnosability, it does not guarantee a uniform bound on the delay
until detection and therefore provides weaker runtime guarantees.

In finite-state systems modeled as Kripke structures (or, equivalently,
automata where all states are accepting), unbounded-P-diagnosability and
P-diagnosability coincide: eventual detection implies detection within a
uniform bound.
This fact underlies the classical diagnosability algorithms.

In contrast, in the more general setting of Büchi automata, acceptance
conditions may rule out infinite indistinguishability without imposing a
uniform bound.
As a result, unbounded-P-diagnosability and P-diagnosability no longer
coincide.
\end{notebox}

\begin{notebox}[General error properties]
In contrast, the epistemic-logic framework allows error events to be specified by
arbitrary temporal formulas, so that one can reason about more
general classes of faulty behaviors.
\end{notebox}

\section{Diagnosers and Knowledge Monitors}

In classical diagnosis, one often starts with an abstract definition
of a diagnoser as a function that maps finite observation sequences to
diagnosis decisions.

\begin{definition}[Diagnoser]
Let $\mathcal S$ be a Büchi automaton, $a\in\Ag$ be an agent, 
$e\in\AP\setminus\AP_a$ be an unobservable error proposition, and
$D\in\mathbb N$.
A \emph{$D$-P-diagnoser for $\mathcal S$} (wrt.\ $a$ and $e$) is a
function
\[
f \colon \Sigma_a^\ast \to \{0,1\}
\]
such that the following conditions hold:
\begin{enumerate}
\item (\emph{Detection})
For all faulty $u \in \Sigma^+$ and all $v \in \Sigma^\ast$ such that
$|v| = D$ and $uv \in \Pref(L(\mathcal S))$, we have $f(\obs_a(uv))=1$.
\item (\emph{No false alarms})
For all fault-free $u\in\Pref(L(\mathcal S))$,
we have $f(\obs_a(u))=0$.
\end{enumerate}
\end{definition}

The first condition means that on every execution prefix $u$ in which a fault has
already occurred, the diagnoser outputs $1$ after $D$ further steps.
Note that since the condition holds for all faulty prefixes, detection
persists at all later times.
The second condition forbids false positives.

\medskip

We now relate this classical notion to knowledge monitors.

\begin{proposition}[Knowledge monitors as diagnosers]
\label{prop:kmasdiagnosers}
Let $\mathcal S$ be a Büchi automaton over $\Sigma$, let $a\in\Ag$, let
$e\in\AP\setminus\AP_a$, and let $D\in\mathbb N$.
If $\mathcal S$ is $D$-P-diagnosable (wrt.\ $a$ and $e$),
then the function
$f=\sem{\mathcal K_{\tmp{P}e}^{\mathcal S,a}}$
is a $D$-P-diagnoser for~$\mathcal S$.

Recall that $\mathcal K_{\tmp{P}e}^{\mathcal S,a}$
is an $(\mathcal S,\tmp{P}e)$-knowledge monitor
and that $f$ satisfies, for all $u \in \Obs_a(\mathcal{S})$:
\[
f(u) =
\begin{cases}
1 & \text{if } \mathcal S,u\models_a \tmp{P}e\\
0 & \text{otherwise}
\end{cases}
\]

\end{proposition}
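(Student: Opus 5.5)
The plan is to verify the two defining conditions of a $D$-P-diagnoser directly. In both cases the observation of a prefix of $\mathcal S$ is translated into a statement $\mathcal S,w,k\models\Knows_a\tmp{P}e$ via Proposition~\ref{prop:models-a-iff-K}. The correctness of the knowledge monitor (Theorem~\ref{thm:monitor-knowledge}) then gives $f(u)=1$ iff $\mathcal S,u\models_a\tmp{P}e$ for every $u\in\Obs_a(\mathcal S)$. A preliminary remark: every argument of $f$ occurring in the two conditions has the form $\obs_a(x)$ with $x\in\Pref(L(\mathcal S))$. Such an argument therefore lies in $\Obs_a(\mathcal S)$, which is exactly where the monitor's specification applies.

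\emph{Detection.} Let $u\in\Sigma^+$ be faulty, and let $v$ satisfy $|v|=D$ and $uv\in\Pref(L(\mathcal S))$. Pick $w\in L(\mathcal S)$ with $w_{\le |u|+D}=uv$. Since $u$ is faulty, there is some $j\le|u|$ with $\mathcal S,w,j\models e$. By $D$-P-diagnosability, $\mathcal S,w,j{+}D\models\Knows_a\tmp{P}e$. By non-forgetting (Exercise~\ref{ex:knowledge-sanity}(2d), or simply because $\tmp{P}e$ is persistent along every indistinguishable execution, as shown by truncating observations), this yields $\mathcal S,w,|u|{+}D\models\Knows_a\tmp{P}e$. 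I expect to argue this directly: any $w'$ that is $\sim_a$-equivalent to $w$ at $|u|+D$ is also equivalent at $j+D$, hence satisfies $\tmp{P}e$ at $j+D$ and therefore also at the later position $|u|+D$. Proposition~\ref{prop:models-a-iff-K} then gives $\mathcal S,\obs_a(uv)\models_a\tmp{P}e$, so $f(\obs_a(uv))=1$. (Note that $\Knows_a\tmp{P}\tmp{P}e$ reduces to $\Knows_a\tmp{P}e$, which is why I prefer the direct argument over citing (2d) literally.)

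\emph{No false alarms.} Let $u\in\Pref(L(\mathcal S))$ be fault-free, with $k=|u|$, and choose $w\in L(\mathcal S)$ with $w_{\le k}=u$. Then $\mathcal S,w,k\not\models\tmp{P}e$. Since $w$ itself is compatible with $\obs_a(u)$, we get $\mathcal S,\obs_a(u)\not\models_a\tmp{P}e$, hence $f(\obs_a(u))=0$. This direction does not use diagnosability at all; it is just veridicality.

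The only delicate point, which I see as the main obstacle, is bookkeeping rather than mathematics. First, one has to make sure the relevant observation words really lie in $\Obs_a(\mathcal S)$, since $f$ is unconstrained outside that set. Second, one has to handle the shift from the detection time $j+D$ to the later time $|u|+D$ correctly, since the fault in $u$ need not occur at its last position. Both issues are addressed by the choices described above.
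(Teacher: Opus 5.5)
Your proposal is correct and follows the paper's proof essentially step for step. For detection, you apply $D$-P-diagnosability at the fault position, propagate $\Knows_a\tmp{P}e$ forward to position $|uv|$, and then use Proposition~\ref{prop:models-a-iff-K} and monitor correctness; for no false alarms, you use the compatible execution extending $u$. Your direct persistence argument, which avoids the literal $\Knows_a\tmp{P}\tmp{P}e$ produced by (2d), and your check that every argument of $f$ lies in $\Obs_a(\mathcal S)$ are small precisions that the paper leaves implicit.
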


\begin{proof}
We show the two properties of a $D$-P-diagnoser.

\emph{Detection.}
Let $u$ be faulty and let $v\in\Sigma^\ast$ with $|v|=D$ such that
$uv\in\Pref(L(\mathcal S))$.
Choose $w\in L(\mathcal S)$ extending $uv$, and let $k\le |u|$ be such that
$\mathcal S,w,k\models e$.
By $D$-P-diagnosability, we obtain
\[
\mathcal S,w,k{+}D \models \Knows_a \tmp{P}e.
\]
Since $k\le |u|$ and $|v|=D$, we have $k{+}D \le |uv|$.
By non-forgetting (Exercise~\ref{ex:knowledge-sanity}(2d)),
knowledge of the past formula $\tmp{P}e$ persists at later positions. Thus
\[
\mathcal S,w,|uv| \models \Knows_a \tmp{P}e.
\]
By Proposition~\ref{prop:models-a-iff-K}, this is equivalent to
\[
\mathcal S,\obs_a(uv) \models_a \tmp{P}e.
\]
By correctness of the knowledge monitor,
$f(\obs_a(uv))=1$.

\emph{No false alarms.}
Let $u\in\Pref(L(\mathcal S))$ be fault-free.
Then $\mathcal S,\obs_a(u)\not\models_a\tmp{P}e$,
hence $f(\obs_a(u))=0$.

\smallskip
Thus $f$ satisfies both conditions of a $D$-P-diagnoser.
\end{proof}

\begin{proposition}[Diagnosability vs. existence of a diagnoser]
\label{prop:DPdiag-iff-diagnoser}
Let $\mathcal S$ be a Büchi automaton over $\Sigma$, let $a\in\Ag$, let
$e\in\AP\setminus\AP_a$, and let $D\in\mathbb N$.
The following are equivalent:
\begin{enumerate}
\item $\mathcal S$ is $D$-P-diagnosable (wrt.\ $a$ and $e$), i.e.
\[
\mathcal S \models \tmp{G}\bigl(e \to \tmp{X}^{D}\Knows_a \tmp{P}e\bigr).
\]
\item There exists a $D$-P-diagnoser for $\mathcal S$ (wrt.\ $a$ and $e$).
\end{enumerate}
\end{proposition}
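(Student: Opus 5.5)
The direction (1)$\Rightarrow$(2) is immediate from Proposition~\ref{prop:kmasdiagnosers}: if $\mathcal S$ is $D$-P-diagnosable, then $f=\sem{\mathcal K_{\tmp{P}e}^{\mathcal S,a}}$ is a $D$-P-diagnoser. All the work is in (2)$\Rightarrow$(1). I would prove it by contraposition, using the characterization of Proposition~\ref{prop:D-P-char}.

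So assume $\mathcal S$ is not $D$-P-diagnosable, and let $f$ be an arbitrary function $\Sigma_a^\ast\to\{0,1\}$; we show $f$ cannot be a $D$-P-diagnoser. By Proposition~\ref{prop:D-P-char}(2) there are $w,w'\in L(\mathcal S)$ and $k\ge 1$ such that $\mathcal S,w,k\models e$, the prefix $w'_{\le k+D}$ is fault-free, and $\obs_a(w_{\le k+D})=\obs_a(w'_{\le k+D})$. Set $u=w_{\le k}$ and $v=\sigma_{k+1}\ldots\sigma_{k+D}$, the next $D$ letters of $w$, so that $|v|=D$ and $uv=w_{\le k+D}$. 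Then:
\begin{itemize}
\item $u$ is faulty, since $e\in\sigma_k$, and it is nonempty because $k\ge1$;
\item $uv\in\Pref(L(\mathcal S))$, since $k+D\ge1$.
\end{itemize}
Detection therefore forces $f(\obs_a(uv))=1$. On the other hand, $w'_{\le k+D}\in\Pref(L(\mathcal S))$ is fault-free, so the no-false-alarms condition forces $f(\obs_a(w'_{\le k+D}))=0$. These two observation words coincide, so $f$ would have to output both $1$ and $0$ on the same input. Hence no $D$-P-diagnoser exists, which is the contrapositive of (2)$\Rightarrow$(1).

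The only delicate step is matching the quantifier structure of the Detection clause to the witness from Proposition~\ref{prop:D-P-char}. Detection speaks about $u\in\Sigma^+$ and $v$ with $|v|=D$, and we need $uv$ to be exactly $w_{\le k+D}$. This requires choosing $u$ to end precisely at the fault position $k$, and checking the edge case $D=0$, where $v=\varepsilon$ and the argument goes through unchanged.

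An alternative would be to argue directly via knowledge: if $f$ is a diagnoser, then $\Knows_a\tmp{P}e$ holds at position $k+D$, because every indistinguishable $w'$ that is fault-free up to $k+D$ would trigger a false alarm. That is essentially the same argument, so I would go with the contrapositive using Proposition~\ref{prop:D-P-char}.
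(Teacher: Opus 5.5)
Your proof is correct and is the paper's argument: the paper also takes (1)$\Rightarrow$(2) from Proposition~\ref{prop:kmasdiagnosers}. For (2)$\Rightarrow$(1), it applies Detection to $w_{\le k+D}$ and No-false-alarms to a fault-free, indistinguishable $w'_{\le k+D}$, obtaining the contradictory values $1$ and $0$ on the same observation word. The only difference is presentational: the paper obtains $w'$ by unfolding the semantics of $\Knows_a$ inline, inside a proof by contradiction, rather than by citing Proposition~\ref{prop:D-P-char} — which is exactly the direct variant you sketch at the end.
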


\begin{proof}
($1\Rightarrow 2$)
This is Proposition~\ref{prop:kmasdiagnosers}.

\smallskip\noindent
($2\Rightarrow 1$)
Let $f\colon\Sigma_a^\ast\to\{0,1\}$ be a $D$-P-diagnoser.
Let $w\in L(\mathcal S)$ and let $k\ge 1$ with $\mathcal S,w,k\models e$.
Set $u=w_{\le k}$ and $u_D=w_{\le k+D}$.
Then $u$ is faulty and $u_D\in\Pref(L(\mathcal S))$.

By the detection condition, we have
\[
f(\obs_a(u_D))=1.
\]
We show that $\mathcal S,w,k{+}D\models \Knows_a\tmp{P}e$.
Assume towards a contradiction that this is not the case.
Then there exists $w'\in L(\mathcal S)$ such that
\[
\obs_a(w_{\le k+D})=\obs_a(w'_{\le k+D})
\qquad\text{and}\qquad
\mathcal S,w',k{+}D \not\models \tmp{P}e.
\]
The second conjunct means that $w'_{\le k+D}$ is fault-free.
By the no-false-alarms condition, we obtain
\[
f(\obs_a(w'_{\le k+D}))=0.
\]
But $\obs_a(w'_{\le k+D})=\obs_a(w_{\le k+D})$, contradicting
$f(\obs_a(u_D))=1$.

Hence $\mathcal S,w,k{+}D\models \Knows_a\tmp{P}e$.
Since $w$ and $k$ were arbitrary, it follows that
\[
\mathcal S \models \tmp{G}\bigl(e \to \tmp{X}^{D}\Knows_a \tmp{P}e\bigr),
\]
so $\mathcal S$ is $D$-P-diagnosable.
\end{proof}

\begin{exercise}
Consider running
$\mathcal K_{\tmp{P}e}^{\mathcal S,a}$
and
$\mathcal K_{\phi}^{\mathcal S,a}$, where $\phi = \tmp{Y}^{\dbound}\tmp{H}\neg e$,
 in parallel on the
same observation stream.
\begin{enumerate}
\item Describe the possible combinations of verdicts and interpret them
epistemically. When can a specific combination happen (can the two diagnosers accept simultaneously)?
\item Explain why this yields a three-valued diagnosis semantics.
\end{enumerate}
\end{exercise}

\begin{exercise}[Bounded predictability]
Let $\Ag = \{a\}$, $\AP=\{p,e\}$, $\AP_a=\{p\}$, and $\Sigma=2^{\AP}$.
Thus, the observer sees only proposition $p$, while the error proposition
$e\in\AP\setminus\AP_a$ is unobservable.

Fix an arbitrary delay bound $D\in\mathbb{N}$. Recall that
\[
\tmp{F}^{\le D}e = e\lor \tmp{X}e\lor\ldots\lor\tmp{X}^{D}e.
\]
Consider the following candidate notions of \emph{bounded predictability}:
\begin{align*}
\phi_{1}^{D}  &= \tmp{G}(\tmp{X}^{D}e \;\rightarrow\; \Knows_a \tmp{X}^{D}e)\\[1ex]
\phi_{2}^{D} &= \tmp{G}(\tmp{F}^{\le D}e \;\rightarrow\; \Knows_a \tmp{F}^{\le D}e)\\[1ex]
\phi_{3}^{D}   &= \tmp{G}(\tmp{X}^{D}e \;\rightarrow\; \Knows_a \tmp{F}e)
\end{align*}

\begin{enumerate}[label={(\alph*)}]
\item Briefly discuss which of the three notions is, in your view, the most reasonable formalization of ``predictability with delay~$D$'', and why.
\item For fixed $D$, determine which logical implications between $\phi_{1}^{D}$, $\phi_{2}^{D}$, and $\phi_{3}^{D}$ hold in general, and justify your answer.
\item Now fix $D=1$. Give separating examples showing that none of the converse implications holds, and that no further implications hold in general.
\item For each
$i\in\{1,2,3\}$, consider the decision
problem with input $\mathcal S$ (a Büchi automaton over $\Sigma$) and $D\in\mathbb{N}$ (with $D$ encoded in unary):
decide whether $\mathcal S \models \phi_{i}^{D}$.
Show that each of these three decision problems is solvable in polynomial time.
A proof sketch in each case is enough.
\item Assume $\mathcal S \models \phi_{i}^{D}$ for a given system $\mathcal S$ and $i\in\{1,2,3\}$.
In the style of the knowledge monitor used for diagnosis, describe a suitable
observation-based predictor for this property.
What will it read as input, what will it output after each observation
prefix, and what correctness guarantee does $\phi_{i}^{D}$ provide?
\end{enumerate}
\end{exercise}

\section{Decentralized Diagnosis}

We now turn to a decentralized setting, where several agents observe
the system independently and no single agent has access to all
observations.
In this context, diagnosability requires that the occurrence or absence of an error can be inferred by at least one agent, based solely on its own observations.
This leads to the notions of positive and negative \emph{codiagnosability}.

\begin{definition}[P- and N-codiagnosability]
Let $\mathcal{S}$ be a Büchi automaton over $\Sigma$, 
$e \in \AP \setminus \bigcup_{a \in \Ag} \AP_a$ be an error proposition, and 
$\dbound \in \mathbb{N}$. We say that $\mathcal{S}$ is
\begin{itemize}
\item $\dbound$-\emph{P-codiagnosable} (wrt.\ $e$) if
\[
\mathcal{S} \models \tmp{G}\Bigl(e \to \tmp{X}^{\dbound} \bigvee_{a \in Ag} \Knows_a \tmp{P} e\Bigr),
\]
\item $\dbound$-\emph{N-codiagnosable} (wrt.\ $e$) if
\[
\mathcal{S} \models
\tmp{X}^{D}\tmp{G}\Bigl(
  \tmp{H}\neg e \to
  \bigvee_{a \in \Ag}\Knows_a\tmp{Y}^{D}\tmp{H}\neg e
\Bigr).
\]
\end{itemize}
Accordingly, we say that $\mathcal{S}$ is \emph{P-codiagnosable} (\emph{N-codiagnosable})
if there is $\dbound \in \mathbb{N}$ such that it is
$\dbound$-P-codiagnosable ($\dbound$-N-codiagnosable, respectively).
\end{definition}

Unlike in the one-agent case, the decentralized versions of diagnosability
are incomparable:

\begin{example}[P- and N-codiagnosability are incomparable]
Let $\Ag = \{a_1, a_2\}$,
$\mathit{AP}=\{p_1, p_2, e\}$,
$\mathit{AP}_{a_1}=\{p_1\}$, and
$\mathit{AP}_{a_2}=\{p_2\}$.
In particular, $e$ is the unobservable error proposition.

Consider Büchi automata $\mathcal S_1$, $\mathcal S_2$, and $\mathcal S_3$ such that:
\begin{align*}
L(\mathcal S_1) &= \emptyset^\ast\{e\}(\{p_1\} + \{p_2\}) \emptyset^\omega\\[0.5ex]
L(\mathcal S_2) &= (\{p_1\} + \{p_2\} + \{e\}) \emptyset^\omega\\[0.5ex]
L(\mathcal S_3) &= (\{p_1,p_2,e\} + \{p_1,p_2\}) \emptyset^\omega
\end{align*}
Then,
\begin{itemize}
\item $\mathcal S_1$ is $1$-P-codiagnosable but not N-codiagnosable,
\item $\mathcal S_2$ is $0$-N-codiagnosable but not P-codiagnosable,
\item $\mathcal S_3$ is neither P-codiagnosable nor N-codiagnosable.
\end{itemize}
\end{example}

\paragraph{Knowledge monitors as decentralized diagnosers.}

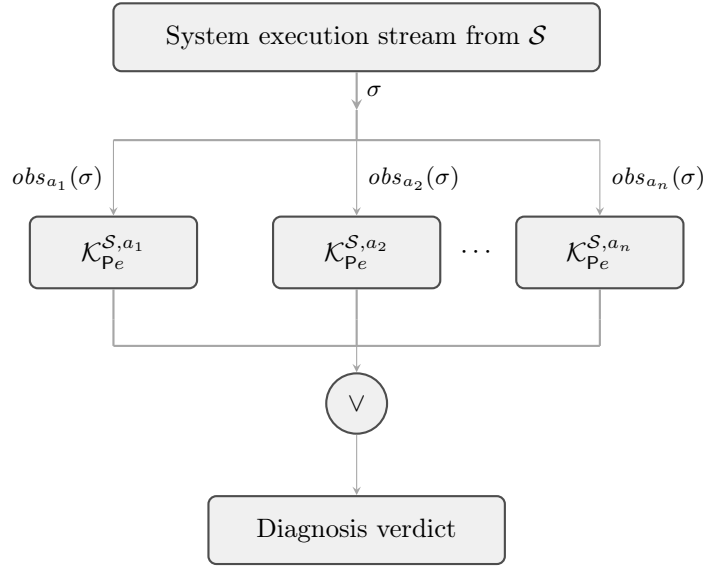
\begin{figure}[t]
\centering
\begin{tikzpicture}[
  node distance=1.4cm and 2.6cm,
  every node/.style={font=\small},
stream/.style={
  draw=black!70,
  rounded corners=3pt,
  minimum height=0.9cm,
  fill=black!6,
  line width=0.8pt
},
monitor/.style={
  draw=black!70,
  rectangle,
  rounded corners=3pt,
  minimum width=2.2cm,
  minimum height=0.95cm,
  fill=black!6,
  line width=0.8pt
},
fusion/.style={
  draw=black!70,
  circle,
  minimum width=0.8cm,
  minimum height=0.8cm,
  fill=black!6,
  line width=0.8pt
},
>=latex
]
\node (source) [stream, minimum width=6.4cm] {System execution stream from $\mathcal{S}$};

\coordinate (split) at ($(source.south)+(0,-0.5)$);
\node[right, font=\footnotesize] at ($(source.south)!0.5!(split)$) {$\sigma$};

\coordinate (hsplit) at ($(split)+(0,-0.4)$);

\coordinate (drop1) at ($(hsplit)+(-3.2,0)$);
\coordinate (drop2) at (hsplit);
\coordinate (drop3) at ($(hsplit)+(3.2,0)$);

\draw[-{stealth[scale=1.3]}, thick, draw=gray!70] (source.south) -- (split);
\draw[thick, draw=gray!70] (split) -- (hsplit);
\draw[thick, draw=gray!70] (drop1) -- (hsplit) -- (drop3);

\node (m1) [monitor, below=1.0cm of drop1] {$\mathcal{K}_{\tmp{P}e}^{\mathcal{S},a_1}$};
\node (m2) [monitor, below=1.0cm of drop2] {$\mathcal{K}_{\tmp{P}e}^{\mathcal{S},a_2}$};
\node (m3) [monitor, below=1.0cm of drop3] {$\mathcal{K}_{\tmp{P}e}^{\mathcal{S},a_n}$};

\node at ($(m2)!0.5!(m3)$) {$\cdots$};

\draw[-{stealth[scale=1.3]}, draw=gray!70] (drop1) -- node[midway, left, font=\footnotesize] {$\mathit{obs}_{a_1}(\sigma)$} (m1.north);
\draw[-{stealth[scale=1.3]}, draw=gray!70] (drop2) -- node[midway, right, font=\footnotesize] {$\mathit{obs}_{a_2}(\sigma)$} (m2.north);
\draw[-{stealth[scale=1.3]}, draw=gray!70] (drop3) -- node[midway, right, font=\footnotesize] {$\mathit{obs}_{a_n}(\sigma)$} (m3.north);

\coordinate (conv1) at ($(m1.south)+(0,-0.4)$);
\coordinate (conv2) at ($(m2.south)+(0,-0.4)$);
\coordinate (conv3) at ($(m3.south)+(0,-0.4)$);

\coordinate (hconv) at ($(conv2)+(0,-0.35)$);
\coordinate (hconv1) at ($(hconv)+(-3.2,0)$);
\coordinate (hconv3) at ($(hconv)+(3.2,0)$);

\node (or) [fusion, below=0.35cm of hconv] {$\lor$};

\draw[thick, draw=gray!70] (m1.south) -- (conv1);
\draw[thick, draw=gray!70] (m2.south) -- (conv2);
\draw[thick, draw=gray!70] (m3.south) -- (conv3);
\draw[thick, draw=gray!70] (conv1) -- (hconv1);
\draw[thick, draw=gray!70] (conv2) -- (hconv);
\draw[thick, draw=gray!70] (conv3) -- (hconv3);
\draw[thick, draw=gray!70] (hconv1) -- (hconv) -- (hconv3);
\draw[-{stealth[scale=1.3]}, draw=gray!70] (hconv) -- (or);

\node (out) [stream, below=0.8cm of or, minimum width=3.9cm] {Diagnosis verdict};
\draw[-{stealth[scale=1.3]}, draw=gray!70] (or) -- (out);
\end{tikzpicture}
\caption{Decentralized diagnosis via local knowledge monitors (positive case). Each global event $\sigma$
is locally filtered into $\mathit{obs}_{a_i}(\sigma)$ and processed by the corresponding monitor.}
\label{fig:decentralized-diagnoser}
\end{figure}

In the decentralized setting, each agent $a$ receives the local view $\obs_{a}(\sigma) \in \Sigma_a$ of every
global event $\sigma \in \Sigma$ and runs its own monitor.
Figure~\ref{fig:decentralized-diagnoser} illustrates this architecture for the
positive case: the global stream is projected to each agent, processed by a
local $(\mathcal S,\tmp{P}e)$-knowledge monitor, and the individual verdicts are aggregated by
disjunction.

Formally, for each agent $a \in \Ag$ we consider an $(\mathcal S,\tmp{P}e)$-knowledge monitor
for $a$, and we regard the decentralized diagnoser as the collection of these local
monitors together with the rule that a diagnosis is produced as soon as at least
one monitor accepts.
This operational view corresponds to the disjunctive knowledge condition
$\bigvee_{a \in \Ag}\Knows_a\tmp{P}e$ used in the definition of
P-codiagnosability.

\section{Bibliographic Notes}

The notion of diagnosability for discrete-event systems was introduced by
Sampath et al.~\cite{SampathSLST95}, who defined what is now commonly
referred to as F-diagnosability.
In our logical framework, this notion corresponds to P-diagnosability
and requires that the occurrence of an unobservable fault can be inferred
after a bounded delay.
The complementary perspective, sometimes called NF-diagnosability and
focusing on the inference of fault absence, was studied later by Wang,
Yoo, and Lafortune~\cite{WangYL07}, who showed that the two notions are
equivalent in the centralized setting.

Diagnosability in decentralized architectures was investigated by
Debouk, Lafortune, and Teneketzis~\cite{DeboukLT00}, leading to the notion
of codiagnosability, where at least one agent must be able to make a
correct diagnosis.
In contrast to the centralized case, the equivalence between positive
and negative diagnosability does not extend to decentralized settings,
where positive and negative codiagnosability are incomparable.

Polynomial-time decision procedures for diagnosability and
codiagnosability were established in the above works.

\chapter{Opacity}

As we have seen in the introduction, \emph{opacity} is dual to
diagnosability.
While diagnosability aims at eventually inferring the occurrence of a
fault, opacity requires that an observer can never infer that a
designated secret has occurred.
Equivalently, whenever a secret behavior takes place, there exists a
non-secret behavior that is observationally indistinguishable to the
observer.

\section{Definition of Opacity}

Let us define opacity within our logical framework:

\begin{definition}[Opacity]
\label{def:opacity}
Fix an agent $a\in\Ag$ and let $s \in \AP \setminus \AP_a$ be a secret.
A Büchi automaton $\mathcal{S}$ over $\Sigma$ representing a system
is \emph{opaque} (wrt.\ $a$ and $s$) if
\[
\mathcal{S} \models
\tmp{G}
  \neg \Knows_a \tmp{P}s.
\]
Thus, at every point in every execution, agent $a$ cannot know that the
secret has occurred.
\end{definition}

Opacity is useful when we want to hide \emph{only} the fact that a secret event happened, but not necessarily whether it did not happen. For example, in a system where some actions are private (like accessing a confidential file), we want observers to see general activity (like logins or public file access) but never be sure if the private action occurred. This way, we protect sensitive information while still allowing normal monitoring.

In contrast, in settings where sensitive features such as gender,
health status, or membership in a protected group must be hidden,
one often requires that neither the occurrence nor the absence
of the secret can be inferred; this stronger, two-sided variant
is discussed in Exercise~\ref{exercise:two-sided-opacity}.

\begin{example}[An opaque system]
\label{ex:opaque}
Let $\AP=\{p,r,s,\#\}$ and let agent $a$ observe only $\AP_a=\{p,r,\#\}$.
Consider the Büchi automaton $\mathcal S$ below:

\begin{center}
\begin{tikzpicture}[
  ->, >=stealth, node distance=2.8cm,
  state/.style={circle, draw=ChapterAccent, thick, minimum size=0.95cm, inner sep=1pt},
  initial/.style={state, fill=ChapterLight},
  accepting/.style={state, double, double distance=1.4pt},
  every edge/.style={draw=ChapterAccent, thick},
  every node/.style={font=\small}
]

\node[initial] (q0) {$q_0$};

\node[state] (u1) [above right=1.2cm and 2.2cm of q0] {$q_1$};
\node[state] (u2) [right of=u1] {$q_2$};
\node[state] (u3) [right of=u2] {$q_3$};
\node[accepting] (u4) [below right=1.2cm and 2.2cm of u3] {$q_4$};

\node[state] (l1) [below right=1.2cm and 2.2cm of q0] {$q_5$};
\node[state] (l2) [right of=l1] {$q_6$};
\node[state] (l3) [right of=l2] {$q_7$};

\draw[thick, ChapterAccent, <-] (q0) -- ++(-0.8,0);

\path
(q0) edge node[above] {$\{s\}$\phantom{aaa}} (u1)
     edge node[below] {$\emptyset$\phantom{aaa}} (l1);

\path
(u1) edge[loop above] node {$\{p\},\{r\}$} (u1)
(u1) edge node[above] {$\{p\}$} (u2)
(u2) edge node[above] {$\{r\}$} (u3)
(u3) edge node[above] {\phantom{aaa}$\{\#\}$} (u4)
(u4) edge[loop above] node {$\{\#\}$} (u4);

\path
(l1) edge node[above] {$\{p\}$} (l2)
(l2) edge[bend left=18] node[above] {$\{r\}$} (l3)
(l3) edge[bend left=18] node[below] {$\{p\}$} (l2)
(l3) edge[bend right=55] node[above] {$\{r\}$} (l1)
(l3) edge node[below] {\phantom{aaa}$\{\#\}$} (u4)
(u4) edge[loop above] node {$\{\#\}$} (u4);

\path
(l1) edge[loop below] node {$\{r\}$} (l1);

\path
(l2) edge[loop below] node {$\{p\}$} (l2);
\end{tikzpicture}
\end{center}
The upper branch, after producing the secret,
generates exactly the words over 
$\{\{p\},\{r\}\}$ that end in the pattern $\{p\}\{r\}$
before the first occurrence of $\{\#\}$.
The lower branch generates the \emph{same} language,
but in a deterministic way.

In particular, for every execution along the upper branch there exists
an execution along the lower branch with the same observable trace.
Thus, the system is opaque.\footnote{%
In fact, the construction even satisfies the two-sided opacity
variant: neither the upper nor the lower branch can be inferred
from observations; see Exercise~\ref{exercise:two-sided-opacity}.}
\end{example}

\section{Complexity of Opacity}

While opacity is already known to be decidable by
Theorem~\ref{thm:decidability-mc}, we now establish that
the problem is, in fact, PSPACE-complete.

\begin{theorem}[Opacity is in PSPACE]
\label{thm:opacity-pspace}
The following problem can be decided in polynomial space:

Given a Büchi automaton $\mathcal{S}$ over $\Sigma$,
an agent $a \in \Ag$, and an unobservable secret $s \in \AP$,
decide whether $\mathcal{S}$ is opaque with respect to $a$ and $s$.
\end{theorem}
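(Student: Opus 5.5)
The plan is to avoid the general (non-elementary) model-checking route of Theorem~\ref{thm:decidability-mc}. Instead, I will reduce opacity to a language-inclusion problem between two nondeterministic finite automata over $\Sigma_a$, which is decidable in polynomial space.

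\emph{Step 1: a combinatorial characterization.} Unfolding the semantics in the style of Proposition~\ref{prop:D-P-char}, $\mathcal S$ is \emph{not} opaque iff there are $w\in L(\mathcal S)$ and $k\ge 1$ such that every $w'\in L(\mathcal S)$ with $\obs_a(w'_{\le k})=\obs_a(w_{\le k})$ has a faulty prefix $w'_{\le k}$, i.e., one containing $s$. Equivalently, using Proposition~\ref{prop:models-a-iff-K}, $\mathcal S$ is opaque iff
\[
\Obs_a(\mathcal S)\;\subseteq\;\bigl\{\obs_a(v)\bigm| v\in\Pref(L(\mathcal S)),\ v \text{ fault-free}\bigr\}.
\]
Indeed, for each observation $u=\obs_a(w_{\le k})$, the formula $\neg\Knows_a\tmp{P}s$ holds at $(w,k)$ exactly when some compatible execution has a fault-free prefix of length $k$. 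This characterization is routine.

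\emph{Step 2: automata for both sides.} First I make $\mathcal S$ productive by deleting every state from which no accepting run exists, as in the knowledge-monitor construction. This takes polynomial time: compute the nontrivial SCCs that contain an accepting state, then keep the states that can reach them. In the productive automaton, a finite word $v\ne\varepsilon$ lies in $\Pref(L(\mathcal S))$ iff it labels some finite run. Hence the left-hand set is recognized by the NFA $\mathcal N_1$ obtained as follows: make all remaining states accepting and relabel each transition $q\xrightarrow{\tau}q'$ by $\obs_a(\tau)$, excluding $\varepsilon$. The right-hand set is recognized by the NFA $\mathcal N_2$ obtained in the same way, after first deleting all transitions whose label contains $s$. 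Both NFAs are of linear size.

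\emph{Step 3: inclusion in PSPACE.} Inclusion $L(\mathcal N_1)\subseteq L(\mathcal N_2)$ fails iff some word $u\in\Sigma_a^+$ has a run in $\mathcal N_1$ but none in $\mathcal N_2$. A nondeterministic procedure guesses $u$ letter by letter. It maintains the subset-construction state for each automaton, namely a set of states of $\mathcal N_1$ and a set of states of $\mathcal N_2$; this is exactly the belief update of the knowledge monitor. It accepts when the first set is nonempty and the second is empty. Each configuration needs polynomial space, and a counterexample of length at most $2^{2|Q|}$ exists if any does, so a polynomial-size counter bounds the search. This gives NPSPACE, hence PSPACE by Savitch's theorem; since PSPACE is closed under complement, opacity is in PSPACE.

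The main point requiring care is Step 2. One must verify that productivity really makes every finite run extendable to a word of $L(\mathcal S)$, so that fault-free finite runs of the pruned automaton correspond exactly to fault-free prefixes of accepted executions. Without pruning, a fault-free finite run could lead into a dead part of the automaton and spuriously witness opacity. One must also ensure that the length-$k$ and $k\ge1$ conventions match, which is why $\varepsilon$ is excluded.
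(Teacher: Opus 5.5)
Your proof is correct, and it takes a genuinely different route from the paper.

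The paper does not bypass the epistemic machinery. It builds the $(\mathcal S,\Knows_a\tmp{P}s)$-transducer from a $2|\mathcal S|$-state transducer for $\tmp{P}s$ via the knowledge-monitor product. The resulting automaton has exponentially many states, each a pair (belief set, state of $\mathcal S$). Non-opacity is characterized as the existence of an accepting run of this product through a state with output $1$. The paper explores the product on the fly in NPSPACE and then invokes Savitch. With only a single knowledge operator, the non-elementary blow-up you set out to avoid never actually arises there.

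You instead first derive a trace-level characterization, in the spirit of Proposition~\ref{prop:D-P-char}. It turns opacity into the inclusion of $\mathit{Obs}_a(\mathcal S)$ in the observations of fault-free prefixes. You realize both sides as linear-size NFAs over $\Sigma_a$ and apply the textbook PSPACE algorithm for NFA inclusion.

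Computationally the two proofs do the same thing. Your pair of subsets (all states compatible with the observation, and those compatible with a fault-free history) carries the same information as the paper's belief set.

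Your route has three advantages:
\begin{enumerate}
\item It is self-contained.
\item It reduces the problem to plain reachability of a pair (nonempty, empty), with no B\"uchi condition left on the exponential graph. This works because you prune once, with respect to the original acceptance condition and before deleting $s$-transitions; you correctly flag this as the delicate point.
\item It presents the upper bound as an NFA-inclusion problem, which mirrors the NFA-universality reduction of Theorem~\ref{thm:opacity-hard}.
\end{enumerate}

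The paper's route buys uniformity instead. The same on-the-fly argument covers any formula with a single knowledge operator over a fixed temporal formula, such as two-sided opacity, without needing a new hand-made characterization each time.

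The only case you leave implicit is $L(\mathcal S)=\emptyset$. There $\iota$ itself is pruned, both NFAs are empty, and $\mathcal S$ is vacuously opaque.
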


\begin{proof}
By Theorem~\ref{thm:automaton-construction},
there exists an $(\mathcal S,\Knows_a\tmp{P}s)$-transducer
consisting of $\mathcal A^{\mathcal S}_{\Knows_a\tmp{P}s}$
and $\gamma_{\Knows_a\tmp{P}s}$.
It satisfies the following correctness property:
$L(\mathcal A^{\mathcal S}_{\Knows_a\tmp{P}s}) = L(\mathcal{S})$
and,
for every $w\in L(\mathcal S)$, every accepting run $\pi$
of $\mathcal A^{\mathcal S}_{\Knows_a\tmp{P}s}$ over $w$,
and every position $k\ge 1$,
\[
\mathcal S,w,k \models \Knows_a\tmp{P}s
\quad\Longleftrightarrow\quad
\gamma_{\Knows_a\tmp{P}s}\bigl(\pi(k)\bigr)=1.
\]
We can assume that
its size (number of states)
is exponential in $|\mathcal S|$.
We first construct a linear-size $(\mathcal S,\tmp{P}s)$-transducer
$\mathcal A^{\mathcal S}_{\tmp{P}s}$ (together with $\gamma_{\tmp{P}s}$).
More precisely, $2 \cdot |\mathcal{S}|$ states are enough
where $|\mathcal{S}|$ is the number of states of $\mathcal{S}$.
We then obtain $\mathcal A^{\mathcal S}_{\Knows_a\tmp{P}s}$
using the knowledge-transducer construction, which yields at most
$|{\mathcal S}|\cdot 2^{|A^{\mathcal S}_{\tmp{P}s}|}$ states (up to a polynomial factor
depending on the chosen presentation of transducer states). In particular,
$|\mathcal A^{\mathcal S}_{\Knows_a\tmp{P}s}|$ is exponential in $|\mathcal S|$.

It remains to decide whether $\mathcal S$ is opaque, i.e.,
\[
\mathcal S \models \tmp{G}\,\neg \Knows_a\tmp{P}s.
\]
We have $\mathcal S \not\models \tmp{G}\,\neg \Knows_a\tmp{P}s$ iff there exist $w\in L(\mathcal S)$, an accepting
run $\pi$ of $\mathcal A^{\mathcal S}_{\Knows_a\tmp{P}s}$ over $w$, and a
position $k\ge 1$ such that
$\gamma_{\Knows_a\tmp{P}s}(\pi(k))=1$.
Equivalently, non-opacity holds iff
$\mathcal A^{\mathcal S}_{\Knows_a\tmp{P}s}$
has an accepting run that visits at least once a state
$q$ such that $\gamma_{\Knows_a\tmp{P}s}(q)=1$.

This can be checked in polynomial space without constructing
$\mathcal A^{\mathcal S}_{\Knows_a\tmp{P}s}$ explicitly.
Indeed, although $\mathcal A^{\mathcal S}_{\Knows_a\tmp{P}s}$ has
exponentially many states, its transition graph can be explored
\emph{on the fly}:
each state admits a polynomial-space representation,
and successor states can be computed in
polynomial space from the definition of the knowledge-transducer update.
Hence, reachability in this exponential graph can be decided in
$\mathrm{NPSPACE}$.
By $\mathrm{NPSPACE}=\mathrm{PSPACE}$ (Savitch's Theorem), opacity is decidable in polynomial
space.
\end{proof}

\begin{theorem}[Lower bound for opacity]
\label{thm:opacity-hard}
Deciding opacity is PSPACE-hard.
\end{theorem}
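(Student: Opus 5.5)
We reduce from the universality problem for nondeterministic finite automata, which is PSPACE-complete: given an NFA $\mathcal B=(P,p_0,\delta_{\mathcal B},F_{\mathcal B})$ over an alphabet $\Gamma$, decide whether $L(\mathcal B)=\Gamma^+$ (the variant with nonempty words is equally PSPACE-complete). We encode the letters of $\Gamma$ by subsets of observable propositions. Concretely, we take a set of propositions $\AP_a$ of logarithmic size such that distinct letters of $\Gamma$ correspond to distinct sets $\hat\gamma \subseteq \AP_a$. We add an observable end marker $\#\in\AP_a$ and the unobservable secret $s$. This mirrors Example~\ref{ex:opaque}: a secret branch generates \emph{all} observable words, and a non-secret branch generates exactly the words accepted by $\mathcal B$.

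The system $\mathcal S$ has an initial state with two outgoing transitions, both emitting an event observed as $\emptyset$. The first transition is labelled $\{s\}$ and leads to a universal gadget. This gadget loops on every $\hat\gamma$ and can also move on $\{\#\}$ to an accepting sink with a $\{\#\}$ self-loop. The second transition is labelled $\emptyset$ and leads to a copy of $\mathcal B$ that reads $\hat\gamma$ instead of $\gamma$. From each accepting state of $\mathcal B$, and only there, this copy may emit $\{\#\}$ into the same kind of accepting sink. All states other than the sinks are non-accepting, so every execution must eventually emit $\#^\omega$. The construction is clearly polynomial.

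It then remains to prove that $\mathcal S$ is opaque iff $\mathcal B$ is universal.

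\emph{Universal $\Rightarrow$ opaque.} Take any secret execution and any position $k$. Its observed prefix is either $\emptyset\hat u$ or $\emptyset\hat u\#^m$ with $u\in\Gamma^\ast$.
\begin{itemize}
\item If $u\neq\varepsilon$, then $u\in L(\mathcal B)$ by universality, so the $\mathcal B$-branch has an indistinguishable prefix that can be completed by $\#^\omega$ into an accepting execution.
\item If $u=\varepsilon$ and no $\#$ has been observed yet (that is, $k=1$), the observed prefix is just $\emptyset$. Any non-secret execution whose first letter is $\emptyset$ matches it.
\end{itemize}
To make the second case work and to avoid the edge case of a $\#$ emitted immediately, I would forbid $\#$ directly after the initial step in the secret branch, so that only nonempty $u$ can be followed by $\#$. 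Since the non-secret prefix is fault-free, $\neg\Knows_a\tmp{P}s$ holds at every position.

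\emph{Opaque $\Rightarrow$ universal.} Suppose $u\in\Gamma^+\setminus L(\mathcal B)$. Consider the secret execution $\{s\}\hat u\#^\omega$ at position $|u|+2$. Any indistinguishable execution in $L(\mathcal S)$ must emit $\#$ after $\hat u$. In the non-secret branch this requires an accepting run of $\mathcal B$ on $u$, which does not exist. Hence every indistinguishable execution is secret, so $\Knows_a\tmp{P}s$ holds there, contradicting opacity.

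The main obstacle is getting the Büchi acceptance and the edge cases exactly right. One point is ensuring that the non-secret witnesses are genuine infinite accepting executions, which is why a $\#$-sink is reachable from accepting $\mathcal B$-states; I would also prune $\mathcal B$ to states that can reach acceptance. Another point is handling positions before $\#$, where the witness prefix $\hat u$ need not itself be accepted by $\mathcal B$ but must still be extendable. Here universality gives $u\in L(\mathcal B)$ for every nonempty $u$, so every prefix is accepted and extendable by $\#^\omega$.
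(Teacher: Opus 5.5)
Your proposal is correct and follows essentially the paper's reduction from NFA universality: the secret branch produces $\{s\}\Gamma^\ast\{\#\}^\omega$, the public branch produces $\emptyset\,L(\mathcal N)\,\{\#\}^\omega$, and non-opacity is witnessed at the position of the first $\#$, exactly as you argue. Your deviations (binary encoding of the letters, using $\Gamma^+$-universality and forbidding an immediate $\#$ in the secret branch) are harmless but unnecessary. The paper simply allows $n=0$ and reduces from $\Gamma^\ast$-universality directly; it also handles public executions explicitly, which you leave implicit but which are trivial since they never satisfy $\tmp{P}s$.
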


\begin{proof}
We reduce from the \emph{universality problem} for nondeterministic finite automata (NFAs), which is
PSPACE-complete\footnote{We do not introduce NFAs formally, as they are standard and their precise representation does not matter here.}: given an NFA $\mathcal{N}$ over a finite alphabet $\Gamma$, decide
whether $L(\mathcal{N}) = \Gamma^\ast$.

Let $\mathcal{N}$ be an NFA over $\Gamma$.
We set
$\AP = \Gamma \uplus \{\#\} \uplus \{s\}$ where $s$ is a fresh symbol
representing an \emph{unobservable secret}, and $\#$ is a fresh
\emph{observable} endmarker.
Consider a single agent $a$ and let $\AP_a = \AP \setminus \{s\}$.
We build a Büchi automaton $\mathcal{S}$ over the alphabet
$\Sigma = 2^\AP$ whose language $L(\mathcal{S})$ is the disjoint union
\[
L(\mathcal{S}) \;=\; L_{\mathsf{priv}} \;\uplus\; L_{\mathsf{pub}},
\]
where
\begin{itemize}
\item the \emph{private language} $L_{\mathsf{priv}}$ is
the set of words \[\{s\} \{\gamma_1\} \dots \{\gamma_n\} \{\#\}^\omega \in \Sigma^\omega\]
such that $n \in \mathbb{N}$ and $\gamma_i \in \Gamma$ for all $i\in\{1,\ldots,n\}$,
\item the \emph{public language} $L_{\mathsf{pub}}$ is
the set of words \[\emptyset \{\gamma_1\} \dots \{\gamma_n\} \{\#\}^\omega \in \Sigma^\omega\]
such that $n \in \mathbb{N}$, $\gamma_i \in \Gamma$ for all $i\in\{1,\ldots,n\}$,
and $\gamma_1 \ldots \gamma_n \in L(\mathcal{N})$.
\end{itemize}
It is easily seen that such a Büchi automaton $\mathcal{S}$ can be constructed in polynomial time.

We prove that
\[
L(\mathcal{N}) = \Gamma^\ast
\quad\Longleftrightarrow\quad
\mathcal{S} \models \tmp{G}\,\neg \Knows_a \tmp{P}s.
\]

\medskip\noindent
($\Rightarrow$)
Assume $L(\mathcal{N})=\Gamma^\ast$.
Let $w \in L(\mathcal{S})$ and $k \ge 1$ be arbitrary. We show
$\mathcal{S},w,k \models \neg \Knows_a \tmp{P}s$.

If $w \in L_{\mathsf{pub}}$, then $s$ never occurs in $w$, hence
$\mathcal{S},w,k \models \neg\tmp{P}s$, and therefore
$\mathcal{S},w,k \models \neg \Knows_a \tmp{P}s$.

Suppose now that $w \in L_{\mathsf{priv}}$. Then
\[
w = \{s\}\{\gamma_1\}\ldots\{\gamma_n\}\{\#\}^\omega
\]
for some $n\in\mathbb N$ and $\gamma_1,\ldots,\gamma_n \in \Gamma$.
Consider the word
\[
w' \;=\; \emptyset\{\gamma_1\}\ldots\{\gamma_n\}\{\#\}^\omega.
\]
Since $L(\mathcal{N})=\Gamma^\ast$, we have $\gamma_1\ldots\gamma_n\in L(\mathcal N)$,
hence $w' \in L_{\mathsf{pub}} \subseteq L(\mathcal{S})$.
Moreover,
$\obs_a(w_{\le k}) = \obs_a(w'_{\le k})$ and $\mathcal{S},w',k \not\models \tmp{P}s$.
Thus, by the semantics of knowledge, $\mathcal{S},w,k \models \neg \Knows_a \tmp{P}s$.

\medskip\noindent
($\Leftarrow$)
Assume $\mathcal{S} \models \tmp{G}\,\neg \Knows_a \tmp{P}s$.
We show that $L(\mathcal{N})=\Gamma^\ast$.

Let $\gamma_1\ldots\gamma_n \in \Gamma^\ast$ be arbitrary.
Consider the word
\[
w \;=\; \{s\}\{\gamma_1\}\ldots\{\gamma_n\}\{\#\}^\omega \in L_{\mathsf{priv}}
\subseteq L(\mathcal S).
\]
Let $k = n+2$ (the position of the first $\{\#\}$, i.e., after reading
$\{s\}$ and the whole block $\{\gamma_1\}\ldots\{\gamma_n\}$). Since $s$
occurs at position $1$, we have
\[
\mathcal{S},w,k \models \tmp{P}s.
\]

By $\mathcal{S} \models \tmp{G}\,\neg \Knows_a \tmp{P}s$, we obtain
\[
\mathcal{S},w,k \models \neg \Knows_a \tmp{P}s.
\]
Hence, by the semantics of knowledge, there exists $w' \in L(\mathcal S)$ such that
\[
\obs_a(w_{\le k}) = \obs_a(w'_{\le k})
\qquad\text{and}\qquad
\mathcal{S},w',k \models \neg\tmp{P}s.
\]
The second condition implies that $s$ does not occur in $w'_{\le k}$.
In particular, $w'$ cannot belong to $L_{\mathsf{priv}}$, so $w' \in L_{\mathsf{pub}}$.
Therefore
\[
w' \;=\; \emptyset\{\gamma'_1\}\ldots\{\gamma'_m\}\{\#\}^\omega
\quad\text{and}\quad
\gamma'_1\ldots\gamma'_m \in L(\mathcal N)
\]
for some $m\in\mathbb N$ and $\gamma'_i\in\Gamma$.

Moreover, since $s$ is unobservable to $a$, we have
\[
\obs_a(w_{\le k}) = \emptyset\{\gamma_1\}\ldots\{\gamma_n\}\{\#\}.
\]
From $\obs_a(w_{\le k})=\obs_a(w'_{\le k})$, we obtain
\[
m=n
\quad\text{and}\quad
\gamma_i=\gamma'_i \text{ for all } i\in\{1,\ldots,n\}.
\]
Hence $\gamma_1\ldots\gamma_n = \gamma'_1\ldots\gamma'_n \in L(\mathcal N)$.

Since $\gamma_1\ldots\gamma_n\in\Gamma^\ast$ was arbitrary, we conclude
\[
L(\mathcal N)=\Gamma^\ast.
\]

\medskip\noindent
This establishes a polynomial-time reduction from NFA universality to the
complement of opacity, hence deciding opacity is PSPACE-hard.
\end{proof}

\begin{exercise}[Two-sided opacity]
\label{exercise:two-sided-opacity}
Recall that we considered the following (one-sided) opacity requirement
for a secret $s\in\AP$ and agent $a\in\Ag$:
\[
\Phi_{\mathsf{opaque}}(s)
\;=\;
\tmp{G}\neg\Knows_a \tmp{P}s.
\]
Now consider the following two-sided opacity requirement:
\[
\Phi^{\mathsf{2}}_{\mathsf{opaque}}(s)
\;=\;
\tmp{G}\neg\bigl(\Knows_a \tmp{P}s \;\vee\; \Knows_a \neg\tmp{P}s\bigr).
\]
For instance, for the system $\mathcal{S}$ from Example~\ref{ex:opaque}, we have
$\mathcal{S} \models \Phi^{\mathsf{2}}_{\mathsf{opaque}}(s)$.
\begin{enumerate}
\item[(a)] Discuss when the additional requirement
$\neg\Knows_a\neg\tmp{P}s$ is desirable in practice.

\item[(b)] Show that $\Phi^{\mathsf{2}}_{\mathsf{opaque}}(s)$ implies the standard
opacity requirement
$\Phi_{\mathsf{opaque}}(s)$.

\item[(c)] Show that the converse implication does not hold by giving a Büchi automaton
$\mathcal S$ and agent $a$ such that
\[
\mathcal S \models \Phi_{\mathsf{opaque}}(s)
\quad\text{but}\quad
\mathcal S \not\models \Phi^{\mathsf{2}}_{\mathsf{opaque}}(s).
\]

\item[(d)] Show that deciding two-sided opacity is PSPACE-complete.
\item[(e)] Fix now $\Ag = \{a\}$, $\AP=\{p,s\}$, $\AP_a=\{p\}$, and $\Sigma=2^{\AP}$.
Consider systems $\mathcal S_1,\mathcal S_2,\mathcal S_3,\mathcal S_4$ whose languages
$L(\mathcal S_i) \subseteq \Sigma^\omega$ are given as follows:
\begin{align*}
L(\mathcal S_1)&=\{p\}^{\omega}+\{p\}\{p,s\}\{p\}^{\omega}\\[1ex]
L(\mathcal S_2)&=\{p\}^{\omega}+\{p,s\}\{p\}^{\omega}\\[1ex]
L(\mathcal S_3)&=\{p\}^{\omega}
+\{p\}^{\ast}\{p,s\}\{p\}^{\ast}\emptyset^{\omega}
+\{p\}^{\ast}\{p,s\}\{p\}^{\ast}\{p,s\}^{\omega}\\[1ex]
L(\mathcal S_4)&=\{p\}^{\omega}
+\{p\}^{\ast}\{p,s\}\{p\}^{\omega}
+\{p\}^{\ast}\emptyset\{p\}^{\omega}
+\{p\}^{\ast}\{p,s\}\{p\}^{\ast}\emptyset\{p\}^{\omega}
\end{align*}

For each of the systems $\mathcal S_1,\mathcal S_2,\mathcal S_3,\mathcal S_4$, determine whether it
satisfies $\Phi_{\mathsf{opaque}}(s)$ and whether it satisfies
$\Phi^{\mathsf{2}}_{\mathsf{opaque}}(s)$.
Justify your answer briefly from the definitions.
\end{enumerate}
\end{exercise}

\section{Bibliographic Notes}

Opacity was introduced as a confidentiality property for discrete-event systems, requiring that secret behavior cannot be inferred by an external observer. Bryans et al.~\cite{BryansKMR08} generalized opacity to transition systems and clarified its language-theoretic formulation. Cassez, Dubreil, and Marchand~\cite{CassezDM09} studied dynamic observers and established that deciding opacity is PSPACE-complete.

%
%


\chapter{Monitorability}

Monitoring addresses the question of whether the truth of a temporal property
can be determined by observing only a finite prefix of a system execution.
In practice, a monitor processes observations incrementally and tries to
reach a definitive verdict as early as possible.
In this section, we study when such a definitive verdict is
obtainable under partial observability and epistemic uncertainty.
This leads to the notion of \emph{monitorability}, which captures whether
sufficient information may eventually be gathered to decide a given property
based on observations.

We are particularly interested in identifying structural conditions on the
specification that guarantee the existence of such conclusive observations.

\section{Definition and Decidability}

We now formalize the idea that, from every possible observation,
it should always remain possible to eventually decide the property.

\begin{definition}[Monitorability]
\label{def:monitorability}
Let $\mathcal{S}$ be a Büchi automaton over $\Sigma$, $a \in \Ag$, 
and $\varphi$ an \LTL property.
We say that $(\mathcal{S}, \varphi)$ is \emph{monitorable}
(wrt.\ $a$) if for every observation prefix
$u \in \Obs_a(\mathcal S)$
there exists a finite extension
$v \in \Sigma_a^\ast$ such that
$uv \in \Obs_a(\mathcal S)$ and
\[
\mathcal S, uv \models_a \hat\varphi
\quad \text{or}\quad
\mathcal S, uv \models_a \neg\hat\varphi
\]
where
\[
\hat\varphi \;=\; \tmp{P}(\mathit{first} \wedge \varphi).
\]
\end{definition}
Here, the formula $\tmp{P}(\mathit{first}\wedge\varphi)$ is used to encode
whole-trace satisfaction of $\varphi$ in a position-based semantics:
$\mathcal S,w,k\models\tmp{P}(\mathit{first}\wedge\varphi)$ holds iff
the infinite word $w$ satisfies $\varphi$ at position~1.

Thus, if $(\mathcal S,\varphi)$ is monitorable, then from every observation
prefix it is always worth continuing the monitoring process: a definitive
verdict may eventually be reached.

\begin{example}[Monitorability]
\label{ex:monitorability}
Let $\AP=\{e,r,s\}$ and let agent $a$ observe only $\AP_a=\{r,s\}$, so $e$ is unobservable.
Consider the property
$\varphi = \tmp{F}e$. That is,
$\hat\varphi = \tmp{P}(\mathit{first}\wedge \tmp{F}e)$.

Let $\mathcal S$ be the Büchi automaton below:
\begin{center}
\begin{tikzpicture}[
  ->, >=stealth, node distance=3cm,
  state/.style={circle, draw=ChapterAccent, thick, minimum size=0.95cm, inner sep=1pt},
  initial/.style={state, fill=ChapterLight},
  accepting/.style={state, double, double distance=1.4pt},
  every edge/.style={draw=ChapterAccent, thick},
  every node/.style={font=\small}
]
\node[accepting] (q1) {$q_1$};
\node[accepting] (q2) [below of=q1] {$q_2$};
\node[initial,accepting] (q0) [left of=q1, yshift=-1.5cm] {$q_0$};

\draw[thick, ChapterAccent, <-] (q0) -- ++(-0.8,0);

\path
(q0) edge[loop above] node {$\emptyset$} (q0)
     edge node[above] {$\{e,r\}$\phantom{aaa}} (q1)
     edge node[below] {$\{s\}$\phantom{aaa}} (q2)
(q1) edge[loop right] node {$\{r\}$} (q1)
(q2) edge[loop right] node {$\{s\}$} (q2);
\end{tikzpicture}
\end{center}

Intuitively, the system may stutter arbitrarily long in $q_0$ and will eventually
enter a recovery mode (after an error occurred) or a safe/success mode.
In the first case, $e$ occurs (together with $r$) and $r$ is observed forever.
In the second case, only $s$ is observed and $e$ never occurs.

From any observation prefix $u\in\Obs_a(\mathcal S)$,
one can extend the execution so that either $r$ or $s$ is observed next.
In the former case, the agent knows that $e$ has occurred;
in the latter case, the agent knows that $e$ never occurs.
Hence some finite continuation always yields a definitive verdict.
Thus, $(\mathcal S,\varphi)$ is monitorable.
\end{example}

In contrast to diagnosability and opacity, monitorability is not expressed
as a model-checking question for a single formula.
It is a structural property of $(\mathcal S,\varphi)$ that quantifies over
all observation prefixes.
However, the automata-theoretic machinery we developed still allows us
to decide monitorability.

\begin{theorem}[Decidability of monitorability]
Given a Büchi automaton $\mathcal{S}$ over $\Sigma$, an agent $a \in \Ag$, and
an \LTL property $\varphi$,
the problem of determining whether $(\mathcal{S}, \varphi)$ is monitorable (wrt.\ $a$)
is decidable.
\end{theorem}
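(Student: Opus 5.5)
We reduce monitorability to a reachability question on the knowledge monitors for $\hat\varphi$ and $\neg\hat\varphi$, obtained from Theorem~\ref{thm:monitor-knowledge}. Both are \LTL formulas, so we can build the DFAs $\mathcal K_{\hat\varphi}^{\mathcal S,a}$ and $\mathcal K_{\neg\hat\varphi}^{\mathcal S,a}$. More convenient is the belief-set construction underlying that theorem. Take $\mathcal A^{\mathcal S}_{\hat\varphi}$ with $\gamma_{\hat\varphi}$, made productive, and form the powerset DFA over $\Sigma_a$ with states $B\subseteq Q_{\hat\varphi}$. Since $\gamma_{\neg\hat\varphi}=1-\gamma_{\hat\varphi}$ on the same automaton, a single belief DFA $\mathcal B$ serves both formulas. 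By the correctness argument for the monitor, for every $u\in\Obs_a(\mathcal S)$ with $B_u=\delta^\ast(\iota,u)$ we have:
\begin{itemize}
\item $\mathcal S,u\models_a\hat\varphi$ iff $\gamma_{\hat\varphi}(q)=1$ for all $q\in B_u$;
\item $\mathcal S,u\models_a\neg\hat\varphi$ iff $\gamma_{\hat\varphi}(q)=0$ for all $q\in B_u$.
\end{itemize}
Call $B$ \emph{conclusive} if $\gamma_{\hat\varphi}$ is constant on $B$.

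The next step characterizes $\Obs_a(\mathcal S)$ inside $\mathcal B$. By invariant~(\ref{eq:belief}) and productivity, $u\in\Obs_a(\mathcal S)$ iff $|u|\ge1$ and $B_u\neq\emptyset$. For ($\Leftarrow$), any $q\in B_u$ is reached by a finite run over some $v\sim_a u$, and this run extends to an accepting run, so $v\in\Pref(L(\mathcal S))$. For ($\Rightarrow$), we use $L(\mathcal A^{\mathcal S}_{\hat\varphi})=L(\mathcal S)$. Hence for $u\in\Obs_a(\mathcal S)$ and $v\in\Sigma_a^\ast$, the condition $uv\in\Obs_a(\mathcal S)$ is equivalent to $\delta^\ast(B_u,v)\neq\emptyset$. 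The empty set is a sink, so all intermediate beliefs along $v$ are nonempty as well.

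Monitorability therefore becomes the following condition: for every nonempty belief $B$ reachable from $\iota$ by a nonempty word, some conclusive nonempty belief is reachable from $B$ (possibly $B$ itself, taking $v=\varepsilon$). This holds because the existence of the extension $v$ depends only on $B_u$. The set of reachable beliefs is finite and computable. Reachability between beliefs is a graph search on a finite graph, and conclusiveness is checked directly via $\gamma_{\hat\varphi}$. Hence the condition is decidable.

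The main obstacle is the second step: justifying that nonemptiness of the belief exactly captures membership in $\Obs_a(\mathcal S)$. This needs productivity of $\mathcal A^{\mathcal S}_{\hat\varphi}$, which is assumed without loss of generality in the monitor construction. It also needs the same belief automaton for both verdicts, which is why we use the shared powerset construction rather than two independent DFAs whose behaviour outside $\Obs_a(\mathcal S)$ is unconstrained. A minor point is that $\Obs_a(\mathcal S)$ excludes $\varepsilon$, so the quantification ranges over beliefs reached by words of length at least one. If $L(\mathcal S)=\emptyset$, the property holds vacuously.
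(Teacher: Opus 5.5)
Your proof is correct and follows the paper's route: both reduce monitorability to a reachability check in the finite graph of belief-set states of the knowledge monitors for $\hat\varphi$ and $\neg\hat\varphi$. The only difference is that you share one belief automaton between the two verdicts and recognize $\Obs_a(\mathcal S)$ as the nonempty words with nonempty belief, using productivity. The paper instead adds a separate DFA $\mathcal P$ for $\Obs_a(\mathcal S)$ to the product, so your argument also justifies the effective constructibility of $\mathcal P$, which the paper only asserts.
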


\begin{proof}
Let
\[
\hat\varphi = \tmp{P}(\mathit{first} \wedge \varphi).
\]
By Definition~\ref{def:monitorability},
$(\mathcal S,\varphi)$ is monitorable iff
\begin{equation}\label{eq:monitor-cond}
\forall u\in\Obs_a(\mathcal S)\colon\exists v\in\Sigma_a^\ast\colon\;
uv\in\Obs_a(\mathcal S)
\ \land\
\bigl(
\mathcal S,uv\models_a \hat\varphi
~\text{ or }~
\mathcal S,uv\models_a \neg\hat\varphi
\bigr).
\end{equation}

\medskip
Construct the knowledge monitors
\[
\mathcal K^{\mathcal S,a}_{\hat\varphi}
=
(Q_1,\iota_1,\delta_1,F_1)
\qquad\text{and}\qquad
\mathcal K^{\mathcal S,a}_{\neg\hat\varphi}
=
(Q_0,\iota_0,\delta_0,F_0).
\]
By correctness of knowledge monitors, for every observation prefix
$u\in\Obs_a(\mathcal S)$,
\[
\delta_1^\ast(\iota_1,u)\in F_1
\;\Longleftrightarrow\;
\mathcal S,u\models_a \hat\varphi,
\]
and
\[
\delta_0^\ast(\iota_0,u)\in F_0
\;\Longleftrightarrow\;
\mathcal S,u\models_a \neg\hat\varphi.
\]

Hence \eqref{eq:monitor-cond} is equivalent to:
\begin{equation}\label{eq:reach-cond-final}
\forall u\in\Obs_a(\mathcal S)\colon\exists v\in\Sigma_a^\ast\colon\;
uv\in\Obs_a(\mathcal S)
\ \land\
\bigl(
\delta_1^\ast(\iota_1,uv)\in F_1
~\text{ or }~
\delta_0^\ast(\iota_0,uv)\in F_0
\bigr).
\end{equation}

\medskip
The observation language
\[
\Obs_a(\mathcal S)
=
\{\obs_a(u)\mid u\in\Pref(L(\mathcal S))\}
\subseteq \Sigma_a^\ast
\]
is regular and effectively constructible from $\mathcal S$
in terms of a DFA $\mathcal P$ over $\Sigma_a$
such that \[L(\mathcal{P}) = \Obs_a(\mathcal S).\]
Consider the synchronous product
\[
\mathcal G
=
\mathcal P \times
\mathcal K^{\mathcal S,a}_{\hat\varphi} \times
\mathcal K^{\mathcal S,a}_{\neg\hat\varphi}.
\]
States of $\mathcal G$ reachable by words in $\Obs_a(\mathcal S)$
correspond exactly to feasible nonempty observation prefixes together
with the current states of both knowledge monitors.
Condition~\eqref{eq:reach-cond-final} holds iff
from every such reachable state of $\mathcal G$
one can reach another state of $\mathcal G$
whose $\mathcal P$-component is accepting
and whose
$\mathcal K^{\mathcal S,a}_{\hat\varphi}$-component lies in $F_1$
or whose
$\mathcal K^{\mathcal S,a}_{\neg\hat\varphi}$-component lies in $F_0$.

This is a standard reachability property in a finite directed graph.
Thus, monitorability is decidable.
\end{proof}

\section{Monitors}

In the following, we make the notion of monitoring explicit by constructing a
monitor that produces three-valued verdicts.

\begin{definition}[Monitor]
Given a Büchi automaton $\mathcal{S}$ over $\Sigma$, an agent $a \in \Ag$, and
an \LTL property $\varphi$, a
$(\mathcal{S}, \varphi)$-\emph{monitor} for $a$ is a Moore machine
\[
\mathcal M^{\mathcal S,a}_\varphi =
(Q, \iota, \delta, \outmap)
\]
over $\Sigma_a$ and $\Gamma=\{\tverdict,\fverdict,\unknown\}$
such that, for every observation prefix
$u \in \Obs_a(\mathcal S)$, we have the following:
\[
\sem{\mathcal M^{\mathcal S,a}_\varphi}(u) = \outmap(\delta^\ast(\iota,u)) =
\begin{cases}
  \tverdict
    & \text{if } \mathcal S,u \models_a \hat\varphi\\[0.4ex]
  \fverdict
    & \text{if } \mathcal S,u \models_a \neg\hat\varphi\\[0.4ex]
  \unknown
    & \text{otherwise}
\end{cases}
\]
\end{definition}

Operationally, a monitor reads the observation sequence produced by the system
and incrementally updates its internal state.
The verdict reflects whether, based on the observations seen so far, the agent
has obtained sufficient knowledge to decide whether the execution satisfies
$\varphi$.

\begin{example}[Monitor]
Consider Example~\ref{ex:monitorability}.
The following Moore machine is an $(\mathcal{S}, \phi)$-monitor for $a$
(the output value for a state is given in the state itself):
\begin{center}
\begin{tikzpicture}[
  ->, >=stealth, node distance=3cm,
  state/.style={circle, draw=ChapterAccent, thick, minimum size=0.95cm, inner sep=1pt},
  initial/.style={state, fill=ChapterLight},
  accepting/.style={state, double, double distance=1.4pt},
  every edge/.style={draw=ChapterAccent, thick},
  every node/.style={font=\small}
]
\node[accepting] (q1) {$\tverdict$};
\node[accepting] (q2) [below of=q1] {$\fverdict$};
\node[initial,accepting] (q0) [left of=q1, yshift=-1.5cm] {$?$};

\draw[thick, ChapterAccent, <-] (q0) -- ++(-0.8,0);

\path
(q0) edge[loop above] node {$\emptyset$} (q0)
     edge node[above] {$\{r\}$\phantom{aaa}} (q1)
     edge node[below] {$\{s\}$\phantom{aaa}} (q2)
(q1) edge[loop right] node {$\{r\}$} (q1)
(q2) edge[loop right] node {$\{s\}$} (q2);
\end{tikzpicture}
\end{center}
\end{example}

It is always possible to construct a monitor (however, recall
that only monitorability guarantees that a final verdict
may always eventually be reached):

\begin{proposition}[Monitor construction]
\label{prop:monitor}
Let $\mathcal S$ be a Büchi automaton over $\Sigma$ and $a$ be an agent.
Let $\varphi$ be an \LTL formula.
We can effectively construct an $(\mathcal{S}, \varphi)$-\emph{monitor} for $a$.
\end{proposition}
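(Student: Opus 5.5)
The plan is to build the monitor as a synchronous product of the two knowledge monitors already used in the decidability proof for monitorability. Set $\hat\varphi = \tmp{P}(\mathit{first}\wedge\varphi)$, which is again an \LTL formula. Theorem~\ref{thm:monitor-knowledge} then yields knowledge monitors
\[
\mathcal K^{\mathcal S,a}_{\hat\varphi}=(Q_1,\iota_1,\delta_1,F_1)
\quad\text{and}\quad
\mathcal K^{\mathcal S,a}_{\neg\hat\varphi}=(Q_0,\iota_0,\delta_0,F_0),
\]
both DFAs over $\Sigma_a$. I would define the Moore machine $\mathcal M^{\mathcal S,a}_\varphi=(Q_1\times Q_0,(\iota_1,\iota_0),\delta,\outmap)$ with componentwise transitions $\delta((q_1,q_0),\sigma)=(\delta_1(q_1,\sigma),\delta_0(q_0,\sigma))$. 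The output map is $\outmap(q_1,q_0)=\tverdict$ if $q_1\in F_1$, $\fverdict$ if $q_1\notin F_1$ and $q_0\in F_0$, and $\unknown$ otherwise. Everything here is effective because the knowledge monitors are.

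For correctness, fix $u\in\Obs_a(\mathcal S)$. By induction on $|u|$, $\delta^\ast((\iota_1,\iota_0),u)=(\delta_1^\ast(\iota_1,u),\delta_0^\ast(\iota_0,u))$. Combining this with the defining property of knowledge monitors (Definition~\ref{def:monitor-knowledge}) gives three facts. The output is $\tverdict$ iff $\mathcal S,u\models_a\hat\varphi$. The output is $\fverdict$ iff $\mathcal S,u\not\models_a\hat\varphi$ and $\mathcal S,u\models_a\neg\hat\varphi$. Otherwise the output is $\unknown$.

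It remains to check that this matches the three-way case split in the definition of a monitor. That definition is only well-posed if the first two cases are mutually exclusive, and showing this is the one nonroutine step. I would argue as follows: since $u\in\Obs_a(\mathcal S)$, there is some $w\in L(\mathcal S)$ with $\obs_a(w_{\le |u|})=u$. If both $\mathcal S,u\models_a\hat\varphi$ and $\mathcal S,u\models_a\neg\hat\varphi$ held, then $w$ at position $|u|$ would satisfy both $\hat\varphi$ and $\neg\hat\varphi$, which is impossible. So on $\Obs_a(\mathcal S)$ the priority given to $\tverdict$ in $\outmap$ never matters, and the two descriptions coincide.

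No case beyond this nonemptiness argument needs attention. The behaviour on words outside $\Obs_a(\mathcal S)$ is unconstrained, and the case $L(\mathcal S)=\emptyset$ is trivial because then $\Obs_a(\mathcal S)$ is empty.
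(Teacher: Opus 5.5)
Your proposal is correct and matches the paper's proof: it takes the synchronous product of $\mathcal K^{\mathcal S,a}_{\hat\varphi}$ and $\mathcal K^{\mathcal S,a}_{\neg\hat\varphi}$ with the same three-valued output, and rules out the two definite verdicts coinciding via an execution compatible with $u\in\Obs_a(\mathcal S)$. Your explicit priority for $\tverdict$ in $\outmap$ is a small improvement, because states reached by words outside $\Obs_a(\mathcal S)$ (e.g.\ empty belief sets) can lie in both $F_1$ and $F_0$, where the paper's case split would otherwise overlap.
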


\begin{proof}
Let $\hat\varphi = \tmp{P}(\mathit{first} \wedge \varphi)$.
Construct the knowledge monitors
\[
\mathcal K^{\mathcal S,a}_{\hat\varphi}
=
(Q_1,\iota_1,\delta_1,F_1)
\quad \text{and}\quad
\mathcal K^{\mathcal S,a}_{\neg\hat\varphi}
=
(Q_0,\iota_0,\delta_0,F_0).
\]

Then, we obtain an $(\mathcal{S}, \varphi)$-\emph{monitor}
\[
\mathcal M^{\mathcal S,a}_\varphi =
(Q, \iota, \delta, \outmap)
\]
for $a$ as follows:
\begin{itemize}
\item $Q = Q_1 \times Q_0$
\item $\iota = (\iota_1, \iota_0)$
\item $\delta((q_1,q_0), \sigma) =
(\delta_1(q_1, \sigma), \delta_0(q_0, \sigma))$
\item
$
\outmap((q_1,q_0)) =
\begin{cases}
  \tverdict & \text{if } q_1 \in F_1 \\
  \fverdict & \text{if } q_0 \in F_0 \\
  \unknown & \text{otherwise}
\end{cases}
$
\end{itemize}

By correctness of knowledge monitors, at most one of the first two cases
can occur, since the agent cannot simultaneously know
$\hat\varphi$ and $\neg\hat\varphi$.
Thus the monitor yields a well-defined three-valued verdict
at every observation prefix.
By correctness of the knowledge monitors, we also immediately obtain
correctness of $\mathcal M^{\mathcal S,a}_\varphi$.
\end{proof}

\begin{notebox}[Good, bad, and ugly prefixes]
In the literature, an observation prefix $u$ is called
\emph{good} if $\mathcal S,u \models_a \hat\varphi$, and
\emph{bad} if $S,u \models_a \neg\hat\varphi$.
A prefix that is neither good nor bad is \emph{inconclusive}.
Such a prefix is called \emph{ugly} if no extension of it can ever become
good or bad, i.e., if every extension remains inconclusive.
In this case, a monitor will output $\unknown$ forever.
\end{notebox}

Once the monitor produces a definitive verdict, it is
permanent under all further observations:

\begin{proposition}[Permanency of monitor verdicts]
Let $\mathcal S$ be a system, $a\in\Ag$, and $\varphi$ a temporal property.
Let $\mathcal M = (Q,\iota,\delta,\outmap)$
be an $(\mathcal{S}, \varphi)$-\emph{monitor} for $a$.
Then, for all $u \in \Sigma_a^+$ and $v \in \Sigma_a^\ast$ such that
$uv \in \Obs_a(\mathcal S)$, the following holds:
\[
\sem{\mathcal M}(u) \in \{\tverdict,\fverdict\}
\quad
\Longrightarrow
\quad
\sem{\mathcal M}(uv)
=
\sem{\mathcal M}(u)
\]
\end{proposition}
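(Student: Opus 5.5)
The plan is to reduce the statement to a semantic fact about the formula $\hat\varphi=\tmp{P}(\mathit{first}\wedge\varphi)$ and then apply the definition of a $(\mathcal S,\varphi)$-monitor twice: once at $u$ and once at $uv$. First I check that the monitor specification applies at $u$, i.e.\ that $u\in\Obs_a(\mathcal S)$. Since $uv\in\Obs_a(\mathcal S)$, there is $x\in\Pref(L(\mathcal S))$ with $\obs_a(x)=uv$. Because $\obs_a$ is length-preserving and $|u|\ge 1$, the prefix $x_{\le |u|}$ is again a nonempty prefix of a word in $L(\mathcal S)$, and $\obs_a(x_{\le|u|})=u$. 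Hence $u\in\Obs_a(\mathcal S)$, and $\sem{\mathcal M}(u)$ is determined by the three-case definition.

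Next, by symmetry I treat only $\sem{\mathcal M}(u)=\tverdict$, i.e.\ $\mathcal S,u\models_a\hat\varphi$. The case $\fverdict$ is identical with $\neg\hat\varphi$ in place of $\hat\varphi$. The key observation is that, for every $w\in L(\mathcal S)$, the truth of $\hat\varphi$ at position $k$ does not depend on $k$. By the semantics of $\tmp{P}$ and $\mathit{first}$, we have $\mathcal S,w,k\models\hat\varphi$ iff $\mathcal S,w,1\models\varphi$. The reason is that $\mathit{first}$ holds exactly at position $1$, and position $1$ lies in the past-or-present of every $k\ge1$. The same then holds for $\neg\hat\varphi$.

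Now let $k=|u|$ and $\ell=|uv|\ge k$, and take any $w\in L(\mathcal S)$ with $\obs_a(w_{\le \ell})=uv$. Truncating gives $\obs_a(w_{\le k})=u$. From $\mathcal S,u\models_a\hat\varphi$ we get $\mathcal S,w,k\models\hat\varphi$, and hence $\mathcal S,w,\ell\models\hat\varphi$ by position-independence. Therefore $\mathcal S,uv\models_a\hat\varphi$. Alternatively, this step can be phrased as: Proposition~\ref{prop:models-a-iff-K} turns $\mathcal S,u\models_a\hat\varphi$ into $\Knows_a\hat\varphi$ at position $k$, non-forgetting (Exercise~\ref{ex:knowledge-sanity}(2d)) gives $\Knows_a\tmp{P}\hat\varphi$ at $\ell$, and $\tmp{P}\hat\varphi\equiv\hat\varphi$.

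Finally, I must ensure that the monitor at $uv$ outputs $\tverdict$ and not $\fverdict$, i.e.\ that $uv\models_a\hat\varphi$ and $uv\models_a\neg\hat\varphi$ cannot both hold. Since $uv\in\Obs_a(\mathcal S)$, there is some $w\in L(\mathcal S)$ compatible with $uv$, obtained by extending the witness prefix $x$ to a word of $L(\mathcal S)$. That single $w$ cannot satisfy both $\hat\varphi$ and $\neg\hat\varphi$ at position $\ell$. I expect the main obstacle to be bookkeeping rather than insight. The points needing care are the membership $u\in\Obs_a(\mathcal S)$, which needs $u$ nonempty, and the position-independence of $\hat\varphi$, where the strict-since semantics must be unfolded carefully.
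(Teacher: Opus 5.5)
Your proposal is correct and follows the paper's proof. Both truncate an observation compatible with $uv$ to one compatible with $u$, then use that $\hat\varphi=\tmp{P}(\mathit{first}\wedge\varphi)$ holds at a position of $w$ exactly when $\varphi$ holds at position~$1$. Your extra checks make explicit two points the paper leaves implicit: that $u\in\Obs_a(\mathcal S)$ by prefix closure, and that the $\tverdict$ and $\fverdict$ cases cannot both apply at $uv$ (the paper argues the latter only in the monitor construction).
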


\begin{proof}
We treat the case $\tverdict$; the case $\fverdict$ is symmetric.

Assume $\sem{\mathcal{M}}(u)=\tverdict$.
By the definition of monitors,
\[
\mathcal S,u \models_a \hat\varphi.
\]
Now fix an arbitrary extension $v\in\Sigma_a^\ast$
such that $uv \in \Obs_a(\mathcal S)$.
We will show
\[\mathcal S,uv \models_a \hat\varphi,\] which implies
$\sem{\mathcal{M}}(uv)=\tverdict$.

Consider any
$w \in L(\mathcal S)$ with
\[
\obs_a(w_{\le |uv|})= uv.
\]
This implies
$\obs_a(w_{\le |u|})=u$. Hence, by $\mathcal S,u \models_a \hat\varphi$, we have $\mathcal S,w,|u| \models \hat\varphi$.
Since $\hat\varphi=\tmp{P}(\mathit{first}\wedge\varphi)$, its truth is time-invariant
along a run: for all $k\ge 1$,
\[
\mathcal S,w,k \models \hat\varphi \quad\Longleftrightarrow\quad \mathcal S,w,1 \models \varphi.
\]
Thus, $\mathcal S,w,|uv| \models \hat\varphi$.

We have shown $\mathcal S,uv \models_a \hat\varphi$.
By the definition of monitors, this implies
$\sem{\mathcal{M}}(uv)=\tverdict$.
\end{proof}

\section{Classical Runtime Verification}

In classical runtime verification (what we call monitoring), one commonly
makes the following assumptions on the system $\mathcal{S}$,
the given property $\phi$,
the set of propositions $\AP$ and
the set of agents $\Ag$.
\begin{enumerate}
\item There is only one observer and observation is complete, i.e., $\Ag = \{a\}$ and $\AP_a = \AP$. 
\item The property $\phi$ is an \pureLTL formula (without knowledge operators).
\item The underlying system is the most permissive one (a complete black-box), i.e.,
\[L(\mathcal{S}) = \Sigma^\omega .\]
\end{enumerate}

We will adopt all these assumptions in the present section.
Consequently, given an \pureLTL formula $\phi$, we say that
$\phi$ is \emph{monitorable} if $(\mathcal{S}, \phi)$ is monitorable wrt.~$a$.
Moreover, we define its language as
\[
L(\phi) = \bigl\{\, w \in \Sigma^\omega \mid \mathcal{S}, w, 1 \models \phi \,\bigr\}.
\]
Recall that we hereby assume $L(\mathcal{S}) = \Sigma^\omega$.

\begin{notebox}[Complexity]
The problem of deciding monitorability is PSPACE-hard, and
an EXPSPACE upper bound was shown for the future fragment of \pureLTL~\cite{DiekertMW15}.
\end{notebox}

As we look at most permissive systems, we can examine specifications
more closely and examine which of them are monitorable.
Let us first look at closure properties:

\begin{proposition}[Closure of monitorability under Boolean operations]
\label{prop:closure-monitorability}
Let $\phi$ and $\psi$ be monitorable \pureLTL formulas.
Then, $\phi \wedge \psi$ and $\neg\phi$ are monitorable.
\end{proposition}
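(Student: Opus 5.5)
The plan is to work directly with the semantic characterization of good and bad prefixes in the classical setting. Since $\AP_a=\AP$ and $L(\mathcal S)=\Sigma^\omega$, we have $\Obs_a(\mathcal S)=\Sigma^+$, and every nonempty finite word is a feasible observation prefix. Moreover, by the time-invariance of $\hat\varphi=\tmp{P}(\mathit{first}\wedge\varphi)$ (used in the proof of permanency of verdicts), the definition of $\models_a$ gives two simple equivalences:
\begin{itemize}
\item $\mathcal S,u\models_a\hat\varphi$ iff $u\Sigma^\omega\subseteq L(\varphi)$;
\item $\mathcal S,u\models_a\neg\hat\varphi$ iff $u\Sigma^\omega\cap L(\varphi)=\emptyset$.
\end{itemize}
I would state these first, as a small observation, since everything else reduces to set inclusions. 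I would also use $L(\neg\varphi)=\Sigma^\omega\setminus L(\varphi)$ and $L(\varphi\wedge\psi)=L(\varphi)\cap L(\psi)$.

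\emph{Negation.} This is immediate. A prefix $u$ is good for $\neg\varphi$ iff $u\Sigma^\omega\cap L(\varphi)=\emptyset$, i.e., iff $u$ is bad for $\varphi$. Symmetrically, $u$ is bad for $\neg\varphi$ iff $u$ is good for $\varphi$. Hence, for every $u\in\Sigma^+$, the extension $v$ supplied by monitorability of $\varphi$ also works for $\neg\varphi$.

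\emph{Conjunction.} Fix $u\in\Sigma^+$. By monitorability of $\varphi$, choose $v$ such that $uv$ is good or bad for $\varphi$.
\begin{itemize}
\item If $uv$ is bad for $\varphi$, then $uv\Sigma^\omega\cap L(\varphi)\cap L(\psi)=\emptyset$, so $uv$ is bad for $\varphi\wedge\psi$, and we are done.
\item If $uv$ is good for $\varphi$, apply monitorability of $\psi$ to the prefix $uv$ and obtain $v'$ such that $uvv'$ is good or bad for $\psi$.
\end{itemize}
By permanency of verdicts (or directly, since $uvv'\Sigma^\omega\subseteq uv\Sigma^\omega$), $uvv'$ is still good for $\varphi$. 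If $uvv'$ is good for $\psi$, then $uvv'\Sigma^\omega\subseteq L(\varphi)\cap L(\psi)$, so it is good for $\varphi\wedge\psi$. If it is bad for $\psi$, then it is bad for $\varphi\wedge\psi$. In every case, $vv'$ (or $v$) is the required extension.

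The only step needing care is the sequential composition in the conjunction case. One must not try to apply both monitorability hypotheses to $u$ simultaneously, since the two extensions need not be compatible. The fix is to first settle $\varphi$, then settle $\psi$ from the already extended prefix, relying on the persistence of the $\varphi$-verdict. Everything else is routine translation between $\models_a$ and language inclusions.
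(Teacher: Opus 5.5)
Your proof is correct. The paper gives no proof of this proposition; it is posed as an exercise immediately afterwards, so there is no official argument to compare routes against.

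Your approach is the natural one. You rewrite good and bad prefixes as $u\Sigma^\omega\subseteq L(\varphi)$ and $u\Sigma^\omega\cap L(\varphi)=\emptyset$, which is the same language-theoretic style the paper uses in its solution for safety and co-safety. In the conjunction case you first settle $\varphi$ and then settle $\psi$ from the already extended prefix $uv$, relying on persistence of the $\varphi$-verdict. That sequencing is the one idea the proof really needs, and you identified it correctly.

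One remark on what your reformulation costs. Your argument does not depend on the classical assumptions ($\AP_a=\AP$, $L(\mathcal S)=\Sigma^\omega$). You only use three facts, all of which can be stated directly in terms of $\models_a$:
\begin{enumerate}
\item $\mathcal S,u\models_a\widehat{\neg\varphi}$ iff $\mathcal S,u\models_a\neg\hat\varphi$, and symmetrically.
\item A good or bad verdict at $uv$ persists to every extension $uvv'\in\Obs_a(\mathcal S)$. This is the content of the paper's permanency proposition.
\item Being good for both $\varphi$ and $\psi$ implies being good for $\varphi\wedge\psi$, and being bad for either implies being bad for $\varphi\wedge\psi$.
\end{enumerate}
Phrased this way, the same proof gives closure under Boolean operations for arbitrary Büchi systems $\mathcal S$, partial observation, and even \LTL formulas containing knowledge operators. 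The language-inclusion version is more concrete but tied to the most-permissive, fully observable setting.
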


\begin{exercise}
Prove Proposition~\ref{prop:closure-monitorability}.
\end{exercise}

The next question to ask is: What are natural ``atomic'' language classes
that guarantee monitorability. Well-known monitorable properties are safety
and co-safety properties:

\begin{definition}[Safety and co-safety properties]
Let $\phi$ be an \pureLTL formula.
We call $\phi$ a
\begin{itemize}
\item \emph{safety property} if, for all $w \in \Sigma^\omega \setminus L(\phi)$,
there is $u \in \Pref(w)$ such that, for all $v \in \Sigma^\omega$, we have
$uv \not\in L(\phi)$,

\item \emph{co-safety property} if, for all $w \in L(\phi)$,
there is $u \in \Pref(w)$ such that, for all $v \in \Sigma^\omega$, we have
$uv \in L(\phi)$.
\end{itemize}
\end{definition}

\begin{proposition}[Monitorability of safety and co-safety properties]
\label{prop:safety-monitorability}
Safety and co-safety properties are monitorable.
\end{proposition}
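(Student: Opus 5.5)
We treat safety properties first and obtain the co-safety case by duality. Throughout we work in the classical setting of this section: $\AP_a=\AP$, so $\obs_a$ is the identity, and $L(\mathcal S)=\Sigma^\omega$, so $\Obs_a(\mathcal S)=\Sigma^+$. The first step is to unfold the definition of $\models_a$ in this setting. For $u\in\Sigma^+$, we have $\mathcal S,u\models_a\hat\varphi$ iff every $w\in\Sigma^\omega$ with $w_{\le |u|}=u$ satisfies $\mathcal S,w,|u|\models\hat\varphi$. By the time-invariance of $\hat\varphi=\tmp{P}(\mathit{first}\wedge\varphi)$ noted in the proof of permanency, this holds iff $w\in L(\varphi)$. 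Hence $\mathcal S,u\models_a\hat\varphi$ iff $u\Sigma^\omega\subseteq L(\varphi)$. Symmetrically, $\mathcal S,u\models_a\neg\hat\varphi$ iff $u\Sigma^\omega\cap L(\varphi)=\emptyset$. Monitorability thus reduces to the purely language-theoretic statement: every $u\in\Sigma^+$ has an extension $uv$ with $uv\Sigma^\omega\subseteq L(\varphi)$ or $uv\Sigma^\omega\cap L(\varphi)=\emptyset$.

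Now let $\varphi$ be a safety property and fix $u\in\Sigma^+$. We split into two cases. If $u\Sigma^\omega\subseteq L(\varphi)$, we take $v=\varepsilon$ and the verdict is $\tverdict$. Otherwise, there is $v'\in\Sigma^\omega$ with $w:=uv'\notin L(\varphi)$. The safety condition then yields a prefix $x\in\Pref(w)$ such that $xy\notin L(\varphi)$ for all $y\in\Sigma^\omega$, i.e.\ $x\Sigma^\omega\cap L(\varphi)=\emptyset$. Let $z$ be the longer of $u$ and $x$; both are prefixes of $w$, so $z$ extends both. Since $z$ extends $x$, the set $z\Sigma^\omega\cap L(\varphi)$ is still empty. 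Since $z$ extends $u$, we can write $z=uv$ for some $v\in\Sigma^\ast$, and $uv\in\Obs_a(\mathcal S)=\Sigma^+$ trivially. This yields the verdict $\fverdict$. The key point is that a bad prefix remains bad under extension, which is exactly why we take the longer of $u$ and $x$.

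For a co-safety property $\varphi$ the argument is dual. If $u\Sigma^\omega\cap L(\varphi)=\emptyset$, we take $v=\varepsilon$. Otherwise we pick $w=uv'\in L(\varphi)$ and take a good prefix $x$ of $w$ with $x\Sigma^\omega\subseteq L(\varphi)$; extending to the longer of $u$ and $x$ preserves goodness. Alternatively, we can note that $\varphi$ is co-safety iff $\neg\varphi$ is safety, because $L(\neg\varphi)=\Sigma^\omega\setminus L(\varphi)$. Since the definition of monitorability is symmetric in $\hat\varphi$ and $\neg\hat\varphi$, and $\widehat{\neg\varphi}\equiv\neg\hat\varphi$, we could instead invoke Proposition~\ref{prop:closure-monitorability}. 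The direct argument avoids relying on that proposition, whose proof is left as an exercise.

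The only mildly delicate step is the first one: justifying that $\models_a\hat\varphi$ at $u$ coincides with $u\Sigma^\omega\subseteq L(\varphi)$. This requires the time-invariance of $\hat\varphi$ and the fact that $\Obs_a(\mathcal S)=\Sigma^+$ when $L(\mathcal S)=\Sigma^\omega$. Once this is in place, the remaining case analysis is routine.
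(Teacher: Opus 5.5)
Your proof is correct and follows the same route as the paper's solution: you split on whether $u$ already yields a verdict; otherwise you extend $u$ to a violating (resp.\ satisfying) word and apply the safety (resp.\ co-safety) condition to get a bad (resp.\ good) prefix, with the co-safety case handled dually. Your write-up is in fact slightly more careful than the paper's: you make the translation from $\models_a$ to good/bad prefixes explicit, you use the correct case distinction ($u$ already good versus some extension violating $\varphi$), and you take the longer of $u$ and $x$ where the paper simply writes $x=uv$ without covering the case $|x|<|u|$.
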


\begin{exercise}
Prove Proposition~\ref{prop:safety-monitorability}.
\end{exercise}

\begin{solution}
Assume $L(\mathcal S)=\Sigma^\omega$ and let $\varphi$ be an \pureLTL formula.

\smallskip\noindent
\emph{Safety.}
Assume that $\varphi$ is a safety property and fix an arbitrary prefix $u\in\Sigma^\ast$.
If $u$ is already a bad prefix, i.e.\ if for all $w\in\Sigma^\omega$ we have
$uw\notin L(\varphi)$, then $u$ already yields a definitive verdict.

Otherwise, there exists some $w\in\Sigma^\omega$ such that $uw\notin L(\varphi)$.
By safety, there is a prefix $x\in\Pref(uw)$ such that for all $z\in\Sigma^\omega$,
\[
xz\notin L(\varphi).
\]
Since $u$ is a prefix of $uw$, we can write $x=uv$ for some $v\in\Sigma^\ast$.
Then $uv$ is a bad prefix. Hence from every prefix $u$ there exists a finite
continuation leading to a definitive verdict, so $\varphi$ is monitorable.

\smallskip\noindent
\emph{Co-safety.}
Assume that $\varphi$ is a co-safety property and fix $u\in\Sigma^\ast$.
If $u$ is already a good prefix, i.e.\ if for all $w\in\Sigma^\omega$ we have
$uw\in L(\varphi)$, then $u$ already yields a definitive verdict.

Otherwise, there exists some $w\in\Sigma^\omega$ such that $uw\in L(\varphi)$.
By co-safety, there is a prefix $x\in\Pref(uw)$ such that for all $z\in\Sigma^\omega$,
\[
xz\in L(\varphi).
\]
Write again $x=uv$ for some $v\in\Sigma^\ast$. Then $uv$ is a good prefix.
Thus from every prefix $u$ there exists a finite continuation leading to a
definitive verdict, so $\varphi$ is monitorable.
\end{solution}

Safety and co-safety properties play a fundamental role in runtime
verification, even though they do not capture all temporal specifications.
They characterize the two extremal cases in which violations or
satisfactions, respectively, can be determined from a finite observation.
Within the class of \pureLTL-definable properties, the class of monitorable
formulas strictly contains safety and co-safety,
but is itself a strict subset of \pureLTL (cf.\ Exercise~\ref{exercise:safety}):
\[
\text{Safety} \;\cup\; \text{Co-safety}
\;\subsetneqq\;
\text{Monitorable}
\;\subsetneqq\;
\text{\pureLTL}
\]

\begin{exercise}
\label{exercise:safety}
Which of the following \pureLTL formulas are safety properties, which are co-safety properties,
which are monitorable properties?
\[
\begin{array}{llll}
(1)\ \tmp{G}p
&
(2)\ \tmp{F}p
&
(3)\ \tmp{X}p
&
(4)\ \tmp{G}\tmp{F}p
\\[1mm]
(5)\ \tmp{F}\tmp{G}p
&
(6)\ \tmp{X}p \,\vee\, \tmp{G}\tmp{F}p
&
(7)\ p \tmp{U} r
&
(8)\ \tmp{F}p \,\vee\, \tmp{G}r
\end{array}
\]
\end{exercise}

\begin{solution}
The full solution can be found in Table~\ref{formula-classification}.
Note that there is a \emph{monitorable} \pureLTL formula that is neither a safety nor a co-safety property.
\end{solution}

\begin{table}
\[
\begin{tabular}{lccc}
\toprule
Formula & Safety & Co-safety & Monitorable \\
\midrule
$\tmp{G}p$ & \yes & $\no$ & \yes \\[0.5ex]
$\tmp{F}p$ & $\no$ & \yes & \yes \\[0.5ex]
$\tmp{X}p$ & \yes & \yes & \yes \\[0.5ex]
$\tmp{G}\tmp{F}p$ & $\no$ & $\no$ & $\no$ \\[0.5ex]
$\tmp{F}\tmp{G}p$ & $\no$ & $\no$ & $\no$ \\[0.5ex]
$\tmp{X}p \,\vee\, \tmp{G}\tmp{F}p$ & $\no$ & $\no$ & $\no$ \\[0.5ex]
$p \tmp{U} r$ & $\no$ & \yes & \yes \\[0.5ex]
$\tmp{F}p \,\vee\, \tmp{G}r$ & $\no$ & $\no$ & \yes\\
\bottomrule
\end{tabular}
\]
\caption{Classification of formulas\label{formula-classification}}
\end{table}

\begin{exercise}[Monitorability with Past Operators]
Fix $\Ag = \{a\}$, $\AP=\AP_a=\{p,q\}$, and $\Sigma=2^{\AP}$.
We work in the classical monitoring setting of a given most-permissive system $\mathcal{S}$, i.e.,
$L(\mathcal S)=\Sigma^\omega$.

For each of the following formulas $\varphi$, determine whether $(\mathcal S,\varphi)$ is monitorable.
Justify your answer briefly.
\[
{(a)}\ \tmp{G}\tmp{F}\tmp{Y}p\qquad\qquad
{(b)}\ \tmp{Y}\tmp{G}\tmp{F}p\qquad\qquad
{(c)}\ \tmp{G}\tmp{F}\tmp{P}p\qquad\qquad
{(d)}\ \tmp{G}(q \rightarrow \tmp{P}p)
\]
\end{exercise}

\section{Bibliographic Notes}

The framework of runtime verification and monitor-based checking was
systematically developed by Leucker and Schallhart~\cite{LeuckerS09}
and further elaborated for temporal logics and three-valued semantics
by Bauer, Leucker, and Schallhart~\cite{BauerLS11}.
In this setting, monitoring is anticipatory in that verdicts are based
on all possible continuations of the current finite prefix.

The notion of monitorability was introduced by Pnueli and Zaks in the
context of PSL~\cite{PnueliZ06}.
For $\omega$-regular languages given by Büchi automata,
Diekert, Muscholl, and Walukiewicz showed that deciding
monitorability is PSPACE-complete~\cite{DiekertMW15}.
For the future fragment of \pureLTL, they established PSPACE-hardness
and an EXPSPACE upper bound.

Epistemic and information-flow specifications can also be formulated as
hyperproperties, that is, properties relating multiple executions.
Runtime monitoring techniques for HyperLTL and related formalisms were
developed by Bonakdarpour and Finkbeiner~\cite{BonakdarpourF16} and
further studied by Finkbeiner, Hahn, Stenger, and Tentrup~\cite{FinkbeinerHST17}.



\chapter{Knowledge in Timed Systems}

\newcommand{\Time}{\mathbb{R}_{\ge 0}}
\newcommand{\TW}{\mathsf{TW}}
\newcommand{\Clocks}{X}
\newcommand{\Guards}{\mathit{Guards}}
\newcommand{\Val}{\mathit{Val}}
\newcommand{\Intervals}{\mathbb{I}}
\newcommand{\ival}{\overline{0}}
\newcommand{\timetrace}{\mathit{trace}}

This final chapter briefly discusses the timed case, i.e.,
timed diagnosability, timed opacity, and timed monitorability.
The aim is not to redevelop the full untimed theory in complete detail, but to
record the key definitions and the main phenomena that appear once timestamps
enter the picture and become part of the model. Two messages are central:
\begin{itemize}
\item timed temporal reasoning without epistemic operators remains decidable,
\item adding knowledge already leads very quickly to undecidability.
\end{itemize}

\section{Timed Words and Timed Automata}

As before, we fix a finite set of atomic propositions $\AP$, a finite set of agents
$\Ag$, and for each agent $a \in \Ag$ a set of observable propositions
$\AP_a \subseteq \AP$.
We again write $\Sigma = 2^{\AP}$ and $\Sigma_a = 2^{\AP_a}$.

\begin{definition}[Timed words]
A \emph{timed word} over $\Sigma$ is an infinite sequence
\[
w = (\sigma_1,t_1)(\sigma_2,t_2)(\sigma_3,t_3)\ldots
\]
with $\sigma_i \in \Sigma$ and $t_i \in \Time$ such that
\[
0 \le t_1 \le t_2 \le t_3 \le \ldots
\]
It is \emph{non-zeno} if time diverges:
\[
\forall t \in \Time\colon \exists k \in \Npos\colon t \le t_k.
\]

A \emph{finite timed word} is defined analogously as a finite sequence
$
u = (\sigma_1,t_1)\ldots(\sigma_n,t_n)
$
with nondecreasing timestamps.
\end{definition}

Intuitively, each timestamp $t_i$ is the time at which the event labeled by
$\sigma_i$ occurs.

We write $\TW^\omega(\Sigma)$ for the set of non-zeno infinite timed words
over $\Sigma$, and $\TW^\ast(\Sigma)$ and $\TW^+(\Sigma)$ for the sets of
finite and nonempty finite timed words over $\Sigma$.

\begin{definition}[Timed observation]
For an agent $a \in \Ag$, the observation of a timed event $(\sigma,t) \in \Sigma \times \Time$ is
\[
\obs_a((\sigma,t)) = (\sigma \cap \AP_a,t).
\]
This is extended pointwise to finite and infinite timed words.
\end{definition}

Thus, in this chapter we assume that time is observable: the observer sees the
same timestamps as the system, but only the propositions in the set $\AP_a$.
This is natural in settings where events are globally timestamped or agents
have access to a shared clock.
Other models may treat time as partially observable or unobservable, which
would lead to a different indistinguishability relation.

\begin{definition}[Timed indistinguishability]
For finite timed words $u,u' \in \TW^\ast(\Sigma)$, we write
\[
u \sim_a u'
\quad\text{if}\quad
\obs_a(u)=\obs_a(u').
\]
Here, as before, a pointed timed word $(w,k)$ consists of a timed word $w$ and
one of its positions $k$, and $w_{\le k}$ denotes the timed prefix up to
position $k$.
For pointed timed words $(w,k)$ and $(w',k)$, we write
\[
(w,k)\sim_a (w',k)
\quad\text{if}\quad
w_{\le k}\sim_a w'_{\le k}.
\]
\end{definition}

\begin{notebox}[Synchronous perfect recall with time]
In the timed setting we again use synchronous perfect recall.
The observer remembers the whole observation history and also sees all global
timestamps.
Hence indistinguishability compares complete timed observation prefixes.
\end{notebox}

\paragraph{Timed B\"uchi automata.}
Timed B\"uchi automata serve as the system model.

Let $\Clocks$ be a finite set of clocks.
We write $\Guards(\Clocks)$ for the set of Boolean combinations of atomic
constraints
\[
x \in I
\]
with $x \in \Clocks$ and $I \in \Intervals$, where $\Intervals$ is the set of
intervals of the form
\[
(m,n),\ [m,n),\ (m,n],\ [m,n],\ (m,\infty),\ [m,\infty),
\]
with $m,n \in \mathbb{N}$ and $m \le n$ (note that this includes the empty
interval).
A member $g \in \Guards(\Clocks)$ is called a \emph{guard} over $\Clocks$.
A valuation is a mapping $\nu \colon \Clocks \to \Time$.
We write $\nu \models g$ in the obvious way, denote by $\ival$ the all-zero
valuation, and write $\nu[R:=0]$ for the valuation obtained from $\nu$
by resetting the clocks in $R \subseteq \Clocks$ to $0$.

\begin{definition}[Timed Büchi automaton]
A \emph{timed B\"uchi automaton} over $\Sigma$ is a tuple
\[
\mathcal A=(Q,\iota,\Clocks,\Trans,F)
\]
where:
\begin{itemize}
\item $Q$ is a finite set of states and $\iota \in Q$ is the initial state,
\item $\Clocks$ is a finite set of clocks,
\item $F \subseteq Q$ is the set of accepting states,
\item $\Trans \subseteq Q \times \Sigma \times \Guards(\Clocks) \times 2^{\Clocks} \times Q$.
\end{itemize}
\end{definition}

The automaton is \emph{deterministic} if for all distinct transitions
\[
\tau_1=(q,\sigma,g_1,R_1,q_1),\qquad
\tau_2=(q,\sigma,g_2,R_2,q_2)
\]
in $\Trans$, we have
\[
g_1 \wedge g_2 \equiv \mathsf{false}.
\]

\begin{definition}[Configurations, runs, and traces]
A \emph{configuration} is a pair $(q,\nu) \in Q \times \Val(\Clocks)$.
Let $d \in \Time$ denote a delay, that is, the amount of time elapsed before
the next event occurs.
We write
\[
(q,\nu)\xrightarrow[(g,R)]{(\sigma,d)}(q',\nu')
\]
if $(q,\sigma,g,R,q') \in \Trans$, $\nu+d \models g$, and
$\nu'=(\nu+d)[R:=0]$.

A \emph{run} of $\mathcal A$ is an infinite sequence
\[
\pi = (q_0,\nu_0)\xrightarrow[(g_1,R_1)]{(\sigma_1,d_1)}(q_1,\nu_1)
\xrightarrow[(g_2,R_2)]{(\sigma_2,d_2)}(q_2,\nu_2)\ldots
\]
starting in $(\iota,\ival)$, i.e., such that $(q_0,\nu_0) = (\iota,\ival)$.

Its induced timed word is
\[
\timetrace(\pi)
=
(\sigma_1,d_1)(\sigma_2,d_1+d_2)(\sigma_3,d_1+d_2+d_3)\ldots.
\]
The run is \emph{accepting} if it visits $F$ infinitely often.
\end{definition}

The timed language of $\mathcal A$ is
\[
L(\mathcal A)
=
\{\, w \in \TW^\omega(\Sigma) \mid \exists~ \text{accepting run }
\pi \text{ of } \mathcal A \text{ such that } \timetrace(\pi)=w \,\}.
\]
Thus, by definition, the language of a timed B\"uchi automaton contains only
non-zeno timed words.

A central algorithmic result underlying much of the theory of timed automata
is the decidability of the emptiness problem, originally established by
Alur and Dill~\cite{AlurD94} using a finite region abstraction.

\begin{theorem}[Decidability of emptiness for timed Büchi automata]
Given a timed B\"uchi automaton $\mathcal{A}$, it is decidable whether
$L(\mathcal{A}) = \emptyset$.
\end{theorem}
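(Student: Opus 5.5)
The plan is to follow the region abstraction of Alur and Dill: reduce emptiness of the timed Büchi automaton $\mathcal A=(Q,\iota,\Clocks,\Trans,F)$ to emptiness of a finite (untimed) Büchi automaton, which is decidable by the reachable-accepting-cycle check already used in Theorem~\ref{thm:decidability-mc}.

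\textbf{Step 1: regions.} Let $c_{\max}$ be the largest constant appearing in a guard of $\mathcal A$. Two valuations $\nu,\nu'$ are region-equivalent, written $\nu\cong\nu'$, if for every clock $x$ either both $\nu(x),\nu'(x)>c_{\max}$, or $\lfloor\nu(x)\rfloor=\lfloor\nu'(x)\rfloor$ and $\nu(x)$ is an integer iff $\nu'(x)$ is. Moreover, for all clocks $x,y$ with values $\le c_{\max}$, the order (including equality) of their fractional parts must agree. This relation has finitely many classes, at most $|\Clocks|!\cdot 2^{|\Clocks|}\cdot(2c_{\max}+2)^{|\Clocks|}$. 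I would verify three facts:
\begin{itemize}
\item[(a)] $\cong$ respects every guard: if $\nu\cong\nu'$, then $\nu\models g$ iff $\nu'\models g$, because interval endpoints are integers $\le c_{\max}$.
\item[(b)] $\cong$ is compatible with resets.
\item[(c)] $\cong$ is a time-abstract bisimulation: if $\nu\cong\nu'$, then for every $d$ there is a $d'$ with $\nu+d\cong\nu'+d'$.
\end{itemize}

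\textbf{Step 2: the region automaton.} The states are pairs $(q,r)$ with $r$ a region. There is a transition $(q,r)\xrightarrow{\sigma}(q',r')$ if some $\nu\in r$ and some $d$ satisfy $\nu+d\models g$ and $(\nu+d)[R:=0]\in r'$, for some $(q,\sigma,g,R,q')\in\Trans$. By (a)–(c), every run of $\mathcal A$ projects to a run of the region automaton. Conversely, every run of the region automaton lifts to a run of $\mathcal A$, built inductively using (c).

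\textbf{Step 3: non-zenoness, the main obstacle.} The language contains only non-zeno words, but the lifted runs need not diverge. I would add a generalized Büchi condition, realised by a Büchi automaton with polynomial blow-up as in the GBA note. For each clock $x$, the run must infinitely often either have $x$ in an unbounded region ($x>c_{\max}$) or reset $x$ / pass through a region where $x$ is an integer after strict time passage. Equivalently, I would use the standard progress condition: for every clock $x$, infinitely often $x=0$ or $x>c_{\max}$.

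I then need to show two directions. First, the projection of every non-zeno run satisfies this condition: as time diverges, each clock either is reset infinitely often or eventually exceeds $c_{\max}$. Second, every region run satisfying the condition can be lifted to a non-zeno run. For this, I would choose the delays in the inductive lifting so that each visit forcing progress adds at least a fixed amount, say $1/2$, of time. This is possible because the conditions ``$x$ reaches an integer from a non-integer'' or ``$x$ passes $c_{\max}$'' recur. Getting this lifting argument clean is the step where most care is needed.

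\textbf{Step 4: conclusion.} Then $L(\mathcal A)\neq\emptyset$ iff the (generalized) Büchi region automaton has a reachable accepting cycle, which is decidable by graph search.
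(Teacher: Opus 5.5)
Your approach is the region abstraction that the notes rely on; they state the theorem without proof and attribute it to Alur and Dill. Steps~1, 2 and~4 are sound. The gap is in Step~3. The condition you finally commit to, ``for every clock $x$, infinitely often $x=0$ or $x>c_{\max}$'', is \emph{not} equivalent to non-zenoness in the setting of these notes, so both your lifting claim and the resulting algorithm fail.

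That condition is tailored to strictly increasing timestamps. There, every step of the region automaton involves a positive delay, and $x=0$ can only arise from a reset after time has passed. Here timestamps are merely non-decreasing, and delays $d\in\mathbb{R}_{\ge 0}$ may be $0$.

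Concretely, take a single state $q$ (initial and accepting), a single clock $x$, and the single self-loop $(q,\sigma,x\in[0,0],\emptyset,q)$. Resetting $x$ on this loop changes nothing. Every run is forced to use $d=0$ at every step, so every trace is zeno and $L(\mathcal A)=\emptyset$. Yet the region automaton has the reachable accepting cycle $(q,[x=0])\to(q,[x=0])$, which satisfies your progress condition, so your procedure answers ``nonempty''. In particular, under your condition nothing makes ``$x$ reaches an integer from a non-integer'' recur, so there is no visit at which you could insert $1/2$ time units.

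The simplest repair removes the delicate lifting argument altogether by building time divergence into a Büchi condition. Add a fresh clock $z$ and replace every transition $(q,\sigma,g,R,q')$ by two copies:
\begin{itemize}
\item a \emph{tick} $(q,\sigma,g\wedge z\in[1,\infty),R\cup\{z\},q')$;
\item a non-tick $(q,\sigma,g\wedge z\in[0,1),R,q')$.
\end{itemize}
Record in the state whether the last transition was a tick, and add ``last transition was a tick'' as a second Büchi set.

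Since the two copies have complementary $z$-guards, every run of $\mathcal A$ extends uniquely to a run of the new automaton. That run is non-zeno iff it performs infinitely many ticks: if time diverges, $z\ge 1$ recurs and forces a tick, and conversely consecutive ticks are at least one time unit apart.

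Now any lift of an accepting run of the (generalized Büchi) region automaton of this new automaton, built exactly as in your Step~2, is automatically non-zeno. Step~4 then goes through with no extra progress condition.

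If you prefer to keep a progress condition on regions instead, it must genuinely force positive time passage, and the lifting still has to be argued carefully. The delay available at a step can be bounded by $1$ minus a fractional part inherited from your earlier choices, so ``add $1/2$ at each progress visit'' is not automatically possible.
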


\section{Deterministic Timed Automata on Finite Words}

For monitoring and diagnosis we also need deterministic timed automata over
finite timed words.

\begin{definition}[Deterministic timed automaton]
A \emph{deterministic timed automaton} (DTA) over $\Sigma$ is a tuple
\[
\mathcal A=(Q,\iota,\Clocks,\delta,F)
\]
where:
\begin{itemize}
\item $Q$ is a finite set of states and $\iota \in Q$ is the initial state,
\item $\Clocks$ is a finite set of clocks,
\item $F \subseteq Q$ is the set of final states,
\item $\delta$ is a symbolic transition function
\[
\delta \colon Q \times \Sigma \to
2^{\Guards(\Clocks)\times 2^{\Clocks}\times Q}
\]
such that, for every $q \in Q$ and $\sigma \in \Sigma$, $\delta(q,\sigma)$ is
finite and, if
\[
\delta(q,\sigma)=\{(g_1,R_1,q_1),\ldots,(g_n,R_n,q_n)\},
\]
then \[g_i \wedge g_j \equiv \mathsf{false}\] for all $i \neq j$, and
\[
g_1 \vee \ldots \vee g_n \equiv \mathsf{true}.
\]
\end{itemize}
\end{definition}

Hence for every state, letter, and valuation there is a unique enabled
transition.

This induces a concrete transition function
\[
\widehat{\delta}\colon Q \times \Val(\Clocks) \times \Sigma
\to 2^{\Clocks} \times Q.
\]
in the obvious way: $\widehat{\delta}(q,\nu,\sigma)=(R,q')$ for the unique
triple $(g,R,q') \in \delta(q,\sigma)$ such that $\nu \models g$.
We extend it to finite timed words as usual:
\[
\delta^\ast \colon Q \times \Val(\Clocks) \times \TW^\ast(\Sigma)
\to Q \times \Val(\Clocks),
\]
with $\delta^\ast(q,\nu,\varepsilon)=(q,\nu)$, and for
$u \cdot (\sigma,t) \in \TW^\ast(\Sigma)$,
\[
\delta^\ast(q,\nu,u\cdot(\sigma,t))
=
(q',(\nu_u+d)[R:=0]),
\]
where $\last(u)$ denotes the timestamp of the last event of $u$,
$\delta^\ast(q,\nu,u)=(q_u,\nu_u)$, $d=t-\last(u)$,
$\last(\varepsilon)=0$, and
$(R,q')=\widehat{\delta}(q_u,\nu_u+d,\sigma)$.

The language of the DTA is
\[
L(\mathcal A)
=
\{\,u\in\TW^\ast(\Sigma)\mid \delta^\ast(\iota,\ival,u)=(q,\nu)
\text{ for some }q\in F\,\}.
\]

\section{Timed Temporal Logic}

Like in the untimed case, we use strict until and strict since as primitive operators.
Strict operators are particularly natural for timed words.
They are also close in spirit to the event-clock view of timed temporal logic,
where constraints are evaluated with respect to the last past or first future
occurrence of an event~\cite{RaskinS99,BauerLS11}.
Moving to a later position does not necessarily mean that time has advanced,
since two consecutive events may have the same timestamp.
Thus $\sFuture_{[0,0]}\psi$ says that $\psi$ holds at a distinct later event
occurring at the same time as the current event.
The non-strict operators below add the current-position case explicitly when
$0$ belongs to the interval.

\begin{definition}[Epistemic timed temporal logic]
Formulas from epistemic timed temporal logic, denoted by \TLTL, are generated by:
\[
\phi ~::=~ p \;\mid\; \neg\phi \;\mid\; \phi \wedge \phi \;\mid\; \phi \,\sUntil_I\, \phi \;\mid\; \phi \,\sSince_I\, \phi \;\mid\; \Knows_a\phi
\]
where $p \in \AP$, $a \in \Ag$, and $I \in \Intervals$.

The fragment without epistemic operators is denoted by \TLTLFP.
It still contains both future and past timed modalities.
\end{definition}

Timed formulas in this chapter are interpreted at event positions only.
Thus, the logic reasons about the timestamps of events, not about all
intermediate time instants.

\begin{definition}[Semantics]
Let $\mathcal T$ be a timed B\"uchi automaton over $\Sigma$ and let
$w=(\sigma_1,t_1)(\sigma_2,t_2)\ldots \in L(\mathcal T)$.
For $k\ge 1$, satisfaction is defined inductively as follows:
\begin{itemize}
\item $\mathcal T,w,k \models p$ if $p \in \sigma_k$
\item the Boolean cases are standard
\item $\mathcal T,w,k \models \phi \,\sUntil_I\, \psi$ if there exists $j>k$ such that
\[
\mathcal T,w,j \models \psi,\qquad
\mathcal T,w,i \models \phi \text{ for all }k<i<j,
\qquad
t_j-t_k \in I,
\]
and $j$ is the first future position satisfying $\psi$, i.e.,
\[
\mathcal T,w,i \not\models \psi \quad\text{for all } k<i<j,
\]
\item $\mathcal T,w,k \models \phi \,\sSince_I\, \psi$ if there exists $j<k$ such that
\[
\mathcal T,w,j \models \psi,\qquad
\mathcal T,w,i \models \phi \text{ for all }j<i<k,
\qquad
t_k-t_j \in I
\]
and $j$ is the last past position satisfying $\psi$, i.e.,
\[
\mathcal T,w,i \not\models \psi \quad\text{for all } j<i<k,
\]
\item $\mathcal T,w,k \models \Knows_a\phi$ if for all $w' \in L(\mathcal T)$ with
$(w,k)\sim_a (w',k)$, we have $\mathcal T,w',k \models \phi$
\end{itemize}
\end{definition}

Note that $\phi \,\sUntil_I\, \psi$ requires that the first future position
satisfying $\psi$ occurs at a delay in $I$, and $\phi \,\sSince_I\, \psi$
requires that the last past position satisfying $\psi$ lies at a delay in $I$.
This differs from the more common existential interval-witness reading, where
any suitable future or past witness in the interval would suffice.
Related strict first-time semantics have also been studied explicitly in timed
logics~\cite{Alsmann025}. Event-clock logic is likewise based on the first
future and last past occurrence of an event~\cite{RaskinS99,BauerLS11}.
We adopt this convention because it leads to simpler automata-theoretic
constructions in the sequel.

As before, we write $\mathcal T \models \phi$ if
$\mathcal T,w,1 \models \phi$ for every $w \in L(\mathcal T)$.

\paragraph{Derived operators.}
We define:
\[
\phi \,\tmp{U}_I\, \psi \;:=\;
\begin{cases}
\psi \lor (\phi \wedge \phi \,\sUntil_I\, \psi) & \text{if } 0 \in I\\
\phi \wedge \phi \,\sUntil_I\, \psi & \text{if } 0 \notin I
\end{cases}
\]
\[
\phi \,\tmp{S}_I\, \psi \;:=\;
\begin{cases}
\psi \lor (\phi \wedge \phi \,\sSince_I\, \psi) & \text{if } 0 \in I\\
\phi \wedge \phi \,\sSince_I\, \psi & \text{if } 0 \notin I
\end{cases}
\]
and therefore
\begin{align*}
\sFuture_I\phi &\;:=\; \top \,\sUntil_I\, \phi
&
\sPast_I\phi &\;:=\; \top \,\sSince_I\, \phi
\\[1ex]
\tmp{F}_I\phi &\;:=\; \top \,\tmp{U}_I\, \phi
&
\tmp{P}_I\phi &\;:=\; \top \,\tmp{S}_I\, \phi
\\[1ex]
\tmp{G}_I\phi &\;:=\; \neg\tmp{F}_I\neg\phi
&
\tmp{H}_I\phi &\;:=\; \neg\tmp{P}_I\neg\phi
\end{align*}
For the unbounded versions, we use the interval $[0,\infty)$ and write:
\begin{align*}
\phi \,\sUntil\, \psi &\;:=\; \phi \,\sUntil_{[0,\infty)}\, \psi
&
\phi \,\sSince\, \psi &\;:=\; \phi \,\sSince_{[0,\infty)}\, \psi
\\[1ex]
\phi \,\tmp{U}\, \psi &\;:=\; \phi \,\tmp{U}_{[0,\infty)}\, \psi
&
\phi \,\tmp{S}\, \psi &\;:=\; \phi \,\tmp{S}_{[0,\infty)}\, \psi
\\[1ex]
\tmp{F}\phi &\;:=\; \tmp{F}_{[0,\infty)}\phi
&
\tmp{P}\phi &\;:=\; \tmp{P}_{[0,\infty)}\phi
\\[1ex]
\tmp{G}\phi &\;:=\; \tmp{G}_{[0,\infty)}\phi
&
\tmp{H}\phi &\;:=\; \tmp{H}_{[0,\infty)}\phi.
\end{align*}

\begin{definition}[Past Fragment]
The \emph{past fragment} of \TLTLFP, denoted $\mathrm{TLTL}_{\mathsf{past}}$, consists of
formulas of the form $\tmp{G}\psi$, where $\psi$ does not use the future
operator~$\sUntil_I$.
\end{definition}

\section{Timed Model Checking}

The first important point is that timed model checking remains decidable as
long as we do not add epistemic operators.
We recall this through the same kind of construction as in the untimed case:
given a system and a formula, we build an automaton that runs on the same timed
words and records, at each event position, whether the formula is true there.
For past formulas this construction can be made deterministic.
For formulas with future operators one still obtains a timed transducer, but the
construction is less direct.

\smallskip

Let us first define the timed analogue of a transducer:

\begin{definition}[Timed transducer]
Let $\mathcal T$ be a timed B\"uchi automaton over $\Sigma$ and let $\phi$ be
a timed formula.
A timed \emph{$(\mathcal T,\phi)$-transducer} consists of a timed B\"uchi
automaton
\[
\mathcal A_\phi=(Q_\phi,\iota_\phi,\Clocks_\phi,\Trans_\phi,F_\phi)
\]
over $\Sigma$ together with an output map
$\gamma_\phi \colon Q_\phi \to \{0,1\}$ such that:
\[
L(\mathcal A_\phi)=L(\mathcal T),
\]
and for every accepting run
\[
\pi =
(q_0,\nu_0)\xrightarrow[(g_1,R_1)]{(\sigma_1,d_1)}(q_1,\nu_1)
\xrightarrow[(g_2,R_2)]{(\sigma_2,d_2)}(q_2,\nu_2)\ldots
\]
of $\mathcal A_\phi$ over a timed word $w \in L(\mathcal T)$ and every
$k \ge 1$,
\[
\gamma_\phi(q_k)=1
\quad\Longleftrightarrow\quad
\mathcal T,w,k \models \phi.
\]
\end{definition}

\begin{theorem}[Timed transducers for TLTL]
\label{thm:timed-transducers-tltl}
For every timed B\"uchi automaton $\mathcal T$ and every \TLTLFP formula
$\phi$, one can effectively construct a timed
$(\mathcal T,\phi)$-transducer.
For formulas from the past fragment
$\mathrm{TLTL}_{\mathsf{past}}$, the transducer can be chosen deterministic.
\end{theorem}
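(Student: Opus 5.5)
We mirror the untimed proof of Theorem~\ref{thm:automaton-construction} and proceed by structural induction on $\phi$. Each inductive step builds a timed Büchi automaton $\mathcal A_\phi$ over $\Sigma$ that runs synchronously with the automata for the immediate subformulas. Its guards always include those of the subformula automata; only clocks are added, so the conditions on the traced timed words are never strengthened.

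\textbf{Base and Boolean cases.} The base case $p$ pairs $\mathcal T$ with a bit recording $p\in\sigma$, exactly as in the untimed case. Clocks and guards are copied from $\mathcal T$. Negation flips $\gamma$. Conjunction is a synchronous product of two timed automata with disjoint clock sets, using generalized Büchi acceptance. The conversion from generalized to ordinary Büchi acceptance works verbatim for timed automata, since acceptance only concerns the location component. The invariant $L(\mathcal A_\phi)=L(\mathcal T)$ holds because every component accepts $L(\mathcal T)$ and a product run exists iff both component runs exist.

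\textbf{Strict since $\psi_1\sSince_I\psi_2$.} Under the ``last past $\psi_2$-position'' semantics, the relevant witness is unique: it is the most recent position $j$ where $\psi_2$ held. We add one fresh clock $z$, which is reset on every transition that leaves a state with $\gamma_{\psi_2}=1$. Each such transition is the one reading position $j+1$, so after the delay of that step $z$ measures $t_{\cdot}-t_j$; more precisely, we reset $z$ when entering a position whose predecessor satisfies $\psi_2$. Concretely, we store in the state (i) whether some $\psi_2$-position has occurred, (ii) whether $\psi_1$ has held at every position strictly after the last $\psi_2$-position, and (iii) the previous values of $\gamma_{\psi_1},\gamma_{\psi_2}$. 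To make the reset accurate, we instead reset $z$ \emph{at} each position where $\psi_2$ holds. This is possible because $\gamma_{\psi_2}$ of the target state is known on the transition, and resetting when the target state has $\gamma_{\psi_2}=1$ is legitimate. The output at position $k$ is then decided by a guard $z\in I$ on the transition into $k$, evaluated before the reset.

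Since the truth value depends on a guard, we split each transition into two copies, one with guard $g\wedge z\in I$ and one with guard $g\wedge \neg(z\in I)$. The output bit is stored in the target state. This uses only past information, so if the subformula automata are deterministic, the result is deterministic. This also yields the second claim: for $\mathrm{TLTL}_{\mathsf{past}}$, the inner $\psi$ is built deterministically by induction (assuming $\mathcal T$ is handled separately by product at the end), and $\tmp{G}\psi$ is just the check that $\gamma_\psi=1$ at all positions. There, determinism refers to the transducer part reading from the subformula states. To claim a genuinely deterministic automaton, I would take the deterministic past-automaton for $\psi$ as a standalone DTA over all timed words and pair it with $\mathcal T$ only for the language constraint.

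\textbf{Strict until $\psi_1\sUntil_I\psi_2$ (main obstacle).} For until, the witness is the \emph{first} future $\psi_2$-position, so pending obligations are nested but all point to the same witness. All positions since the previous $\psi_2$-position wait for the same next $\psi_2$-position. I would proceed as follows. Guess $\theta$ as in the untimed construction. At each position with a pending positive obligation, we must check $t_j-t_k\in I$ for every pending $k$. Since all pending $k$ share one witness $j$, it suffices to track the \emph{oldest} and the \emph{youngest} pending obligation of each type with two clocks; this is the standard trick for MITL-like first-occurrence constraints. However, intervals are not convex over a set of start times in general, so this needs care. For the first-occurrence semantics with a single witness, the set $\{t_k\}$ of pending obligations must lie in $t_j-I$, and we can reduce to the guarded condition on the oldest and youngest obligations only when $I$ is an interval. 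Because $I$ is an interval, $t_j-I$ is an interval, so checking the extremal pending positions suffices. Negative obligations ($\theta=0$ while $\psi_2$ does occur later, at a delay outside $I$) are handled symmetrically by requiring the guard $\neg(\cdot\in I)$. The complication is that the set of pending times is not an interval. I would resolve this by guessing, at each position, the truth value, and grouping the pending positions with $\theta=0$ and with $\theta=1$ separately. For each group, I would keep clocks for its oldest and youngest members, plus one extra ``check'' per group. In the worst case I expect to need to guess a bounded number of breakpoints, and I would try the event-clock view first: a prophecy clock for the next $\psi_2$ occurrence, simulated by guessing its value region and verifying it when $\psi_2$ arrives. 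This prophecy step is where I expect the real difficulty; correctness and the Büchi condition forbidding unresolved positive guesses then follow as in the untimed case.
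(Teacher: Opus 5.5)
\textbf{Since case: agrees with the paper.} Your Boolean and since cases follow the paper's sketch. A fresh clock $z$ is reset exactly when the target state satisfies $\psi_2$, the guard $z\in I$ is evaluated before that reset, and each transition is split into a $z\in I$ copy and a $\neg(z\in I)$ copy, which preserves determinism. The paper first factors $\psi_1\sSince_I\psi_2\equiv(\psi_1\sSince\psi_2)\wedge\sPast_I\psi_2$ and builds the clock gadget only for $\sPast_I\psi_2$. Your direct version with flags (i)--(ii) is equivalent, provided you drop your first reset rule: resetting when \emph{leaving} a $\psi_2$-state makes $z$ measure time since position $j{+}1$, not $j$.

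\textbf{Until case: the genuine gap.} You do not arrive at a construction, and the one concrete mechanism you state is unsound. Testing only the oldest and youngest pending negative obligations with $\neg(\cdot\in I)$ fails because the complement of $I$ is not convex. Take $I=[1,2]$, positions at times $0$, $1.5$, $2.5$, and the next $\psi_2$-position at time $3$. The delays are $3$, $1.5$, $0.5$. Guessing $\theta=0$ everywhere passes both extremal tests, although the middle position satisfies $\sFuture_I\psi_2$. Grouping by the value of $\theta$, with oldest/youngest clocks per group, has the same defect. The prophecy-clock idea is left open.

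\textbf{The missing idea: monotonicity inside a block.} All positions $k$ in a block $[j',j)$, where $j'<j$ are consecutive $\psi_2$-positions (or $j'=1$ before the first one), share the same first strict future $\psi_2$-position $j$. By the first-occurrence semantics, the truth of $\sFuture_I\psi_2$ at $k$ depends only on $t_j-t_k$, which is nonincreasing in $k$. Hence the correct values on a block have the shape $0^*1^*0^*$: delay above $I$, then inside $I$, then below $I$. After the paper's splitting $\sFuture_{[m,n]}\psi_2\equiv\sFuture_{[0,n]}\psi_2\wedge\sFuture_{[m,\infty)}\psi_2$, the shape is simply $0^*1^*$, resp.\ $1^*0^*$.

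\textbf{How to finish the until construction.}
\begin{itemize}
\item Guess $\theta$ subject to this shape. If you treat a general $I$ directly, the switch between the two $0$-segments must also be guessed.
\item Check only the boundary positions. For $[0,n]$: use one clock reset at every $0$-position and one clock reset at the first $1$-position of the block.
\item On the transition into $j$, before any reset, test the guards ``last $0$ has delay $>n$'' and ``first $1$ has delay $\le n$''. By monotonicity these certify the whole block.
\item Combine this with the untimed strict-until component via $\psi_1\sUntil_I\psi_2\equiv(\psi_1\sUntil\psi_2)\wedge\sFuture_I\psi_2$.
\item Add a B\"uchi set excluding an infinite final block that contains a positive guess.
\end{itemize}
This yields the transducer with a constant number of clocks per subformula; no unbounded family of obligations arises.

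\textbf{Determinism for the past fragment.} Your argument does not work as stated. Pairing a DTA with a nondeterministic $\mathcal T$ does not give a deterministic automaton. Moreover, the truth of $\tmp{G}\psi$ at position $k$ depends on the future, so in general no deterministic transducer for $\tmp{G}\psi$ exists. What the since construction actually yields is a deterministic transducer for the inner past formula $\psi$ over a deterministic base automaton. That is also what the paper uses later, with $L(\mathcal T)=\TW^\omega(\Sigma)$.
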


\begin{proof}[Proof sketch]
Atomic propositions and Boolean connectives are handled exactly as in the
untimed case, by copying the system automaton and decorating states with the
corresponding truth value.

The first genuinely timed case is strict since.
For every pair of formulas $\phi_1,\phi_2$ and interval $I$, we have
\[
\phi_1 \,\sSince_I\, \phi_2
\;\equiv\;
(\phi_1 \,\sSince\, \phi_2)\ \wedge\ \sPast_I\phi_2.
\]
The untimed strict-since part is handled as before, so it suffices to explain
the construction for $\sPast_I\psi$.

Assume that by induction we already have a timed $(\mathcal T,\psi)$-transducer
\[
\mathcal A_\psi=(Q_\psi,\iota_\psi,\Clocks_\psi,\Delta_\psi,F_\psi)
\]
with output map $\gamma_\psi \colon Q_\psi \to \{0,1\}$.
Let $z \notin \Clocks_\psi$ be a fresh clock.
We build a new transducer for $\sPast_I\psi$ whose states are triples
\[
(q,\mathit{past},\mathit{sat}) \in Q_\psi \times \{0,1\} \times \{0,1\}.
\]
Their intended meaning is:
\begin{itemize}
\item $q$ is the current state of the transducer for $\psi$,
\item $\mathit{past}=1$ means that a strict past occurrence of $\psi$ already exists,
\item $\mathit{sat}=1$ means that $\sPast_I\psi$ holds at the current position.
\end{itemize}

The fresh clock $z$ stores the elapsed time since the most recent position at
which $\psi$ held.
Whenever the construction enters a position satisfying $\psi$, the clock $z$ is
reset so that this position becomes the new candidate past witness for all
future positions.

For every transition
\[
(q,\sigma,g,R,q') \in \Delta_\psi
\]
and every $\mathit{past},\mathit{sat}\in\{0,1\}$, we add new transitions from
$(q,\mathit{past},\mathit{sat})$ as follows.
First define
\[
R'=
\begin{cases}
R \cup \{z\} & \text{if } \gamma_\psi(q')=1,\\
R & \text{otherwise.}
\end{cases}
\]
Thus $z$ is reset exactly when the new position satisfies $\psi$.
The next past flag is
\[
\mathit{past}'=
\begin{cases}
1 & \text{if } \mathit{past}=1 \text{ or } \gamma_\psi(q)=1,\\
0 & \text{otherwise.}
\end{cases}
\]
We then add the transition tuple
\[
\bigl((q,\mathit{past},\mathit{sat}),\sigma,g\wedge z\in I,R',
(q',\mathit{past}',\mathit{past}')\bigr)
\]
and the transition tuple
\[
\bigl((q,\mathit{past},\mathit{sat}),\sigma,g\wedge \neg(z\in I),R',
(q',\mathit{past}',0)\bigr).
\]
The guard on $z$ is tested before the possible reset of $z$ in $R'$.
Hence the satisfaction bit at the target position is computed from the old
remembered witness, while a current occurrence of $\psi$ is stored only for
future positions.

Thus, for each transition of the transducer for $\psi$, the new construction
adds a split between the case $z \in I$ and the case
$z \notin I$.
Because the most recent past witness is unique, no additional nondeterministic
guessing is needed.
This yields a timed transducer for $\sPast_I\psi$.
Moreover, if the transducer for $\psi$ is deterministic, then the new one is
deterministic as well.
Thus the construction gives deterministic transducers throughout the past
fragment.

For the future operator one reduces
\[
\phi_1 \,\sUntil_I\, \phi_2
\]
to an untimed strict-until component together with a timed eventuality
constraint using the equivalence
\[
\phi_1 \,\sUntil_I\, \phi_2
\;\equiv\;
(\phi_1 \,\sUntil\, \phi_2)\ \wedge\ \sFuture_I \phi_2.
\]
Moreover, for finite bounds one may split intervals as
\[
\sFuture_{[m,n]}\phi
\;\equiv\;
\sFuture_{[0,n]}\phi \ \wedge\ \sFuture_{[m,\infty)}\phi.
\]
The strict-future case is fundamentally harder: several pending timing
obligations may have to be guessed and verified in parallel.
Hence one can still obtain a timed transducer, but in general no deterministic one.
\end{proof}

\paragraph{Solving the timed model-checking problem.}
We now use the transducer construction to decide model checking, exactly as in
the untimed case.

\begin{theorem}[Timed model checking is decidable]
The following problem is decidable:
given a timed B\"uchi automaton $\mathcal T$ over $\Sigma$ and a \TLTLFP formula
$\phi$, do we have
\[
\mathcal T \models \phi\,?
\]
\end{theorem}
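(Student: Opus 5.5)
We mirror the proof of Theorem~\ref{thm:decidability-mc} and reduce the question to emptiness of a timed B\"uchi automaton. Given $\mathcal T$ and a \TLTLFP formula $\phi$, we first apply Theorem~\ref{thm:timed-transducers-tltl} to obtain a timed $(\mathcal T,\phi)$-transducer, that is, a timed B\"uchi automaton $\mathcal A_\phi=(Q_\phi,\iota_\phi,\Clocks_\phi,\Trans_\phi,F_\phi)$ together with $\gamma_\phi$. By the transducer property, $L(\mathcal A_\phi)=L(\mathcal T)$. Moreover, for every accepting run $\pi$ over $w$ we have $\gamma_\phi(q_1)=1$ iff $\mathcal T,w,1\models\phi$. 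This yields the chain of equivalences
\[
\mathcal T\not\models\phi
\;\Longleftrightarrow\;
\exists w\in L(\mathcal T)\colon \mathcal T,w,1\not\models\phi
\;\Longleftrightarrow\;
\exists\text{ accepting run }\pi\text{ of }\mathcal A_\phi\text{ with }\gamma_\phi(\pi(1))=0 .
\]
The first step uses the definition of $\mathcal T\models\phi$. For the second step, the direction from right to left needs the ``for all accepting runs'' clause of the transducer definition. The direction from left to right needs the existence of at least one accepting run over $w$, which follows from $w\in L(\mathcal A_\phi)$.

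It then remains to decide whether such a run exists, and for this we build an ordinary timed B\"uchi automaton $\mathcal B$. It is a copy of $\mathcal A_\phi$ extended by a flag component in $\{\mathit{init},\mathit{checked}\}$. The initial state is $(\iota_\phi,\mathit{init})$. From a state $(q,\mathit{init})$ we keep exactly those transitions $(q,\sigma,g,R,q')\in\Trans_\phi$ with $\gamma_\phi(q')=0$, and these lead to $(q',\mathit{checked})$. From then on, $\mathcal B$ behaves exactly like $\mathcal A_\phi$. The accepting states are $F_\phi\times\{\mathit{checked}\}$. Guards, resets and clocks are untouched, so runs of $\mathcal B$ are in one-to-one correspondence with runs of $\mathcal A_\phi$ whose state at position~$1$ has output $0$, and acceptance is preserved. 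Since $L(\mathcal B)$ is by definition restricted to non-zeno timed words, and $L(\mathcal A_\phi)=L(\mathcal T)$ is as well, we obtain $\mathcal T\models\phi$ iff $L(\mathcal B)=\emptyset$. The latter is decidable by the Alur--Dill emptiness theorem for timed B\"uchi automata.

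The main subtlety, and the place where I expect the argument to need care, is the appeal to Theorem~\ref{thm:timed-transducers-tltl} for formulas with future operators. The transducer is nondeterministic and must guarantee the output correctness on \emph{every} accepting run; it is not enough that some run is correct. The reduction above relies on exactly this universal guarantee, together with the equality of the languages. Given that the theorem is stated in this form, the remaining steps are routine. I would only additionally double-check that the product construction does not interfere with non-zenoness, which it does not, since the timing constraints are left untouched.
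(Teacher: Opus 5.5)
Your proposal is correct and follows the paper's proof. You build the timed $(\mathcal T,\phi)$-transducer via Theorem~\ref{thm:timed-transducers-tltl}, reduce $\mathcal T\not\models\phi$ to the existence of an accepting run with $\gamma_\phi(q_1)=0$, and make explicit (via your $\mathit{init}/\mathit{checked}$ flag) the ``standard finite-state modification'' that the paper only mentions. One small precision, shared with the paper: the middle equivalence in your chain should range over accepting runs on non-zeno words, since a zeno accepting run is not covered by the transducer guarantee, but your final formulation via $L(\mathcal B)=\emptyset$ already handles this correctly.
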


\begin{proof}
By Theorem~\ref{thm:timed-transducers-tltl}, we can construct a timed
$(\mathcal T,\phi)$-transducer
\[
\mathcal A_\phi=(Q_\phi,\iota_\phi,\Clocks_\phi,\Trans_\phi,F_\phi)
\]
with output map $\gamma_\phi$.
By correctness of the transducer, for every accepting run
\[
(q_0,\nu_0)\xrightarrow[(g_1,R_1)]{(\sigma_1,d_1)}(q_1,\nu_1)
\xrightarrow[(g_2,R_2)]{(\sigma_2,d_2)}(q_2,\nu_2)\ldots
\]
over a word $w \in L(\mathcal T)$, we have
\[
\gamma_\phi(q_1)=1
\quad\Longleftrightarrow\quad
\mathcal T,w,1 \models \phi.
\]
Thus, $\mathcal T \not\models \phi$ iff $\mathcal A_\phi$ has an accepting run
whose first reached state $q_1$ satisfies $\gamma_\phi(q_1)=0$.
This is an emptiness problem for timed B\"uchi automata, with the additional
finite-prefix constraint on the first transition.
Such a constraint can be encoded by a standard finite-state modification of the
automaton.
Since emptiness for timed B\"uchi automata is decidable, timed model checking
for \TLTLFP is decidable.
\end{proof}

\begin{notebox}[Past vs. future]
In the timed setting, past operators are technically simpler than future
operators.
The past can be summarized by finitely many clock values at the current
position, whereas future operators may generate several simultaneous
obligations that have to be checked later.
Figure~\ref{fig:timed-finite-memory-window} illustrates the kind of finite
memory that is needed in such constructions: at a given time, several earlier
values may still be relevant for the current interval constraints.
Among the values that already lie in the interval window,
only the rightmost one is relevant for satisfying an existential past constraint;
older ones are shadowed by it.
Values to the right of the interval are still too recent, but they may become
relevant later as time advances.
A naive construction would therefore keep all such values, whose number is not
bounded a priori.
The nontrivial point is to replace this unbounded memory by a finite summary.
\end{notebox}

\begin{figure}[t]
\centering
\begin{tikzpicture}[x=0.95cm,y=0.9cm,every node/.style={font=\small}]
\draw[->,thick] (0,0) -- (9.1,0) node[right] {$t$};
\foreach \x in {0,...,8} {
  \draw (\x,0.08) -- (\x,-0.08);
}
\foreach \x in {0.7,1.4,2.7,4.2} {
  \fill[ChapterAccent] (\x,0) circle (2pt);
}
\draw[very thick] (8,0.42) -- (8,-0.42);
\node[above=4pt] at (8,0.42) {current time};
\draw[fill=ChapterLight,draw=ChapterAccent] (4.4,-0.22) rectangle (6.7,0.22);
\node[above=7pt] at (5.55,0.22) {$I$};
\foreach \x in {4.8,5.35,5.9} {
  \fill[ChapterAccent] (\x,0) circle (2.2pt);
}
\fill[red!70!black] (6.45,0) circle (2.6pt);
\foreach \x in {6.95,7.25,7.55} {
  \fill[red!70!black] (\x,0) circle (2.4pt);
}
\draw[->,red!70!black,thick] (6.45,-1.28) -- (6.45,-0.12);
\node[below,red!70!black,align=center,text width=2.3cm] at (6.72,-1.5)
{rightmost\\value in $I$};
\draw[red!70!black] (6.82,-0.38) -- (7.68,-0.38);
\draw[red!70!black] (6.82,-0.32) -- (6.82,-0.44);
\draw[red!70!black] (7.68,-0.32) -- (7.68,-0.44);
\node[below=8pt,red!70!black] at (7.55,-0.2) {too recent};
\node[below=8pt,ChapterAccent,align=center,text width=3.0cm] at (4.5,0)
{older values in $I$\\are shadowed};
\end{tikzpicture}
\caption{For a last-past constraint, only the rightmost value inside the
current interval window can witness the formula now.  Older values inside the
window are shadowed.  Values to the right of the interval are too recent now,
but they must still be represented because they may enter the interval later.}
\label{fig:timed-finite-memory-window}
\end{figure}
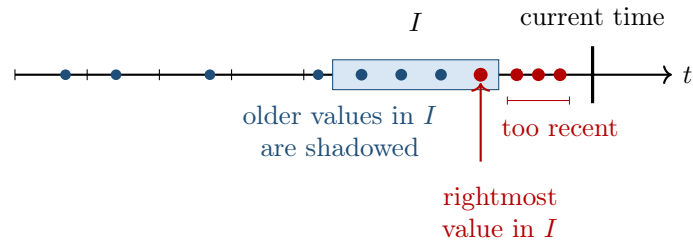

The situation changes drastically once knowledge is added.

\begin{theorem}[Timed epistemic model checking is undecidable]
\label{thm:timed-epistemic-undec}
Model checking becomes undecidable for the epistemic extension of timed
temporal logic, i.e.\ for \TLTL.
In fact, undecidability already holds for the fixed formula
\[
\tmp{G}\neg\Knows_a\tmp{P}s
\]
where $s$ is an unobservable proposition.
\end{theorem}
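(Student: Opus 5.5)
Mirror the PSPACE-hardness proof for untimed opacity (Theorem~\ref{thm:opacity-hard}), replacing NFA universality by the \emph{universality problem for timed automata}. By Alur and Dill~\cite{AlurD94}, this problem is undecidable: given a (nondeterministic) timed automaton $\mathcal N$ over a finite alphabet $\Gamma$ with final states, does $L(\mathcal N)$ contain every finite timed word over $\Gamma$? We reduce it to the complement of $\mathcal T\models\tmp{G}\neg\Knows_a\tmp{P}s$.

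Let $\AP=\Gamma\uplus\{\#,s\}$ and $\AP_a=\AP\setminus\{s\}$. Build a timed B\"uchi automaton $\mathcal T$ with $L(\mathcal T)=L_{\mathsf{priv}}\uplus L_{\mathsf{pub}}$, defined as follows.
\begin{itemize}
\item $L_{\mathsf{priv}}$ consists of all non-zeno timed words $(\{s\},0)(\{\gamma_1\},t_1)\ldots(\{\gamma_n\},t_n)(\{\#\},t_{n+1})(\{\#\},t_{n+2})\ldots$. It is accepted by an untimed loop with no clock constraints.
\item $L_{\mathsf{pub}}$ consists of the same words with first letter $(\emptyset,0)$, subject to the additional requirement that $(\gamma_1,t_1)\ldots(\gamma_n,t_n)\in L(\mathcal N)$. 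It is obtained by running a copy of $\mathcal N$ (clocks, guards and resets unchanged) on the $\Gamma$-block, and allowing the first $\#$ only from a final state of $\mathcal N$. After that, a $\#$-loop on an accepting state follows, using an extra clock that is reset and checked so as to force time progress, e.g.\ one $\#$ per time unit. The same device is used in the private branch.
\end{itemize}
The first event is forced to time $0$ by a guard $x\in[0,0]$ on a fresh clock, so that both branches start identically. The construction is clearly effective.

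Correctness follows the untimed argument verbatim, now with timed observations. For ($\Rightarrow$), assume universality. If $w\in L_{\mathsf{pub}}$, then $s$ never occurs. If $w\in L_{\mathsf{priv}}$, flipping the first letter to $\emptyset$ yields a word $w'\in L_{\mathsf{pub}}$. The $\Gamma$-block of $w$ is in $L(\mathcal N)$ by universality, and the $\#$-suffix can be kept identical. This $w'$ has exactly the same timed observation prefix at every $k$, so $\neg\Knows_a\tmp{P}s$ holds everywhere. For ($\Leftarrow$), take an arbitrary finite timed word $(\gamma_1,t_1)\ldots(\gamma_n,t_n)$ and embed it into a private word. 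Evaluate at the position $k=n+2$ of the first $\#$. Opacity gives an indistinguishable $w'$ with $s$ absent up to $k$, hence $w'\in L_{\mathsf{pub}}$. Since the timestamps are observable, equality of the timed observation prefixes forces $w'$ to have the same $\Gamma$-block with the same timestamps, followed by $\#$ at position $n+2$. Therefore the word lies in $L(\mathcal N)$.

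The main obstacle is the bookkeeping around timing so that the two branches are genuinely indistinguishable. Three points need care.
\begin{itemize}
\item The $\#$-suffix must be non-zeno but otherwise unconstrained relative to the $\Gamma$-block. Both branches therefore need the same freedom (or the same rigid pattern) for the $\#$ timestamps, and the first $\#$ must be allowed at any time $\ge t_n$.
\item The $s$-step at time $0$ must not restrict $\mathcal N$'s clocks. Shifting $\mathcal N$ so that its clocks start at the first $\Gamma$-letter would change its semantics. I would avoid this by starting $\mathcal N$'s clocks at time $0$, since the initial letter is forced to occur at time $0$.
\item Universality must be stated over the same timed domain, i.e.\ finite timed words with nondecreasing timestamps. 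Here I would invoke the Alur--Dill undecidability result in its finite-word form.
\end{itemize}
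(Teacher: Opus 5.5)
Your proposal is correct and follows essentially the paper's proof: a reduction from universality of finite-word timed automata, using private and public copies of each timed word that differ only in the unobservable initial $s$, with the first $\#$ and the observable timestamps pinning down the $\Gamma$-block in the ($\Leftarrow$) direction. The differences are cosmetic. The paper fixes the continuation as $(\{\#\},T_u)$ followed by $\emptyset$-events at $T_u+1, T_u+2,\ldots$, whereas you use a $\#$-loop with a free first timestamp, and you are more explicit than the paper about forcing the first event to time $0$ so that $\mathcal N$'s clocks start correctly.
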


\begin{proof}
We reduce from the universality problem for timed automata (over finite words),
which is undecidable~\cite{AlurD94}.
Given a finite-word timed automaton $\mathcal A$ over an alphabet $\Gamma$, we ask whether
\[
L(\mathcal A)=\TW^\ast(\Gamma).
\]
Here $\mathcal A$ may be nondeterministic; we do not define this finite-word
model explicitly in these notes, since it is the standard finite-word analogue
of the timed B\"uchi automata used above.

Fix
\[
\AP=\Gamma \cup \{\#,s\},\qquad \Ag=\{a\},\qquad \AP_a=\Gamma \cup \{\#\}.
\]
Thus, the marker $\#$ is observable while the proposition $s$ is not.
We build a timed B\"uchi automaton $\mathcal T_{\mathcal A}$ with the
following behavior.

For every finite timed word
\[
u=(\gamma_1,t_1)\ldots(\gamma_m,t_m)\in\TW^\ast(\Gamma)
\]
let $T_u=\last(u)$, with $\last(\varepsilon)=0$, and let
\[
v_u=(\emptyset,T_u+1)(\emptyset,T_u+2)(\emptyset,T_u+3)\ldots
\]
be a fixed non-zeno observable continuation shifted to start after time $T_u$
and containing no marker $\#$.
We introduce a \emph{private} timed word
\[
w_{\mathsf{priv}}(u)
=
(\{s\},0)\,(\{\gamma_1\},t_1)\ldots(\{\gamma_m\},t_m)\,(\{\#\},T_u)\,v_u
\]
and a \emph{public} timed word
\[
w_{\mathsf{pub}}(u)
=
(\emptyset,0)\,(\{\gamma_1\},t_1)\ldots(\{\gamma_m\},t_m)\,(\{\#\},T_u)\,v_u.
\]
The system $\mathcal T_{\mathcal A}$ is arranged so that
\begin{itemize}
\item every private word $w_{\mathsf{priv}}(u)$ belongs to $L(\mathcal T_{\mathcal A})$,
\item the public word $w_{\mathsf{pub}}(u)$ belongs to $L(\mathcal T_{\mathcal A})$
iff $u \in L(\mathcal A)$,
\item there are no other words in $L(\mathcal T_{\mathcal A})$.
\end{itemize}

Because $s$ is unobservable, the private and public words corresponding to the
same timed word $u$ are indistinguishable to agent $a$ at every position.

\smallskip\noindent
Assume first that
\[
\mathcal T_{\mathcal A} \models \tmp{G}\neg\Knows_a\tmp{P}s.
\]
Take any $u \in \TW^\ast(\Gamma)$ and consider $w_{\mathsf{priv}}(u)$ at the
position of the marker $\#$.
At that point, $s$ has occurred earlier, so
\[
\mathcal T_{\mathcal A},w_{\mathsf{priv}}(u),k \models \tmp{P}s.
\]
From $\neg\Knows_a\tmp{P}s$, there must exist an indistinguishable word
$w' \in L(\mathcal T_{\mathcal A})$ such that
$\mathcal T_{\mathcal A},w',k \models \neg\tmp{P}s$.
By construction, $w'$ cannot be private, and the observation up to the marker
fixes the finite timed word $u$.
Hence $w'$ can only be the corresponding public word $w_{\mathsf{pub}}(u)$.
Hence $u \in L(\mathcal A)$.
Since $u$ was arbitrary, $\mathcal A$ is universal.

\smallskip\noindent
Conversely, assume $\mathcal A$ is universal.
Let $w \in L(\mathcal T_{\mathcal A})$ and $k \ge 1$.
If $w$ is public, then $s$ has not occurred, hence
$\mathcal T_{\mathcal A},w,k \models \neg\tmp{P}s$.
Since $w$ is indistinguishable from itself, we get
$\mathcal T_{\mathcal A},w,k \models \neg\Knows_a\tmp{P}s$.
If $w=w_{\mathsf{priv}}(u)$ is private, universality yields
$u \in L(\mathcal A)$, so the matching public word
$w_{\mathsf{pub}}(u)$ also belongs to $L(\mathcal T_{\mathcal A})$.
Since $w_{\mathsf{pub}}(u)$ is indistinguishable from $w$ for agent $a$ and
does not satisfy $\tmp{P}s$, the observer cannot know that $s$ occurred.
Thus
\[
\mathcal T_{\mathcal A},w,k \models \neg\Knows_a\tmp{P}s.
\]
Hence
\[
\mathcal T_{\mathcal A} \models \tmp{G}\neg\Knows_a\tmp{P}s.
\]
This concludes the proof.
\end{proof}

\section{Timed Diagnosability and Timed Opacity}

We next revisit diagnosability and opacity in the timed setting.
The difference from the untimed case is that delay bounds are now measured in
time units rather than in discrete steps.

\begin{definition}[Timed diagnosability]
Let $\mathcal T$ be a timed B\"uchi automaton over $\Sigma$, let $a \in \Ag$,
let $e \in \AP \setminus \AP_a$, and let $D \in \mathbb N$.
We say that $\mathcal T$ is \emph{timed-$D$-P-diagnosable} (wrt.\ $a$ and $e$) if
\[
\mathcal T \models \tmp{G}\bigl(e \to \tmp{F}_{[0,D]}\Knows_a\tmp{P}e\bigr).
\]
Thus, whenever an error occurs, the observer must know within at most $D$ time
units that an error has occurred in the past.
\end{definition}

\begin{theorem}[Characterization of timed bounded P-diagnosability]
The following are equivalent:
\begin{enumerate}
\item $\mathcal T$ is not timed-$D$-P-diagnosable,
\item there exist timed words
\[
w=(\sigma_1,t_1)(\sigma_2,t_2)\ldots\quad
\text{and}\quad
w'=(\sigma'_1,t'_1)(\sigma'_2,t'_2)\ldots
\]
in $L(\mathcal T)$ and a position $k \ge 1$ such that
\begin{itemize}
\item $\mathcal T,w,k \models e$,
\item for every position $k' \ge k$ with $t_{k'}-t_k \le D$
\begin{itemize}
\item the prefix $w'_{\le k'}$ is fault-free and
\item $(w,k') \sim_a (w',k')$.
\end{itemize}
\end{itemize}
\end{enumerate}
\end{theorem}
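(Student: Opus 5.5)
The plan is to unfold the semantics of $\tmp{F}_{[0,D]}$ and $\Knows_a$, much as in the proof of Proposition~\ref{prop:D-P-char}. The one genuinely new point is that condition~(2) asks for a \emph{single} witness $w'$ that works for all positions in the time window.

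\emph{Step 1 (unfolding $\tmp{F}_{[0,D]}$).} Since $0\in[0,D]$, the formula $\tmp{F}_{[0,D]}\psi$ equals $\psi \lor \sFuture_{[0,D]}\psi$. By the first-occurrence semantics of $\sUntil_I$, this gives the following claim for every $w\in L(\mathcal T)$ and $k\ge 1$: $\mathcal T,w,k\models \tmp{F}_{[0,D]}\psi$ holds iff there is $k'\ge k$ with $t_{k'}-t_k\le D$ and $\mathcal T,w,k'\models\psi$.

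The direction from left to right is immediate. For the converse, if such a $k'>k$ exists and $\psi$ fails at $k$, let $j$ be the first position after $k$ satisfying $\psi$. Then $j\le k'$, and since timestamps are nondecreasing, $t_j-t_k\le t_{k'}-t_k\le D$. So the first-occurrence requirement is harmless here.

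Applying the claim with $\psi=\Knows_a\tmp{P}e$ gives a reformulation of~(1). $\mathcal T$ is not timed-$D$-P-diagnosable iff there are $w\in L(\mathcal T)$ and $k$ with $\mathcal T,w,k\models e$ such that, for every $k'\ge k$ with $t_{k'}-t_k\le D$, we have $\mathcal T,w,k'\not\models\Knows_a\tmp{P}e$. By the semantics of $\Knows_a$, the latter means that for each such $k'$ there is some $w'_{k'}\in L(\mathcal T)$ with $(w,k')\sim_a(w'_{k'},k')$ and $w'_{k'}$ fault-free up to position $k'$.

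\emph{Step 2 ((2)$\Rightarrow$(1)).} This direction is direct. The single $w'$ from~(2) serves as $w'_{k'}$ for every $k'$ in the window, because a fault-free prefix $w'_{\le k'}$ gives $\mathcal T,w',k'\not\models\tmp{P}e$.

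\emph{Step 3 ((1)$\Rightarrow$(2)).} This is the main obstacle, since Step~1 only provides a possibly different witness $w'_{k'}$ for each $k'$. The fix is to use the largest position in the window. Let $W=\{k'\ge k\mid t_{k'}-t_k\le D\}$. This set contains $k$, so it is nonempty. It is finite because $w$ is non-zeno: some $t_n$ exceeds $t_k+D$, and by monotonicity all later positions lie outside the window. So $W$ has a maximum $k^\ast$.

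Take $w'=w'_{k^\ast}$. Then $w'_{\le k^\ast}$ is fault-free and $\obs_a(w_{\le k^\ast})=\obs_a(w'_{\le k^\ast})$. Truncating to any $k'\in W$, which satisfies $k'\le k^\ast$, we get that $w'_{\le k'}$ is fault-free and $(w,k')\sim_a(w',k')$. This truncation argument is the same one used in the proof of Proposition~\ref{prop:D-P-char} and Exercise~\ref{ex:knowledge-sanity}(2d). Hence $w$, $w'$ and $k$ satisfy~(2).

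The only place where the timed setting differs substantively from the untimed argument is the existence of $k^\ast$, and that is exactly where non-zenoness of the words in $L(\mathcal T)$ is used.
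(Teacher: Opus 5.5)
Your proof is correct and follows the same route as the paper's sketch. You use non-zenoness to get a last position $k^\ast$ in the time window, take the fault-free indistinguishable companion for $k^\ast$, and transfer it to all earlier window positions by prefix truncation; the converse is immediate. Your Step~1 also checks that the first-occurrence semantics of $\sUntil_{[0,D]}$ is harmless because $[0,D]$ is downward closed, which makes explicit a detail the paper's sketch leaves implicit.
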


\begin{proof}[Idea]
This is the timed analogue of the untimed twin-plant characterization.
If knowledge of the error is missing within the allowed time window, then there
must be an indistinguishable fault-free companion run for the whole duration of
that window.
Because timed words are non-zeno, there are only finitely many positions
$k' \ge k$ with $t_{k'}-t_k \le D$; hence there is a last one.
An indistinguishable fault-free companion for that last position also works for
all earlier positions in the window, by prefix closure of fault-freeness and of
indistinguishability.
Conversely, such a companion run directly witnesses failure of the formula
$\tmp{G}(e \to \tmp{F}_{[0,D]}\Knows_a\tmp{P}e)$.
\end{proof}

\begin{theorem}[Timed Bounded P-diagnosability is decidable]
Given $\mathcal T$, $a$, $e$, and $D$, one can decide whether
$\mathcal T$ is timed-$D$-P-diagnosable.
\end{theorem}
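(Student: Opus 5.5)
The plan is to use the characterization theorem just stated and reduce non-diagnosability to emptiness of a timed B\"uchi automaton, following the twin-plant idea from the untimed case (Proposition~\ref{prop:D-P-char}). Since epistemic timed model checking is undecidable (Theorem~\ref{thm:timed-epistemic-undec}), we should not go through the transducer for $\Knows_a\tmp{P}e$. Instead, we search directly for a witness pair $(w,w')$ as in item~2 of the characterization.

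The product will synchronize two copies of $\mathcal T$ on a common timed word over the pair alphabet $\Sigma\times\Sigma$. Because observation includes timestamps, the condition $(w,k')\sim_a(w',k')$ forces $t_{k'}=t'_{k'}$ and $\obs_a(\sigma_{k'})=\obs_a(\sigma'_{k'})$ throughout the window. So over the relevant prefix the two runs share the same delays and can be simulated by a single timed B\"uchi automaton $\mathcal T\otimes\mathcal T$. Its clocks are the disjoint union of the two clock sets, and its transitions pair a transition of each copy on letters $(\sigma,\sigma')$ with equal $a$-observation. It has three control phases, plus one fresh clock $z$.
\begin{itemize}
\item In phase~0 (before $k$), both copies move in lockstep, $\sigma'$ is required to avoid $e$, and observations agree.
\item The transition at position $k$ requires $e\in\sigma$, resets $z$, and switches to phase~1. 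Phase~1 continues the lockstep and fault-freeness of $w'$ as long as $z\le D$.
\item Once an event occurs with $z>D$, the window is over, and the product switches to phase~2. The obligations on $w'$ are dropped, and the two runs no longer need to be synchronized.
\end{itemize}

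The main obstacle is phase~2. The two suffixes must be independently accepting non-zeno continuations with possibly different timestamps, so a single synchronous automaton does not model them directly. My first attempt is to exploit that the window is finite. At the switch, the configuration is a pair of configurations $((q,\nu),(q',\nu'))$. What remains is that $(q,\nu)$ and $(q',\nu')$ each admit an accepting non-zeno continuation from that configuration, and these two conditions are independent. Both are region-invariant, by the Alur--Dill region construction behind the emptiness theorem. So the plan is to precompute the set of region-states of each copy with an accepting non-zeno future, via B\"uchi emptiness on the region graph from each region. Phase~2 is then replaced by a test on the regions of the two copies at the switch. Non-diagnosability becomes reachability in the finite region graph of the phase~0/1 product (tracking $z$) of a switch configuration passing both tests.

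Two boundary issues need care. First, the test $z>D$ must occur exactly at the first event beyond the window. Second, the window may also end because $w$ has no further events before $t_k+D$; this is excluded by non-zenoness, so some event has $z>D$. Also, $w'$ could continue past the window with equal timestamps; this is harmless since only the window matters. Correctness in both directions then follows from the characterization theorem. Decidability comes from finiteness of the region abstraction, with $D$ absorbed into the maximal constant.
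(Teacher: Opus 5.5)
Your proposal is correct, and it takes a genuinely different route from the paper on one point: how the two runs continue after the detection window.

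\textbf{What the paper does.} The paper's sketch builds a single twin-plant timed B\"uchi automaton. It runs two copies of $\mathcal T$ in parallel, resets a clock $z$ at the first error, and keeps the second copy fault-free and indistinguishable while $z\le D$. It then applies emptiness of timed B\"uchi automata as a black box. It leaves implicit how the two runs proceed once the window is over.

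\textbf{What you do differently.} You observe that beyond the window the two runs are unconstrained and in general cannot share timestamps. A product that keeps both on one timed word forever would be incomplete. You therefore cut the product off after the last in-window event and replace the two infinite tails by region-invariant liveness tests on the region graph. For copy~1 the test includes the requirement that its next event occurs with $z>D$; for copy~2 it is unconstrained. Non-diagnosability then becomes reachability in a finite region graph over the clocks of both copies plus $z$. Guessing an arbitrary error position $k$, instead of the first one, is harmless by the characterization.

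\textbf{What each approach buys.} The paper's route is modular, since it uses only the emptiness theorem. To be made rigorous, however, it needs something like an interleaving product in which each copy may move on its own after the window, with generalized B\"uchi acceptance and non-zenoness for both copies. Your route avoids merging the two suffixes into one timed word. The price is opening up the region construction: you need region-invariance of ``has an accepting non-zeno continuation'' from arbitrary configurations, and progressive accepting cycles in the region graph.

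\textbf{Two points to state explicitly.}
\begin{itemize}
\item The switch must be taken after the last in-window event and before the next delay. Otherwise the next timestamp of $w'$ would be tied to that of $w$, losing completeness.
\item The transition at position $k$ must already enforce $e\notin\sigma'$ and equal observations, because $k$ itself lies in the window.
\end{itemize}
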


\begin{proof}[Proof sketch]
One uses a twin-plant construction.
Two runs $w$ and $w'$ are simulated in parallel under the constraint that they
remain indistinguishable to the observer.
The first run must contain an error.
A fresh clock $z$ is reset at the first such error and measures the time
elapsed since that occurrence.
The second run is forced to remain fault-free while $z \le D$.
See Figure~\ref{fig:timed-diagnosability-window} for the intuition.

Non-diagnosability is therefore equivalent to the existence of an accepting run
of this product automaton.
Since emptiness for timed B\"uchi automata is decidable, timed diagnosability
is decidable as well.
\end{proof}

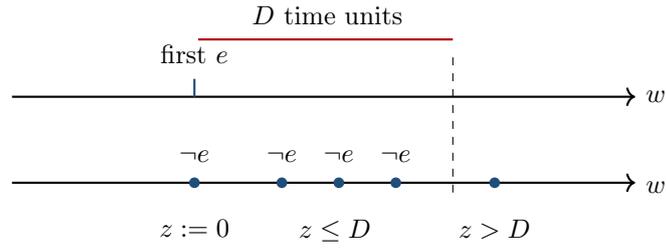
\begin{figure}[t]
\centering
\begin{tikzpicture}[x=1.0cm,y=0.95cm,every node/.style={font=\small}]
\draw[->,thick] (0,1.2) -- (8.2,1.2) node[right] {$w$};
\draw[->,thick] (0,0) -- (8.2,0) node[right] {$w'$};
\draw[ChapterAccent,thick] (2.4,1.2) -- (2.4,1.45);
\node[above=2pt] at (2.4,1.45) {first $e$};
\fill[ChapterAccent] (2.4,0) circle (2pt);
\fill[ChapterAccent] (3.55,0) circle (2pt);
\fill[ChapterAccent] (4.3,0) circle (2pt);
\fill[ChapterAccent] (5.05,0) circle (2pt);
\fill[ChapterAccent] (6.35,0) circle (2pt);
\node[above=4pt] at (2.4,0) {$\neg e$};
\node[above=4pt] at (3.55,0) {$\neg e$};
\node[above=4pt] at (4.3,0) {$\neg e$};
\node[above=4pt] at (5.05,0) {$\neg e$};
\node[below=10pt] at (2.4,0) {$z:=0$};
\node[below=10pt] at (4.25,0) {$z\le D$};
\node[below=10pt] at (6.35,0) {$z>D$};
\draw[red!70!black,thick] (2.45,2.0) -- (5.8,2.0);
\node[above=2pt] at (4.15,2.0) {$D$ time units};
\draw[dashed] (5.8,1.75) -- (5.8,-0.2);
\end{tikzpicture}
\caption{Twin-plant view of non-diagnosability: after the first error on the
upper run, the lower run stays fault-free yet observationally indistinguishable
for the whole time window of length $D$.}
\label{fig:timed-diagnosability-window}
\end{figure}

\paragraph{Timed DTA-diagnosers.}
We also consider the operational notion of a diagnoser given as a
DTA over the observation alphabet.
This is the automaton that reads the observer's timed information and is
supposed to switch to an alarm state once enough evidence for the fault has
been gathered.

\begin{definition}[Timed DTA-diagnoser]
Let $\mathcal T$ be a timed system, let $a \in \Ag$, let
$e \in \AP \setminus \AP_a$, and let $D \in \mathbb N$.
A DTA $\mathcal D$ over $\Sigma_a$ is a \emph{timed-$D$-diagnoser} if for
every timed word $w=(\sigma_1,t_1)(\sigma_2,t_2)\ldots \in L(\mathcal T)$ and every $k \ge 1$:
\begin{itemize}
\item if $\mathcal T,w,k \models \neg\tmp{P}e$, then
\[
\obs_a(w_{\le k}) \notin L(\mathcal D),
\]
\item if $\mathcal T,w,k \models e$, then there exists $k' \ge k$ such that
\[
t_{k'}-t_k \le D
\]
and for all $k'' \ge k'$,
\[
\obs_a(w_{\le k''}) \in L(\mathcal D).
\]
\end{itemize}
\end{definition}

This is the natural timed analogue of an untimed diagnoser:
no false positives are allowed, and after an error the diagnoser must switch to
acceptance within the allowed delay and then stay there.

\begin{notebox}[Diagnosability and event-driven diagnosers]
Both the logical notion of timed diagnosability, given by the formula
\[\tmp{G}\bigl(e \to \tmp{F}_{[0,D]}\Knows_a\tmp{P}e\bigr),\]
and the DTA-based diagnoser
model considered here are event-based.
However, unlike in the untimed setting, it is a separate question whether
timed-$D$-P-diagnosability yields a concrete timed-$D$-diagnoser in this
operational model.
Independently, one may also ask whether an event-driven diagnoser model is
suitable at all in the timed setting.
The example below is meant to illustrate this latter issue.
\end{notebox}

\begin{example}[A limitation of event-driven timed diagnosers]
Assume the observer sees all propositions except the error proposition $e$.
Suppose the system has only two possible timed behaviors:
\begin{itemize}
\item either an event labeled $\{e\}$ occurs at some time $t<1$,
\item or no error occurs and the first event observable to the diagnoser occurs only after time $100$.
\end{itemize}
This illustrates why timed diagnosis is more subtle than in the untimed setting.
On the faulty branch, the diagnoser receives an input event before time $1$ and can in
principle react to it immediately.
On the non-faulty branch, however, no input event is received before time $100$.
Hence, if one wants the diagnoser to conclude already at time $1$ that no fault has
occurred so far, this conclusion would have to be based solely on the absence of events
during the interval $[0,1]$.
But the event-driven DTA model used here changes state only when a new observable
timed event is read.
So this example should be understood as motivation: in the timed setting, a concrete
real-time diagnoser may need timeout transitions or some other mechanism that can react
to the passage of time itself, and not only to newly read events.
\end{example}

We now turn to opacity.
As in the untimed setting, this is a knowledge-based property, but here
the relevant indistinguishability relation is the timed one introduced above.

\begin{definition}[Timed opacity]
Let $\mathcal T$ be a timed B\"uchi automaton over $\Sigma$, let $a \in \Ag$,
and let $s \in \AP \setminus \AP_a$.
We say that $\mathcal T$ is \emph{timed opaque} if
\[
\mathcal T \models \tmp{G}\neg\Knows_a\tmp{P}s.
\]
\end{definition}

\begin{theorem}
Timed opacity is undecidable.
\end{theorem}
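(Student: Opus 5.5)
Timed opacity is exactly the model-checking question for the fixed formula $\tmp{G}\neg\Knows_a\tmp{P}s$ with $s$ unobservable to $a$, so the plan is to inherit undecidability from Theorem~\ref{thm:timed-epistemic-undec}. That theorem already shows model checking is undecidable for this fixed formula, and timed opacity (wrt.\ $a$ and $s$) is defined as $\mathcal T \models \tmp{G}\neg\Knows_a\tmp{P}s$. So the theorem follows once we check that the reduction there produces a legitimate instance of the timed-opacity problem. In particular, $s\in\AP\setminus\AP_a$ must hold for the instance $\mathcal T_{\mathcal A}$, and it does, since $\AP_a=\Gamma\cup\{\#\}$.

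Concretely, I would restate the reduction from universality of finite-word timed automata. Given $\mathcal A$ over $\Gamma$, build the timed Büchi automaton $\mathcal T_{\mathcal A}$ as the union of two branches. The \emph{private} branch first emits $(\{s\},0)$, then an arbitrary finite timed word over $\Gamma$ (singletons), then $\#$ at the same timestamp as the last event, then the shifted non-zeno tail $v_u$. The \emph{public} branch emits $(\emptyset,0)$ and then simulates $\mathcal A$ on the $\Gamma$-part, permitting the $\#$ only from an accepting location of $\mathcal A$, followed by the same tail. The tail $(\emptyset,T_u+1)(\emptyset,T_u+2)\ldots$ can be generated with one fresh clock reset at $\#$ and at each tail event, with guard $=1$. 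Equality of the $\#$ timestamp with the last $\Gamma$ event is enforced by a fresh clock reset at every $\Gamma$-event and tested $=0$ at $\#$; when $u=\varepsilon$, that clock is reset at position~1, at time $0$.

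Correctness is then the two directions already argued in the proof of Theorem~\ref{thm:timed-epistemic-undec}. Observations up to and including $\#$ determine $u$ exactly, because time is observable and $s$ is the only hidden proposition. Hence at the $\#$-position of $w_{\mathsf{priv}}(u)$, the only $s$-free indistinguishable candidate is $w_{\mathsf{pub}}(u)$. Conversely, universality supplies such a public twin for every private word and every position.

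The main obstacle is technical rather than conceptual: ensuring that $\mathcal T_{\mathcal A}$ is a genuine timed Büchi automaton with integer-bounded guards, whose language contains exactly the intended non-zeno words. Two details need care. First, the private branch must generate \emph{all} finite timed words over $\Gamma$, which is done with a universal location without guards. Second, the Büchi condition should be placed on the tail loop only, so that no word stays forever in the $\Gamma$-phase. Since universality of timed automata is undecidable~\cite{AlurD94} and the reduction is computable, timed opacity is undecidable.
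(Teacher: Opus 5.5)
Your proposal is correct and follows the paper's own argument: timed opacity is by definition the model-checking question for the fixed formula $\tmp{G}\neg\Knows_a\tmp{P}s$, so undecidability is inherited directly from the reduction in the proof of Theorem~\ref{thm:timed-epistemic-undec}. Your check that $s\in\AP\setminus\AP_a$ holds in the constructed instance, and your explicit clock gadgets, only spell out details the paper leaves implicit; the gadgets cover the $\#$ at the timestamp of the last $\Gamma$-event, the unit-spaced tail, and B\"uchi acceptance only on the tail loop.
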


\begin{proof}
This is immediate from the proof of Theorem~\ref{thm:timed-epistemic-undec},
since the fixed formula
$\tmp{G}\neg\Knows_a\tmp{P}s$ used in the reduction is already the opacity
formula.
\end{proof}

\section{Timed Monitorability}

For monitorability we return to the classical runtime-verification setting:
\[
\Ag=\{a\},\qquad \AP_a=\AP,\qquad L(\mathcal T)=\TW^\omega(\Sigma).
\]
Thus there is a single fully informed observer, and the system is the
most-permissive timed system.
So we only consider the classical one-observer case here.
We also restrict attention to timed temporal formulas without epistemic
operators.
Accordingly, we write $w,k \models \phi$ for satisfaction in this fixed
most-permissive timed model.
Later we briefly contrast this with monitorability questions for timed
languages given by automata.
Even in this restricted timed setting, the monitorability question is already
subtle enough.
This exhibits the main difficulties caused by real-time constraints.
For this reason, we do not discuss more general variants here, as we did in the
untimed setting.

For a \TLTLFP formula $\phi$, we write
\[
L(\phi)=\{\,w\in\TW^\omega(\Sigma)\mid w,1\models \phi\,\}
\]
in this fixed most-permissive timed model.

\begin{definition}[Timed monitorability]
A formula $\phi \in \mathrm{TLTL}$ is \emph{timed-monitorable} if for every finite timed
word $u \in \TW^+(\Sigma)$ there exists a finite timed word
$v \in \TW^+(\Sigma)$ such that $u$ is a prefix of $v$ and one of the following holds:
\begin{itemize}
\item for every $w \in \TW^\omega(\Sigma)$, if $v$ is a prefix of $w$, then $w \in L(\phi)$;
\item for every $w \in \TW^\omega(\Sigma)$, if $v$ is a prefix of $w$, then $w \in L(\neg\phi)$.
\end{itemize}
\end{definition}

In other words, every finite observation should admit a finite continuation that
forces a definitive positive or negative verdict.

\begin{theorem}[Past timed formulas are monitorable]
Let $\phi \in \mathrm{TLTL}_{\mathsf{past}}$, that is,
\[
\phi = \tmp{G}\psi
\]
for some \TLTLFP formula $\psi$ that does not use future operators.
Then $\phi$ is timed-monitorable.
\end{theorem}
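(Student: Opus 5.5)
The plan is to show that $\tmp{G}\psi$, with $\psi$ a pure-past formula, is a safety property in the timed sense, and then repeat the safety argument from the solution to Proposition~\ref{prop:safety-monitorability}.

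The first step is a \emph{locality lemma}. For a pure-past formula $\psi$ (no $\sUntil_I$), the truth of $w,k\models\psi$ depends only on the finite prefix $w_{\le k}$. Concretely, if $w_{\le k}=w'_{\le k}$, then $w,j\models\psi$ iff $w',j\models\psi$ for all $j\le k$. I would prove this by structural induction on $\psi$:
\begin{itemize}
\item atomic propositions and Boolean connectives are immediate;
\item for $\psi_1\sSince_I\psi_2$, the semantics refers only to positions $j<k$, to the truth of $\psi_1,\psi_2$ there, and to timestamps $t_j,t_k$, all of which lie inside the common prefix.
\end{itemize}
This lets me write $u\models\psi$ at position $|u|$ for finite timed words $u$, unambiguously. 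Alternatively, the same fact can be read off the deterministic transducer from Theorem~\ref{thm:timed-transducers-tltl}, whose state after reading $w_{\le k}$ determines the output. I would, however, use the direct induction, since the transducer also carries acceptance conditions that are irrelevant here.

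The second step splits on the prefix $u\in\TW^+(\Sigma)$.

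\textbf{Case 1: some position $j\le|u|$ has $u_{\le j}\not\models\psi$.} Take $v=u$. Every non-zeno $w$ extending $u$ satisfies $w,j\not\models\psi$ by locality. Since the paper defines $\tmp{G}$ via $\tmp{F}_{[0,\infty)}$, which includes the current position, this gives $w,1\not\models\tmp{G}\psi$, so $w\in L(\neg\phi)$.

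\textbf{Case 2: all positions of $u$ satisfy $\psi$.} First ask whether some finite extension $v$ of $u$ has a position violating $\psi$. If so, Case 1 applied to $v$ yields a negative verdict. If not, then every finite extension of $u$ satisfies $\psi$ at all of its positions. Any non-zeno $w$ extending $u$ then satisfies $\psi$ at every position $k$: positions $k\le|u|$ by the case assumption, and positions $k>|u|$ because $w_{\le k}$ is itself a finite extension of $u$, using locality. So $v=u$ is a good prefix, and $w\in L(\phi)$.

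One technicality is that the definition requires $v$ to be a nonempty timed word with $u$ as a prefix. That is satisfied here: $v=u$ is allowed, and every finite timed word extends to a non-zeno one, so the universal statements are not vacuous.

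I expect the main obstacle to be the locality lemma, specifically making sure the first-time/last-time semantics of $\sSince_I$ and the derived non-strict operators do not secretly depend on future positions or on the infinite tail. Checking the inductive clause carefully should settle it. The case split itself is pure logic, and notably it needs no regions or clocks at all.
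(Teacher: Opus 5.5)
Your proof is correct, but it takes a different route from the paper. The paper argues operationally. It takes the deterministic timed automaton for $\psi$ from the past-fragment transducer construction. It then uses the region construction to obtain, for every state $q$ and reset set $R$, a guard $g_q^R$ expressing that a position violating $\psi$ is still reachable. From these it builds an explicit deterministic three-valued timed monitor with modes $\bot$, $?$, $\top$, and reads monitorability off the modes, since from mode $?$ some continuation leads to $\bot$.

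Your argument isolates the semantic core of that last step. Locality of pure-past formulas makes $\tmp{G}\psi$ a safety property. The dichotomy ``either some finite extension of $u$ has a position violating $\psi$, or $u$ is already a good prefix'' then finishes the proof exactly as in the untimed safety case, with no clocks, regions, or determinism involved.

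The paper's route buys effectiveness. It yields an actual online monitor computing the verdict, and hence a way to tell which branch of your dichotomy holds for a given prefix. Your purely existential argument does not provide this, though the statement does not require it.

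Your route buys simplicity and generality. It applies verbatim to any timed safety property. It also avoids the paper's assertion that the reachability sets, being finite unions of regions, are definable by guards. With the paper's diagonal-free guard syntax that assertion is not automatic, because regions also record the relative order of the fractional parts of clocks. In general the paper's construction therefore needs diagonal constraints or additional bookkeeping at that point.
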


\begin{proof}
By the past-fragment construction from the proof of
Theorem~\ref{thm:timed-transducers-tltl}, one effectively obtains a
deterministic timed automaton
\[
\mathcal A_\psi=(Q,\iota,\Clocks,\delta,F)
\]
over $\Sigma$ such that for every infinite timed word
$w \in \TW^\omega(\Sigma)$ and every $k \ge 1$,
\[
w_{\le k} \in L(\mathcal A_\psi)
\quad\Longleftrightarrow\quad
w,k \models \psi.
\]

Thus, $F$ exactly represents the states where the local past property $\psi$
holds at the current position.
To monitor $\phi=\tmp{G}\psi$, we must distinguish three situations:
\begin{itemize}
\item $\bot$: the property has already been violated,
\item $?$: the property holds so far, but could still be violated later,
\item $\top$: the property holds so far and can never be violated later.
\end{itemize}

The key auxiliary ingredient is the following reachability guard.
For every state $q \in Q$ and every reset set $R \subseteq \Clocks$, one can
effectively compute a guard
\[
g_q^R \in \Guards(\Clocks)
\]
such that for every valuation $\nu \in \Val(\Clocks)$:
\[
\nu \models g_q^R
\quad\Longleftrightarrow\quad
(q,\nu[R:=0]) \text{ admits a finite timed continuation into some state }
q' \notin F.
\]
Intuitively, $g_q^R$ says that after applying the reset set $R$, it is still
possible to reach a future position where $\psi$ is false.
Such a guard is effectively computable by the region construction, since the
set of valuations with this reachability property is a finite union of regions
and is therefore definable by a guard.

\smallskip
We now construct a timed monitor
\[
\mathcal M_\phi = (Q \times \{\bot, ?, \top\},(\iota, ?),\Clocks,\delta_\mathcal M)
\]
whose second component is the output value.
It is best viewed as a timed Moore machine: there is no accepting set, only the
current verdict.

Fix $q \in Q$ and $\sigma \in \Sigma$, and write
\[
\delta(q,\sigma)=\{(g_i,R_i,q_i)\mid i=1,\dots,m\}.
\]
We define $\delta_\mathcal M$ as follows.

\paragraph{Mode $\bot$.}
Once the property is violated, it remains violated:
\[
\delta_\mathcal M((q,\bot),\sigma)
\ni
(g_i,R_i,(q_i,\bot))
\qquad\text{for all }i.
\]

\paragraph{Mode $\top$.}
Once the property is guaranteed forever, it remains guaranteed:
\[
\delta_\mathcal M((q,\top),\sigma)
\ni
(g_i,R_i,(q_i,\top))
\qquad\text{for all }i.
\]

\paragraph{Mode $?$.
}
Here the current run has not yet been decided.
\begin{itemize}
\item If $q_i \notin F$, then $\psi$ is false at the current position, hence
$\tmp{G}\psi$ is already violated:
\[
\delta_\mathcal M((q,?),\sigma)
\ni
(g_i,R_i,(q_i,\bot)).
\]

\item If $q_i \in F$, then $\psi$ holds at the current position.
There are now two subcases:
\begin{align*}
\delta_\mathcal M((q,?),\sigma)
&\ni
(g_i \wedge g_{q_i}^{R_i},R_i,(q_i,?)),
\\[1ex]
\delta_\mathcal M((q,?),\sigma)
&\ni
(g_i \wedge \neg g_{q_i}^{R_i},R_i,(q_i,\top)).
\end{align*}
The first transition means: $\psi$ currently holds, but a future violation is
still reachable, so the verdict stays undecided.
The second means: $\psi$ currently holds and no future violation is reachable,
so the verdict becomes permanently true.
\end{itemize}
Transitions from mode $?$ are illustrated in Figure~\ref{fig:timed-monitor-modes}.

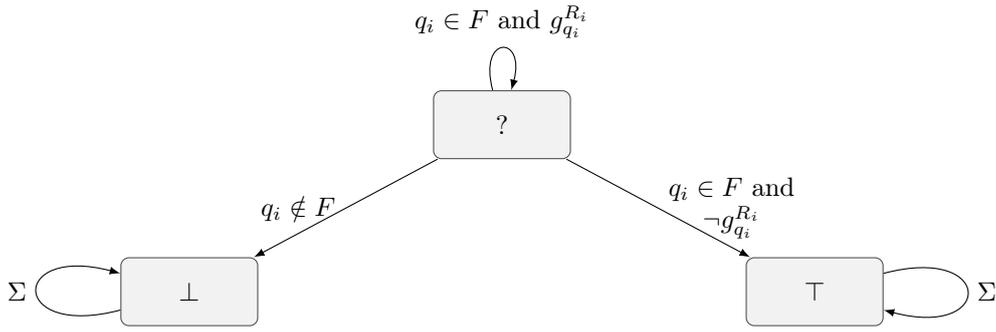
\begin{figure}[t]
\centering
\begin{tikzpicture}[
  ->, >=latex, node distance=3cm,
  every node/.style={font=\small},
  verdict/.style={draw=black!70, rounded corners=3pt, minimum width=1.8cm, minimum height=0.9cm, fill=black!5}
]
\node[verdict] (unk) {$?$};
\node[verdict] (bot) [below left=1.3cm and 2.3cm of unk] {$\bot$};
\node[verdict] (top) [below right=1.3cm and 2.3cm of unk] {$\top$};
\draw[->] (unk) -- node[left, align=center] {$q_i \notin F$} (bot);
\draw[->] (unk) -- node[right, align=center] {$q_i \in F$ and\\ $\neg g_{q_i}^{R_i}$} (top);
\draw[loop above] (unk) to node {$q_i \in F$ and $g_{q_i}^{R_i}$} (unk);
\draw[loop left] (bot) to node {$\Sigma$} (bot);
\draw[loop right] (top) to node {$\Sigma$} (top);
\end{tikzpicture}
\caption{Three-valued timed monitor for $\tmp{G}\psi$.
Mode $?$ means ``true so far, but not yet guaranteed forever''.
Mode $\top$ means ``permanently safe'', and mode $\bot$ means ``already violated''.}
\label{fig:timed-monitor-modes}
\end{figure}

The guards remain complete and disjoint, so $\mathcal M_\phi$ is again a
deterministic timed automaton.
Its correctness is immediate from the meaning of the three verdicts and of the
guards $g_q^R$:
\begin{itemize}
\item entering mode $\bot$ means that some position already violates $\psi$,
\item staying in mode $?$ means that all positions so far satisfy $\psi$, but a
future bad position is still possible,
\item entering mode $\top$ means that every continuation will satisfy $\psi$ at
all future positions.
\end{itemize}

Now let $u \in \TW^+(\Sigma)$ be any finite timed word, and let
$(q,\theta)$ be the state reached by $\mathcal M_\phi$ on $u$.
If $\theta=\bot$, then every infinite timed word extending $u$ lies in
$L(\neg\phi)$.
If $\theta=\top$, then every infinite timed word extending $u$ lies in
$L(\phi)$.
Finally, if $\theta={?}$, then the last transition entering mode ${?}$ must
have been of the form
\[
(g_i \wedge g_{q_i}^{R_i}, R_i, (q_i, ?))
\]
with $q_i \in F$.
By the definition of $g_{q_i}^{R_i}$, after applying the reset set $R_i$
there exists a finite timed continuation leading to some state outside $F$.
Following such a continuation from $u$ brings $\mathcal M_\phi$ to mode
$\bot$, and from then on every continuation satisfies $\neg\phi$.
Thus in all cases, $u$ admits a finite timed extension that forces either
$\phi$ or $\neg\phi$.
Hence the formula $\phi$ is timed-monitorable.
\end{proof}

\begin{notebox}[Status of full TLTL]
Recall that we covered only the past fragment.
The exact decidability status of monitorability for full \TLTLFP is left
open.
\end{notebox}

There is a separate automaton-theoretic question: given a timed automaton
interpreted as a property, is its language monitorable?
Here we view a timed automaton simply as a language acceptor over a finite
alphabet $\Gamma$, without introducing an additional proposition structure.
This problem is undecidable.

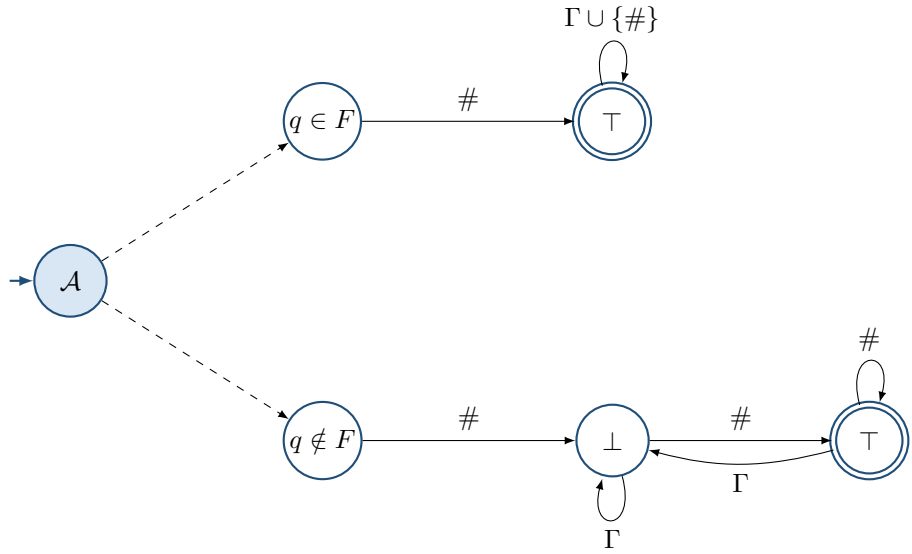
\begin{figure}[t]
\centering
\begin{tikzpicture}[
  ->, >=latex, node distance=2.6cm,
  every node/.style={font=\small},
  state/.style={circle, draw=ChapterAccent, thick, minimum size=0.95cm, inner sep=1pt},
  initial/.style={state, fill=ChapterLight},
  sink/.style={state, double, double distance=1.4pt}
]
\node[initial] (sim) {$\mathcal A$};
\node[state] (acc) [above right=1.4cm and 2.6cm of sim] {$q\in F$};
\node[state] (rej) [below right=1.4cm and 2.6cm of sim] {$q\notin F$};
\node[sink] (top) [right=2.8cm of acc] {$\top$};
\node[state] (bot) [right=2.8cm of rej] {$\bot$};
\node[sink] (botacc) [right=2.4cm of bot] {$\top$};

\draw[thick, ChapterAccent, <-] (sim) -- ++(-0.8,0);
\path
(sim) edge[dashed] (acc)
      edge[dashed] (rej)
(acc) edge node[above] {$\#$} (top)
(rej) edge node[above] {$\#$} (bot)
(top) edge[loop above] node {$\Gamma \cup \{\#\}$} (top)
(bot) edge[loop below] node {$\Gamma$} (bot)
      edge node[above] {$\#$} (botacc)
(botacc) edge[loop above] node {$\#$} (botacc)
         edge[bend left=15] node[below] {$\Gamma$} (bot);
\end{tikzpicture}
\caption{Shape of the reduction for monitorability of timed-automaton
languages. After the marker $\#$, the construction exposes whether the
simulated run of $\mathcal A$ was in an accepting state. In the lower branch,
the additional rightmost state ensures that no finite continuation yields a
definitive verdict.}
\label{fig:timed-monitorability-reduction}
\end{figure}
\begin{theorem}[Monitorability of timed automaton properties]
Given a timed B\"uchi automaton $\mathcal A$, it is undecidable whether its language is
monitorable.
\end{theorem}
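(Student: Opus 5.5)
I would reduce from the universality problem for (nondeterministic) timed automata over finite words, which is undecidable~\cite{AlurD94}, following the shape of Figure~\ref{fig:timed-monitorability-reduction}. Given a finite-word timed automaton $\mathcal A$ over $\Gamma$ with final states $F$, I build a timed B\"uchi automaton $\mathcal B$ over $\Gamma\cup\{\#\}$ that first simulates $\mathcal A$ on a finite timed word $u$, then nondeterministically reads a marker $\#$. After $\#$ it branches on whether the simulated state lies in $F$. If it does, the run moves to a universal accepting sink $\top$ with a self-loop on $\Gamma\cup\{\#\}$. If it does not, the run enters the two-state gadget on the right of the figure. There, $\bot$ loops on $\Gamma$, the accepting copy of $\top$ loops on $\#$, and the two are linked by $\#$ and $\Gamma$. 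I would also add a universal accepting branch for words containing no $\#$ at all, or alternatively constrain the first $\#$ to be taken, so that all words over $\Gamma$ are covered. The construction is clearly effective. The gadget accepts exactly the continuations containing infinitely many $\#$, a $\tmp{G}\tmp{F}\#$-like condition.

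The key claim is
\[
L(\mathcal A)=\TW^\ast(\Gamma)\iff L(\mathcal B)\text{ is monitorable}.
\]
To make this go through, I would first arrange that from every finite prefix $x$, the set of valid continuations is well understood. I would define $\mathcal B$ so that the behaviour after the first $\#$ depends only on whether $u\in L(\mathcal A)$, where $u$ is the prefix before that $\#$. Concretely, the word is accepted iff either $u\in L(\mathcal A)$, or the suffix after the first $\#$ has infinitely many $\#$. I treat the nondeterminism by letting the sink $\top$ be reachable iff some run of $\mathcal A$ accepts $u$. The gadget branch then only matters when no accepting run exists, since the union of both branches yields exactly this language.

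For ($\Rightarrow$), if $\mathcal A$ is universal, every word containing a $\#$ is accepted, and the remaining words are handled by the chosen treatment of $\#$-free words. Any prefix $x$ can then be extended by one $\#$-event to a good prefix, so the language is monitorable. For ($\Leftarrow$), suppose some $u\notin L(\mathcal A)$. Consider the prefix $u\cdot(\#,\mathit{last}(u))$. Every extension $v$ of it is neither good nor bad: appending non-zeno $\#^\omega$ yields an accepted word, and appending non-zeno $\Gamma^\omega$ yields a rejected word. So this prefix is ugly, and monitorability fails.

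The main obstacle is getting the prefix-before-first-$\#$ bookkeeping right while keeping $\mathcal B$ a legitimate timed B\"uchi automaton with only non-zeno words. In particular, the treatment of $\#$-free words must not accidentally make some prefix ugly in the universal case. I would try the variant where words without $\#$ are accepted. A prefix without $\#$ is then extended by $\#$ to reach the good sink under universality, and the gadget only matters in the non-universal case. I would also verify carefully that nondeterminism of $\mathcal A$ does not let an accepting run into the $\top$ sink when $u\notin L(\mathcal A)$. This holds since $\top$ is entered only from states in $F$.
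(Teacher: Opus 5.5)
Your reduction is the paper's own: simulate $\mathcal A$ and read a marker $\#$. From states in $F$ you go to a universal accepting sink; otherwise you go into the $\Gamma$/$\#$ gadget that accepts exactly the continuations with infinitely many $\#$. Then $u\cdot(\#,\mathit{last}(u))$ with $u\notin L(\mathcal A)$ serves as an ugly prefix.

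One step fails as you have set it up. You claim the union of the two branches is exactly the language ``the prefix $u$ before the first $\#$ lies in $L(\mathcal A)$, or infinitely many $\#$ follow it.'' But you enter the gadget only from a \emph{simulated} state outside $F$, so the gadget is available after $u$ only if $\mathcal A$ has at least one run on $u$. You checked that nondeterminism cannot wrongly lead into the sink. You did not check the dual requirement, that the gadget is always reachable when $u\notin L(\mathcal A)$. This is exactly why the paper starts from a \emph{complete} timed automaton.

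Without completeness the reduction is wrong. Take $\Gamma=\{a\}$ and a single accepting state with an $a$-self-loop guarded by $x\in[0,1)$, where $x$ is never reset. Then $L(\mathcal A)$ is not universal, but every $u\notin L(\mathcal A)$ has no run at all, so $u\#$ is a \emph{bad} prefix rather than an ugly one. In fact every prefix containing $\#$ is already good or bad, and every $\#$-free prefix becomes good or bad after appending one $\#$. So $L(\mathcal B)$ is monitorable although $\mathcal A$ is not universal.

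The fix is easy. One option is to first complete $\mathcal A$ with a non-accepting sink. From each state on each letter, add a transition to the sink guarded by the negation of the disjunction of the existing guards. This is again a guard, since guards are closed under Boolean combinations, and $L(\mathcal A)$ is unchanged. The other option is to add a separate nondeterministic branch that reads an arbitrary finite timed word over $\Gamma$, then $\#$, and then enters the gadget. With either fix your intended language is realized, and both of your directions go through as written. Your concern about $\#$-free words is unnecessary. Whatever you do with them, in the universal case every prefix becomes good after appending one $\#$, and in the non-universal case the ugly prefix already contains $\#$.
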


\begin{proof}[Proof sketch]
The reduction is again from universality of timed automata over finite timed words.
Starting from a \emph{complete} nondeterministic timed automaton over an alphabet $\Gamma$, add a fresh
symbol $\#$.
The construction, which is illustrated in
Figure~\ref{fig:timed-monitorability-reduction}, first simulates $\mathcal A$ on $\Gamma$.
When a marker $\#$ is read in an accepting simulated state, it moves to a
permanently accepting sink.
When a marker is read in a nonaccepting simulated state, it moves to the lower
component: there a
$\Gamma$-transition stays rejecting, while a $\#$-transition moves to an
accepting state, from which every further $\#$ keeps the word accepted and a
letter from $\Gamma$ leads back to the rejecting state.

If $\mathcal A$ is universal, the resulting language is trivially monitorable.
If $\mathcal A$ is not universal, choose a timed word $u \notin L(\mathcal A)$.
After reading the prefix $u\#$, the automaton is in the lower rejecting state.
From there, no finite continuation can force a conclusive verdict:
continuations over $\Gamma$ stay rejecting, whereas a continuation with a
marker $\#$ moves to the lower accepting state.
Moreover, even after reaching that accepting state, a later letter from
$\Gamma$ returns to the rejecting state.
Hence after every finite extension of $u\#$, both acceptance and rejection are
still possible.
Hence monitorability fails.
\end{proof}

\section{Bibliographic Notes}

Timed automata were introduced by Alur and Dill and are the standard model
for finite-state real-time systems \cite{AlurD94}.
For a detailed and accessible presentation of timed automata, including the
undecidability proof for the universality problem used above, we refer the
reader to Bouyer's lecture notes~\cite{Bouyer2026}.
The event-clock perspective underlying the first/last-witness intuition goes
back to Raskin and Schobbens~\cite{RaskinS99}. A related strict first-time
semantics was studied recently by Alsmann and Lange~\cite{Alsmann025}.
Related automata-theoretic constructions for future timed modalities were given
by Akshay, Gastin, Govind, and Srivathsan~\cite{AGGS2004}.
Timed diagnosis for timed automata has been studied extensively; for a survey
of the main results and variants, see Cassez and Tripakis~\cite{CassezT13}.
Timed opacity was studied by Cassez~\cite{Cassez09}, who already showed
undecidability for a very restricted class of timed automata.
Monitor constructions for timed systems were treated thoroughly by Bauer,
Leucker, and Schallhart~\cite{BauerLS11}.
For monitoring algorithms for past-only real-time logics, and for a comparison
between point-based and interval-based semantics, see Basin, Klaedtke, and
Zalinescu~\cite{BasinKZ18}.
Monitorability for timed-automaton languages was studied by Grosen, Kauffman,
Larsen, and Zimmermann~\cite{GrosenKL025}; in particular, the undecidability
proof sketched above follows that construction.
Overall, the timed setting is more delicate than the untimed one:
many central timed verification problems without knowledge remain decidable,
while the combination with epistemic reasoning quickly reaches the boundary of
undecidability.

\backmatter
\clearpage
\bibliographystyle{alpha}
\bibliography{lit}

\end{document}